\def\BibTeX{{\rm B\kern-.05em{\sc i\kern-.025em b}\kern-.08em
    T\kern-.1667em\lower.7ex\hbox{E}\kern-.125emX}}
\newif\ifdiagbox@cellEmpty@
  \def\diagbox@text{#1}}
    \def\diagbox@align{#1}%
\xpatchcmd{\diagbox@double}{%
  \setkeys{diagbox}{dir=NW,#1}%
}{%
  \if\relax\detokenize{#2}\relax
    \if\relax\detokenize{#3}\relax
      \diagbox@cellEmpty@true
      \setkeys{diagbox}{highest=1\line, align=l, text=\@empty}%
    \fi
  \fi
  \setkeys{diagbox}{dir=NW, #1}%
  \ifdiagbox@cellEmpty@
    \rlap{\makebox
      [\dimexpr\diagbox@wd-\diagbox@insepl-\diagbox@insepr\relax]%
      [\diagbox@align]%
      {\diagbox@text}}%
  \fi
}{}{\ddt}
  \newcommand\figcaption{\def\@captype{figure}\caption}
  \newcommand\tabcaption{\def\@captype{table}\caption}
\newtheorem{assumption}{Assumption}
\newtheorem{theorem}{Theorem}
\newtheorem{lem}{Lemma}
\newtheorem{definition}{Definition}
\newtheorem{prop}{Proposition}
\newtheorem{rmk}{Remark}
\newcommand{\calA}{\mathcal{A}}
\newcommand{\calD}{\mathcal{D}}
\newcommand{\calU}{\mathcal{U}}
\newcommand{\calX}{\mathcal{X}}
\newcommand{\calY}{\mathcal{Y}}
\newcommand{\calN}{\mathcal{N}}
\newcommand{\calG}{\mathcal{G}}
\newcommand{\RR}{\mathbb{R}}
\newcommand{\gray}[1]{{\color{gray}#1}}
\newcommand{\red}[1]{{\color{red}#1}}
\title{FedAdOb: Privacy-Preserving  \underline{Fed}erated Deep Learning with \underline{Ad}aptive \underline{Ob}fuscation  }
\author{Hanlin Gu, Jiahuan Luo, Yan Kang, Yuan Yao, ~\IEEEmembership{Member,~IEEE,} Gongxi Zhu, Bowen Li,\\Lixin Fan,~\IEEEmembership{Member,~IEEE,} 
Qiang Yang,~\IEEEmembership{Fellow. ~IEEE}
\IEEEcompsocitemizethanks{
\IEEEcompsocthanksitem Hanlin Gu, Jiahuan Luo, Yan Kang and Lixin Fan are with WeBank AI Lab, WeBank, China. E-mail: \{allengu, jihuanluo, yangkang, lixinfan\}@webank.com. 
\IEEEcompsocthanksitem Yuan Yao is with the Department of math, , Hong
Kong University of Science and Technology, Hong Kong. E-mail: yuany@ust.hk.
\IEEEcompsocthanksitem Gongxi Zhu is with the School of Information and Software Engineering, University of Electronic Science and Technology of China, Chengdu, China. E-mail: gx.zhu@foxmail.com.
\IEEEcompsocthanksitem Bowen Li is with the Department of Computer Science and Engineering, Shanghai Jiao Tong University, Shanghai 200240, China. E-mail: li-bowen@sjtu.edu.cn.
\IEEEcompsocthanksitem  Qiang Yang is with the Department of Computer Science and Engineering, Hong
Kong University of Science and Technology, Hong Kong and  WeBank AI Lab, WeBank, China. E-mail: qyang@cse.ust.hk.}
\thanks {Corresponding author: Lixin Fan.}}
\begin{document}

\IEEEtitleabstractindextext{
\begin{abstract}
Federated learning (FL) has emerged as a collaborative approach that allows multiple clients to jointly learn a machine learning model without sharing their private data. 
The concern about privacy leakage, albeit demonstrated under specific conditions \cite{zhu2019dlg}, has triggered numerous follow-up research in designing powerful attacking methods and effective defending mechanisms aiming to thwart these attacking methods. Nevertheless, privacy-preserving mechanisms employed in these defending methods invariably lead to compromised model performances due to a \textit{fixed} obfuscation applied to private data or gradients. In this article, we, therefore, propose a novel \textit{adaptive} obfuscation mechanism, coined FedAdOb, to protect private data without yielding original model performances. Technically, FedAdOb utilizes passport-based adaptive obfuscation to ensure data privacy in both horizontal and vertical federated learning settings. The privacy-preserving capabilities of FedAdOb, specifically with regard to private features and labels, are theoretically proven through Theorems 1 and 2. Furthermore, extensive experimental evaluations conducted on various datasets and network architectures demonstrate the effectiveness of FedAdOb by manifesting its superior trade-off between privacy preservation and model performance, surpassing existing methods.


\end{abstract}
\begin{IEEEkeywords}
Federated learning; privacy-preserving computing; Adaptive Obfuscation; Passport
\end{IEEEkeywords}}

\maketitle

\IEEEdisplaynontitleabstractindextext

\IEEEpeerreviewmaketitle

\section{Introduction}

Federated Learning (FL) 
offers a privacy-preserving framework that allows multiple organizations to jointly build global models without disclosing private datasets \cite{konevcny2015federated,konevcny2016federated,mcmahan2017communication,yang2019federated}.
Two distinct paradigms have been proposed in the context of FL \cite{yang2019federated}: Horizontal Federated Learning (HFL) and Vertical Federated Learning (VFL).
HFL focuses on scenarios where multiple entities have similar features but different samples. It is suitable for cases where data sources are distributed, such as healthcare institutions contributing patient data for disease prediction. On the other hand, VFL addresses situations where entities hold different attributes or features of the same samples. This approach is useful in scenarios like combining demographic information from banks with call records from telecom companies to predict customer behavior. 

Since the introduction of HFL and VFL, studies have highlighted the existence of privacy risks in specific scenarios. For instance, Zhu et al. reported that semi-honest attackers can potentially infer \textit{private features} from the released model and updated gradients \cite{zhu2019dlg} in HFL. The leakage risk of \textit{private labels} has also been identified by Jin et al. \cite{jin2021cafe} and Fu et al. \cite{fu2022label} in VFL. 
To mitigate these concerns, a variety of defense methods have been proposed to enhance the privacy-preserving capabilities of Federated Learning. Specifically, for HFL, defense approaches such as differential privacy \cite{abadi2016deep}, gradient compression \cite{lin2018deep}, SplitFed \cite{thapa2020splitfed}, and mixup \cite{huang2020instahide,zhang2018mixup} have been put forward. For VFL, defense methods encompass differential privacy techniques like noise addition and random response \cite{fu2022label,liu2021defending,ghazi2021deep,yang2022differentially}, as well as gradient discretization \cite{dryden2016communication} and gradient sparsification \cite{aji2017sparse}. Nevertheless, our comprehensive analysis and empirical investigation (refer to Sect. \ref{sec:exp}) reveal that all the aforementioned defense mechanisms experience a certain level of performance degradation.



The existing privacy defense methods \cite{fu2022label,liu2020secure,dryden2016communication,aji2017sparse,lin2018deep,huang2020instahide} can be fundamentally understood as obfuscation mechanisms that offer privacy assurances through the utilization of an obfuscation function $g(\cdot)$ applied to private feature (such as transferred model weights in HFL and forward embedding in VFL). Take the HFL as one example (see details for both HFL and VFL in Sect. \ref{sec:FedAdOb}), the process is depicted as follows:
\begin{equation}
\begin{split}
    \label{eq:fix-obfuscation}
        x \stackrel{g(\cdot)}
        \longrightarrow g(x) \stackrel{F_{\omega}}\longrightarrow \ell \longleftarrow 
y, \\
\end{split}
\end{equation}
in which $F$ represents the model parameterized by $\omega$ and $\ell$ is the loss. Noted that in Eq. \eqref{eq:fix-obfuscation}, the level of privacy-preserving capability is determined by the \textit{extent of obfuscation} (the distortion $g()$), which is controlled by a fixed hyperparameter. As examined in previous studies \cite{zhang2022trading,kang2022framework}, substantial obfuscation inevitably results in the loss of information in $g(x)$ and $g(\omega)$, thereby impacting the model's performance (as observed in the fundamental analysis in Sect. \ref{sec:update-AO} and empirical investigation in Sect. \ref{sec:exp}). We contend that a fixed obfuscation strategy fails to consider the dynamic nature of the learning process and serves as the fundamental cause of model performance deterioration.  
As a remedy to shortcomings of the fixed obfuscation,  we propose to \textbf{adapt obfuscation function} $g_{\omega}$ during the learning of model $F_\omega$, such that the obfuscation itself is also optimized during the learning stage. 
That is to say, the learning of the obfuscation function also aims to preserve model performance by tweaking model parameters $\omega$:
\vspace{-0.4em}
\begin{equation}
\begin{split}
    \label{eq:ada-obfuscation}
        x \stackrel{g_\red{\omega}(\cdot)}
        \longrightarrow g(x) \stackrel{F_\omega}\longrightarrow \ell \longleftarrow 
y, \\
\end{split}
\end{equation}
We regard this adaptive obfuscation in both HFL and VFL as the gist of the proposed method, called \textbf{FedAdOb}. In this work, we implement the adaptive obfuscation based on the passport technique, which was originally designed for protecting the intellectual property of deep neural networks (DNN) \cite{fan2021deepip,li2022fedipr}. More specifically, we decompose the entire network into a bottom model, which is responsible to extract features, and a top model which outputs labels.  Both the bottom model and top model embed private passports to protect, respectively,  privacy of features and labels. The proposed framework FedAdOb
has three advantages: i) Private passports embedded in bottom and top layers unknown to adversaries, which prevents attackers from inferring features and labels. It is \textit{exponentially hard} 
to infer features by launching various attacks, while attackers are defeated by a \textit{non-zero recovery error} when attempting to infer private labels (see Sect. \ref{sec:analysis}).
ii) Passport-based obfuscation is learned in tandem with the optimization of model parameters, thereby preserving model performance (see Sect. \ref{sec:fedpass} and Sect. \ref{sec:update-AO}). iii) The learnable obfuscation is efficient, with only minor computational costs incurred since no computationally extensive encryption operations are needed (see Sect. \ref{sec:exp}). Our contributions are summarized as follows:


\begin{itemize}
    \item We propose a general privacy-preserving FL framework by leveraging the adaptive obfuscation, named FedAdOb, which could simultaneously preserve the model performance, data privacy, and training efficiency. The proposed framework can be applied in the HFL and VFL.
    \item Theoretical analysis demonstrates that the proposed FedAdOb could protect features and labels against various privacy attacks.
    \item Extensive empirical evaluations conducted on diverse image and tabular datasets, along with different architecture settings, demonstrate that FedAdOb achieves a trade-off between privacy protection and model performance that is comparable to near-optimal levels, surpassing existing fixed obfuscation defense methods.
\end{itemize}

\section{Related Work} 

Federated learning has three categories \cite{yang2019federated} from the perspective of how data is distributed among participating parties:
\begin{enumerate}
\item \textit{horizontal federated learning}: datasets share the same feature space but different space in samples; 
\item \textit{vertical federated learning}: two datasets share the same sample ID space but differ in feature space; 
\item \textit{federated transfer learning}: two datasets differ not only in samples but also in feature space.
\end{enumerate} 
We focus on HFL and VFL settings. In the following, we review privacy attacks and defense mechanisms proposed in the literature for HFL and VFL.

\subsection{Horizontal Federated Learning}
The horizontal federated learning (HFL) was proposed by McMahan et al.~\cite{mcmahan2017communication}, aiming to build a machine learning model based on datasets that are distributed across multiple devices \cite{konevcny2016federated, mcmahan2017communication} without sharing private data with the server and other devices.

\noindent\textbf{Privacy Attacks.} 
There are mainly two types of privacy attacks in HFL: gradient inversion (GI)~\cite{zhu2019dlg,geiping2020inverting} and model inversion (MI)~\cite{he2019model}. GI tries to infer the private data by minimizing the distance between the estimated and observed gradients. MI recovers the private data by comparing the estimated feature output and the observed one.

\noindent\textbf{Defense Mechanisms.} Differential privacy (DP) \cite{abadi2016deep}, gradient compression (GC) \cite{lin2018deep,gu2021federated}, homomorphic encryption (HE) \cite{hardy2017private}, secure multi-party computation (MPC) \cite{SecShare-Adi79}, and mixup \cite{zhang2018mixup,huang2020instahide} are widely used privacy defense mechanisms. The cryptographic techniques HE and MPC can guarantee data privacy but have a high computational and communication expense. DP and GC distort shared model updates to protect privacy, often dramatically worsening model performance. Mixup protects data privacy by mixing-up private images with other images randomly sampled from private and public datasets, which also degrades the model's performance.

\subsection{Vertical Federated Learning} 

Vertical Federated Learning (VFL) is proposed to address enterprises’ demands of leveraging features dispersed among multiple parties to achieve better model performance, compared with the model trained by a single party, without jeopardizing data privacy~\cite{yang2019federated}. In VFL, privacy is a paramount concern because participants of VFL typically are companies whose data may contain valuable and sensitive user information. We review privacy attacks and defense mechanisms of VFL as follows.


\noindent\textbf{Privacy Attacks in VFL.} Privacy attacks can be categorized into Feature Inference (FI) attacks and Label Inference (LI) attacks in VFL. FI attacks are initiated by the active party and aim to discover the features of the passive party. Model inversion~\cite{he2019model} and CAFE~\cite{jin2021cafe} are the two most well-known FI attacks in VFL. LI attacks, on the other hand, are initiated by the passive party and aim to infer labels possessed by the active party. LI attacks have two types: gradient-based and model-based. The former~\cite{oscar2022split} calculates the norm or direction of gradients back-propagated to the passive party to determine labels, while the latter~\cite{fu2022label} first pre-trains an attacking model and then leverages this attacking model to infer labels.

\noindent\textbf{Defense Mechanisms.}
Cryptography-based defense mechanisms such as Homomorphic Encryption (HE) and Multi-Party Computation (MPC) are widely adopted in VFL to protect data privacy for their high capability of preserving privacy. For example, Hardy et al.~\cite{hardy2017private} proposed vertical logistic regression (VLR) using homomorphic encryption (HE) to protect feature privacy, and Zhou et al.~\cite{secureboost} proposed the SecureBoost, a VFL version of XGBoost, that leverages HE to protect gradients exchanged among parties. However, Cryptography-based defense mechanisms typically have high computation and communication costs. Thus, they are often applied to shallow models (e.g., LR and decision trees) compared with deep neural networks.

Non-cryptography-based defense mechanisms protect data privacy typically by distorting the model information to be disclosed to adversaries. For example, Differential Privacy~\cite{abadi2016deep} adds noise to disclosed model information while Sparsification~\cite{fu2022label,lin2018deep} compresses disclosed one to mitigate privacy leakage. Specialized defense techniques such as MARVELL~\cite{oscar2022split} and Max-Norm~\cite{oscar2022split} are designed to thwart gradient-based label inference attacks by applying optimized or heuristic noise to gradients. Additionally, InstaHide~\cite{huang2020instahide} and Confusional AutoEncoder~\cite{zou2022defending} are defensive mechanisms that encode private data directly to enhance data privacy. 

\begin{table*}[htbp] 
\center 
\caption{Threat model we consider in this work.}
\begin{tabular}{c c c c c } 
\hline
  Setting & Adversary & Attacking Target & Attacking Method & Adversary's Knowledge  \\ 
   \hline \hline
    \multirow{5}{*}{VFL}& \multirow{3}{*}{\shortstack{Semi-honest \\ passive party}}  & \multirow{3}{*}{\shortstack{\textit{Labels} owned by \\ the active party}} & PMC \cite{fu2022label} & A few labeled samples \\
   &&  & NS \cite{oscar2022split} & No prior \\
    &&  & DS \cite{oscar2022split} & No prior \\
   \cline{2-5}
   & \multirow{2}{*}{\shortstack{Semi-honest \\ active party}} & \multirow{2}{*}{\shortstack{\textit{Features} owned by \\ a passive party}} & BMI \cite{he2019model} & Some labeled samples \\ 
   & & & WMI \cite{he2019model,jin2021cafe} & Passive models \\ \hline\
   
     \multirow{4}{*}{HFL}&\multirow{4}{*}{\shortstack{Semi-honest \\ server}}  & \multirow{4}{*}{\textit{Private data} of clients} & WMI \cite{he2019model} & Model parameters and output \\
     
   &&&BMI \cite{he2019model}& Model output and a few auxiliary samples\\
   
 &&&WGI \cite{zhu2019dlg}& Model parameters and gradients \\
 &&&BGI \cite{zhu2019dlg}& Model gradients \\ \hline
\end{tabular}
\vspace{-0.6em}
\label{tab_threat_model}
\end{table*}

\section{Preliminary}

In this section, we introduce the two main federated learning settings: horizontal federated learning and vertical federated learning. 

\subsection{Horizontal Federated Learning}
Consider a Horizontal Federated Learning (HFL) consisting of $K$ clients who collaboratively train a HFL model $\omega$ to optimize the following objective:
\begin{equation} \label{eq:objective}
    \min_{\omega} \sum_{k=1}^K\sum_{i=1}^{n_k}\frac{\ell(F_\omega(x_{k,i}), y_{k,i})}{n_1+\cdots+n_K},
\end{equation}
where $\ell$ is the loss, e.g., the cross-entropy loss, $\calD_k=\{(x_{k,i}, y_{k,i})\}_{i=1}^{n_k}$ is the dataset with size $n_k$ owned by client $k$. 

In each communication round, client $k$ uploads their own model weights $\omega_k$ ($k=1,\cdots, K$) to the server and then the server aggregates all uploaded model weights through $\omega = \frac{1}{K}\sum_{k=1}^K\omega_k$. Next, the server distributes the aggregated weights to all clients (see Fig. \ref{fig:Framework}(a)).

\noindent\textbf{Threat Model:} We assume the server might be \textit{semi-honest} adversaries such that they do not submit any malformed messages but may launch \textit{privacy attacks} on exchanged information from other clients to infer clients' private data. 

We consider four types of threat models, summarized in Tab. \ref{tab_threat_model}: i) White-box Model Inversion (WMI), where the adversary knows the model parameters and feature output to restore private data via model inversion attack \cite{he2019model}; ii) Black-box Model Inversion (BMI), where the adversary only knows feature output and a few auxiliary samples to restore private data via model inversion attack \cite{he2019model}; iii) White-box Gradient Inversion (WGI), where the adversary knows the model parameters and gradients to restore private data via gradient inversion attack \cite{zhu2019deep}; IV) Black-box Gradient Inversion (BGI), where the adversary only knows the model parameters to recover private data via gradient inversion attack \cite{zhu2019deep}.

\subsection{Vertical Federated Learning} \label{sec:pre-vfl}
Consider a Vertical Federated Learning (VFL) setting consisting of one active party $P_0$ and $K$ passive parties $\{P_1, \cdots, P_K\}$  who collaboratively train a VFL model $\Theta=(\theta, \omega)$ to optimize the following objective: 
\begin{equation}\label{eq:loss-VFL}
\begin{split}
        \min_{\omega, \theta_1, \cdots, \theta_K} &\frac{1}{n}\sum_{i=1}^n\ell(F_{\omega} \circ (G_{\theta_1}(x_{1,i}),G_{\theta_2}(x_{2,i}), \\
        & \cdots,G_{\theta_K}(x_{K,i})), y_{i}),
\end{split}
\end{equation}
where passive party $P_k$ owns features $\calD_k = (x_{k,1}, \cdots, x_{k,n}) \in \mathcal{X}_k$ and the passive model $G_{\theta_k}$, the active party owns the labels $y \in \mathcal{Y}$ and active model $F_\omega$, $\mathcal{X}_k$ and $\mathcal{Y}$ are the feature space of party $P_k$ and the label space respectively. Each passive party $k$ transfers its forward embedding $H_k$ to the active party to compute the loss. The active model $F_\omega$ and passive models $G_{\theta_k},k \in \{1,\dots,K\}$ are trained based on backward gradients (See Fig. \ref{fig:Framework}(b) for illustration). Note that, before training, all parties leverage Private Set Intersection (PSI) protocols to align data records with the same IDs. 

\noindent\textbf{Threat Model:} We assume all participating parties are \textit{semi-honest} and do not collude with each other. An adversary (i.e., the attacker) $P_k,k=0,\cdots,K$ faithfully executes the training protocol but may launch privacy attacks to infer the private data (features or labels) for other parties. 

We consider two types of threat models summarized in Tab. \ref{tab_threat_model}: \romannumeral1) The active party wants to reconstruct the private features of a passive party through the WMI \cite{he2019model} or BMI~\cite{jin2021cafe, he2019model}. \romannumeral2) A passive party wants to infer the private labels of the active party through the Passive Model Completion (PMC)~\cite{fu2022label}, Norm-based Scoring (NS) function and Direction-based Scoring (DS) function \cite{oscar2022split}.

\section{The Proposed Method: FedAdOb}
\label{sec:FedAdOb}



We first introduce the adaptive obfuscation module and then apply this adaptive obfuscation module to the HFL and VFL settings.




\subsection{Adaptive Obfuscation Module} \label{sec:fedpass}
This section illustrates two critical steps of the Adaptive Obfuscation Module, i) embedding private passports to adapt obfuscation; ii) generating passports randomly to improve privacy-preserving capability.

\begin{figure}[!ht]
\centering
\includegraphics[width=0.49\textwidth]{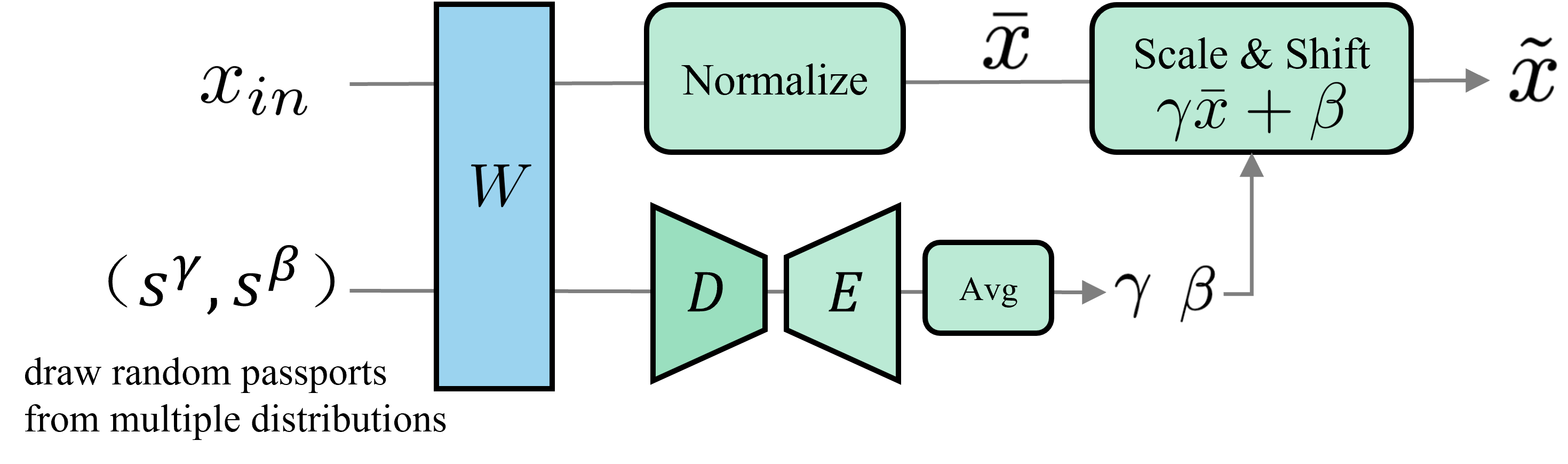}
\caption{Adaptive obfuscation ($g_W(\cdot)$). We implement $g_W(\cdot)$ by inserting a passport layer into a normal neural network layer. }
\label{fig:passport_layer}
\vspace{-3pt}
\end{figure}

\subsubsection{Random Passport Generation}

How passports are generated is crucial in protecting data privacy. 
Specifically, when the passports are embedded in a convolution layer or linear layer with $c$ channels\footnote{For the convolution layer, the passport $s\in \RR^{c\times h_1 \times h_2}$, where $c$ is channel number, $h_1$ and $h_2$ are height and width; for the linear layer, $s\in \RR^{ c\times h_1}$, where $c$ is channel number, $h_1$ is height.}, 
for each channel $j\in[c]$, 
the passport $s{(j)}$ (the $j_{th}$ element of vector $s$) is randomly generated as follows:
\begin{equation}\label{eq:sample-pst}
    s{(j)} \sim \calN(\mu_j, \sigma^2), \quad
    \mu_j \in \calU(-N, 0),
\end{equation}
where all $\mu_j, j=1,\cdots, c$ are different from each other, $\calU$ represents the uniform distribution, $\sigma^2$ is the variance of Gaussian distribution and $N$ is the \textit{passport mean range}, which are two crucial parameters of FedAdOb. The strong privacy-preserving capabilities rooted in such a random passport generation strategy are justified by theoretical analysis in  Theorems \ref{thm:thm1} and \ref{thm2} as well as experiment results in Sect. \ref{sec:exp}.

\subsubsection{Embedding Private Passports}
In this work, we adopt the DNN passport technique proposed by \cite{fan2019rethinking,fan2021deepip} as an implementation for the adaptive obfuscation framework of FedAdOb. 
Specifically, the adaptive obfuscation is determined as follows (described in Algo. \ref{alg:AO}):
\begin{equation}\label{eq:pst1}
\begin{split}
        g_W(x_{in}, s) = &
        \gamma(Wx_{in}) + \beta, \\
         \gamma=&\text{Avg}\Big( D\big(E(Ws^\gamma)\big)\Big),\\
        \beta = &\text{Avg}\Big( D\big(E(Ws^\beta)\big)\Big),
\end{split}
\end{equation}
where $W$ denotes the model parameters of the neural network layer for inserting passports, $x_{in}$ is the input fed to $W$, $\gamma$ and $\beta$ are the scale factor and the bias term. Note that the determination of the crucial parameters $\gamma$ and $\beta$ involves the model parameter $W$ with private passports $s^\gamma$ and $s^\beta$, followed by a autoencoder (Encoder $E$ and Decoder $D$ with parameters $W'$) 
and a average pooling operation Avg($\cdot$). Learning adaptive obfuscation formulated in Eq. \eqref{eq:pst1} brings about two desired properties: 
\begin{itemize}
    \item Passport-based parameters $\gamma$ and $\beta$ provide strong privacy guarantee (refer to Sect.~\ref{sec:analysis}): without knowing passports, it is exponentially hard for the attacker to infer layer input $x_{in}$ from layer output $g_W(x_{in}, s)$, because attacker have no access to $\gamma$ and $\beta$ (see Theorem \ref{thm:thm1}).
    \item Learning adaptive obfuscation formulated in Eq. (\ref{eq:pst1}) optimizes the model parameter $W$ through three backpropagation paths via $\beta, \gamma, W$, respectively, which helps preserve model performance. This is essentially equivalent to adapting the obfuscation (parameterized by $\gamma$ and $\beta$) to the model parameter $W$ (more explanations in Sect. \ref{sec:update-AO}); This adaptive obfuscation scheme offers superior model performance compared to fixed obfuscation schemes (see Sect. \ref{sec:exp}). 

\end{itemize}
\begin{algorithm}[!ht]
\caption{Adaptive Obfuscation ($g_W(\cdot)$)}
\begin{algorithmic}[1]
  \Statex \textbf{Input:} Model parameters $W$ of the neural network layer for inserting passports, the input $x_{in}$ to which the adaptive obfuscation applies, passport keys  $s = (s^\gamma, s^\beta)$.
  \Statex \textbf{Output:} The obfuscated version of the input.
  \State Compute $
         \gamma=\text{Avg}\left( D\big(E(W* s^\gamma)\big)\right)$
\State Compute $ \beta = \text{Avg}\left( D\big(E(W* s^\beta)\big)\right)$ \\
\Return $\gamma(W* x_{in}) + \beta$
	\end{algorithmic}
	\label{alg:AO}
\end{algorithm}
\begin{figure*}[!htbp]
\centering
	\begin{subfigure}{0.48\textwidth}
  		 	\includegraphics[width=1\textwidth]{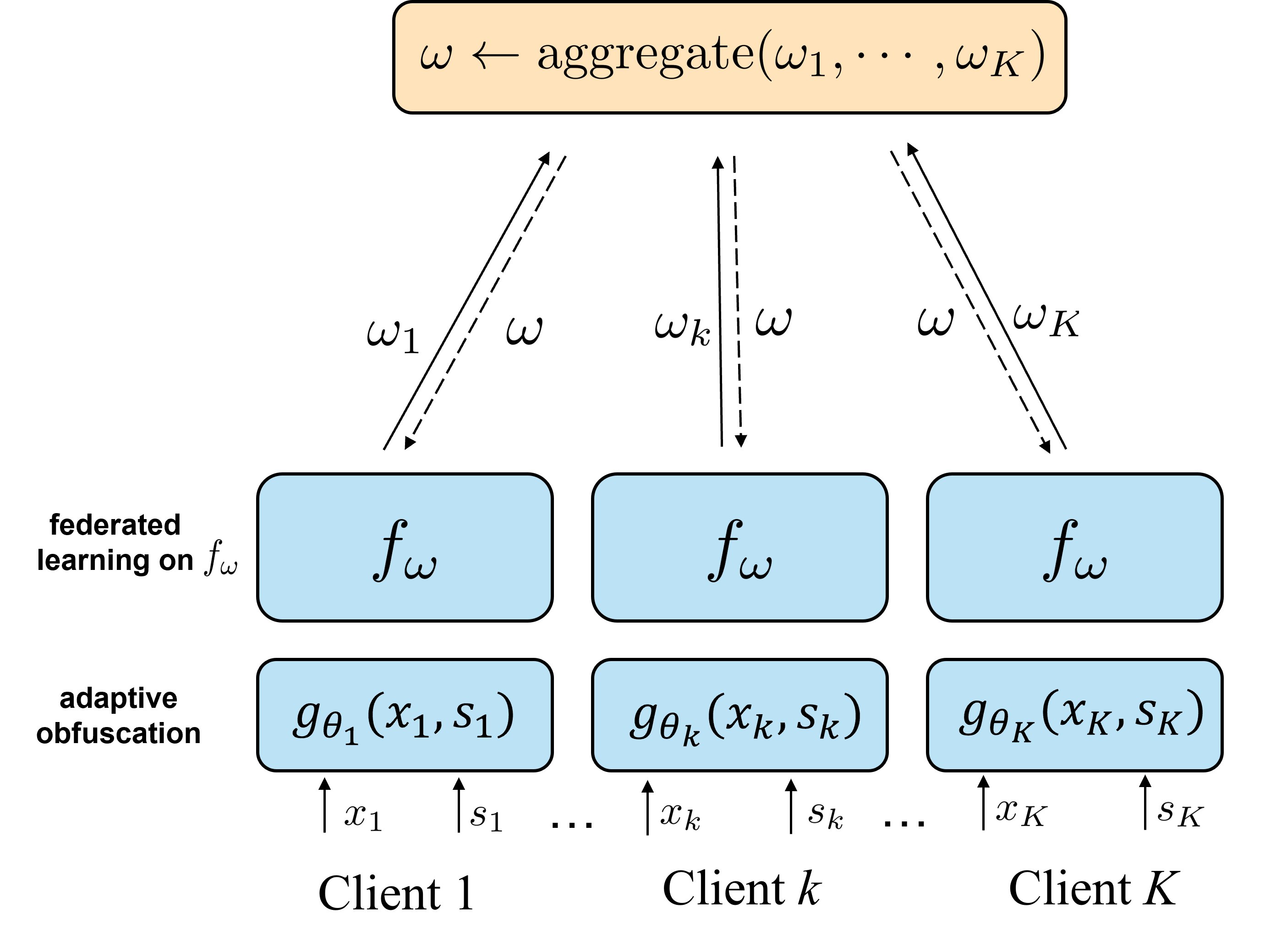}
      \subcaption{HFL}
    		\end{subfigure}
      \hspace{3pt}
      	\begin{subfigure}{0.46\textwidth}
  		 	\includegraphics[width=1\textwidth]{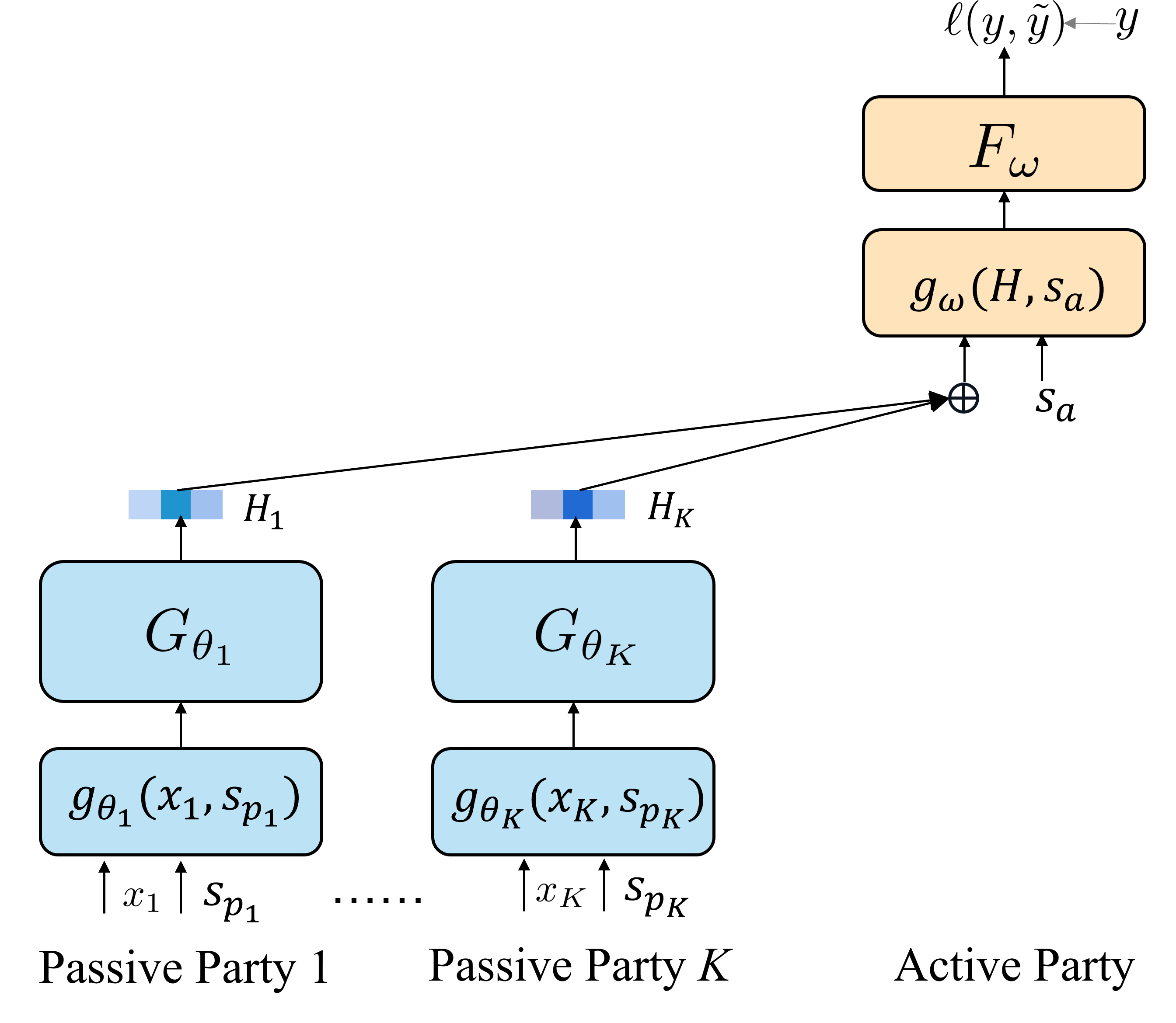}
      \subcaption{VFL}
    		\end{subfigure}
\caption{The left sub-figure illustrates the FedAdOb for HFL, including adaptive obfuscation $g_\theta$ and federated model $f_\omega$. The right sub-figure illustrates the FedAdOb for the VFL setting, in which multiple passive parties and one active party collaboratively train a VFL model, where passive parties only have the private features $x$, whereas the active party has private labels $y$. Both the active party and the passive party adopt adaptive obfuscation by inserting passports into their models to protect features and labels.}
\label{fig:Framework}
\vspace{-6pt}
\end{figure*}

\subsection{FedAdOb in HFL}
We first partition the neural network model into two components, namely the bottom layers represented by $G_\theta$ and the top layers represented by $f_\omega$. In this partitioning, the bottom layers are considered private, while the top layers are shared for aggregation in the federated learning process. Each client participating in the federated learning employs a private adaptive obfuscation technique within the bottom model to protect the client's private feature, as outlined in Eq. \eqref{eq:pst1}.

The training procedure of Federated Adaptive Obfuscation (FedAdOb) in the context of horizontal federated learning is depicted in Fig. \ref{fig:Framework}(a). Algo. \ref{alg:aof-hfl} provides a detailed description of this training procedure.
\begin{enumerate}
    \item Each client optimizes the adaptive obfuscation $g_{\theta_k}$ and federated model $f_\omega$ with its private passports $s_k$ and data $\calD_k$ according to the cross-entropy loss. Then, all clients upload the federated model $\omega_k$ to the server (line 4-8 of Algo. \ref{alg:aof-vfl});
    \item The central server aggregates federated models $\omega_k$ to obtain $\omega$ \cite{mcmahan2017communication} and distributes the aggregated model $\omega$ to all clients; (line 10-11 of Algo. \ref{alg:aof-vfl}); 
\end{enumerate}

The two steps iterate until the performance of the model does not improve. 

\begin{algorithm}[!ht]\vspace{-3pt}
	\caption{FedAdOb in HFL}
	\begin{algorithmic}[1]
	\Statex \textbf{Input:} Communication rounds $T$,  \# of clients $K$, learning rate $\eta$, the dataset $\calD_k=\{x_{k,i}, y_{k,i}\}_{i=1}^{n_k}$ owned by client $k$ and the passport mean range and variance $\{N^k, \sigma^k\}$ for client $k$.

    \Statex \textbf{Output:} $\theta_1, \cdots, \theta_K, \omega$ \vspace{4pt}
	    
	\State Initialize adaptive obfuscation  parameters $\theta_1, \cdots, \theta_K$ and public model parameter $\omega$. 
     \For{$t$ in communication round $T$} \vspace{2pt}
    \State \gray{$\triangleright$ \textit{Clients perform:}}
         \For{Client $k$ in $\{1,\dots,K\}$}:
                 \State $\omega_k \longleftarrow \omega$;
    \For{Batch $(B_x, B_y) \in \calD_k$}:
    \State Sample the passport tuple $s_k=(s_k^\gamma, s_{k}^\beta)$ via 
    \Statex \quad \qquad \qquad Eq. \eqref{eq:sample-pst} and $N^k, \sigma^k$; 
    \State Compute loss $\tilde{\ell}=\ell(f_{\omega_k} \circ g_{\theta_k}(s_k, B_x), B_y) $;
    \State $\theta_{k} \longleftarrow \theta_k - \eta \nabla_{\theta_k}\tilde{\ell}$; 
    \State $\omega_k \longleftarrow \omega_k - \eta \nabla_{\omega_k}\tilde{\ell}$;
\EndFor
\State  Upload $\omega_k$ to the server;
\EndFor
\State \gray{$\triangleright$ \textit{The server performs:}}
\State The server aggregates: $\omega = \frac{1}{K} (\omega_{1}+ \cdots + \omega_{K})$;

\State Distribute $\omega$ to all clients;
\EndFor \\
\Return $\theta_1, \cdots, \theta_K, \omega$
\end{algorithmic}\label{alg:aof-hfl}
\end{algorithm}

\subsection{FedAdOb in VFL}
As discussed in Sect. \ref{sec:pre-vfl}, the active parties in Vertical Federated Learning (VFL), who possess the bottom model, are concerned about the potential leakage of labels to passive parties. Conversely, the passive parties, who possess the top model, are keen to safeguard their private features from being reconstructed by the active parties.

To address the simultaneous protection of private features and labels, we introduce the adaptive obfuscation module in VFL, implemented separately for the bottom and top models. This module ensures the simultaneous protection of both labels and features. The framework of Federated Adaptive Obfuscation (FedAdOb) in vertical federated learning is depicted in Fig. \ref{fig:Framework}(b). The specific training procedure is illustrated as follows (described in Algo. \ref{alg:aof-vfl}):

\begin{enumerate}
    \item Each passive party $k$ applies the adaptive obfuscation to its private features with its private passports $s_{p_k}$ and then sends the forward embedding $H_k$ to the active party (line 3-9 of Algo. \ref{alg:aof-vfl});
    \item The active party sums over all $H_k, k \in \{1,\dots, K\}$ as $H$, and applies the adaptive obfuscation to $H$ with its private passports $s_a$, generating $\Tilde{H}$. Then, the active party computes the loss $\Tilde{\ell}$ and updates its model through back-propagation. Next, the active party computes gradients $\nabla_{H_{k}}\Tilde{\ell}$ for each passive party $k$ and sends $\nabla_{H_{k}}\Tilde{\ell}$ to passive party $k$ (line 10-19 of Algo. \ref{alg:aof-vfl});
    \item Each passive party $k$ updates its model $\theta_k $ according to $\nabla_{H_k}\Tilde{\ell}$ (line 20-22 of Algo. \ref{alg:aof-vfl}). 
\end{enumerate} 

The three steps iterate until the performance of the joint model does not improve.



\begin{algorithm}[!ht]\vspace{-3pt}
	\caption{FedAdOb in VFL}
	\begin{algorithmic}[1]
	   \Statex \textbf{Input:} Communication rounds $T$, passive parties number $K$, learning rate $\eta$, batch size $b$, the passport mean range and variance $\{N_a, \sigma_a\}$ and $\{N_{p_k}, \sigma_{p_k}\}$ for the active party and passive party $k$ respectively, the feature dataset $\calX_k = (x_{k,1},\cdots, x_{k,n_k})$ owned by passive party $k$, the aligned label $\calY = (y_1, \cdots, y_{n_0})$ owned by the active party.

        \Statex \textbf{Output:} Model parameters $\theta_1, \cdots, \theta_K, \omega$ \vspace{4pt}
	    
	\State Initialize model weights $\theta_1, \cdots, \theta_K, \omega$.
     \For{$t$ in communication round $T$} \vspace{2pt}
        \State \gray{$\triangleright$ \textit{Passive parties perform:}}
         \For{Passive Party $k$ in $\{1,\dots,K\}$}: 
         \State Sample a batch $B_x=(x_{k,1},\cdots, x_{k,b})$ from the 
         \Statex \quad \qquad dataset $\calX_k$;
         \State Sample the passport tuple $s_{p_k} = (s_{p_k}^\gamma, s_{p_k}^\beta)$ accord-
         \Statex \quad \qquad ing to Eq. \eqref{eq:sample-pst} and $N_{p_k}, \sigma_{p_k}$;
          \State Compute $\Tilde{B_x} = g_{\theta_k}( B_x,s_{p_k})$;
\State   Compute  $H_{k} \gets G_{\theta_k}(\Tilde{B_x})$;
\State  Send  $H_{k}$  to the active party;
\EndFor
\State \gray{$\triangleright$ \textit{The active party performs:}}
\State Obtain a batch $B_y$ from the label $\calY$ related to $B_x$;
\State $H = \sum_{k=1}^KH_k$;
\State Sample the passport tuple $s_a=(s_{a}^\gamma, s_{a}^\beta)$ via Eq. \eqref{eq:sample-pst} 
\Statex \qquad and $N_a, \sigma_a$ ;
\State Compute $\Tilde{H} = g_{\omega}(H,s_a)$;
\State Compute cross-entropy loss:
$\Tilde{\ell} = \ell(F_{\omega}(\Tilde{H}),B_y)$
\State Update the active model as: $\omega = \omega - \eta \nabla_\omega\Tilde{\ell}$;
   \For {$k$ in $\{1,\dots,K\}$}:   
   \State Compute and send  $\nabla_{{H_{k}}}\tilde{\ell}$ to each passive party $k$;
   \EndFor		
\State \gray{$\triangleright$ \textit{Passive parties perform:}}
   \For{Passive Party $k \in \{1,\dots,K\}$}: 
      \State  Update $\theta_k$ by $\theta_k = \theta_k - \eta [\nabla_{H_k}\tilde{\ell}] [\nabla_{\theta_k}H_k]$
\EndFor
\EndFor
\Return $\theta_1, \cdots, \theta_K, \omega$
	\end{algorithmic}\label{alg:aof-vfl}
\end{algorithm}

\subsection{Update Procedure by Adaptive Obfuscation} \label{sec:update-AO}
Consider a neural network $f_\Theta(x):\calX \to \RR$, where $x \in \calX$, $\Theta = (\omega, \theta_1, \cdots, \theta_K)$ denotes top model and bottom model parameters. Then we can reformulate the loss of FedAdOb as the  following:
\begin{prop} \label{prop:MP-AO}
The loss of the FedAdOb with adaptive obfuscation in the bottom model can be written as:
\begin{itemize}
\item For VFL:
\begin{equation} \label{eq:loss-aof-VFL}
\begin{split} \min_{\omega, \theta_1 ,\cdots, \theta_K} &\frac{1}{n}\sum_{i=1}^n\ell(F_{\omega} \circ (G'_{\theta_1}(x_{1,i}, s_{p_1})), \\
        &\cdots, G'_{\theta_K}(x_{K,i}, s_{p_K})) ,  y_{i})),
\end{split}
\end{equation}
\item For HFL:
\begin{equation} \label{eq:loss-aof-HFL}
    \min_{\theta_1, \cdots, \theta_K,\omega} \sum_{k=1}^K\sum_{i=1}^{n_k}\frac{\ell(F_\omega \circ (G'_{\theta_k}(x_{k,i}, s_{p_k}), y_{k,i})}{n_1+\cdots+n_K},
\end{equation}
\end{itemize}
where $G'_{\theta_k}()$ is the composite function $G_{\theta_k}\cdot g_{\theta_k}()$, $k=1, \cdots, K$. 
\end{prop}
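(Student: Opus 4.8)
The plan is to show that Proposition~\ref{prop:MP-AO} is essentially a notational rewriting of the training procedures described in Algorithms~\ref{alg:aof-vfl} and~\ref{alg:aof-hfl}: the adaptive obfuscation module $g_{\theta_k}$ applied to the private features is simply folded into the bottom model, and the composite map $G'_{\theta_k} := G_{\theta_k}\circ g_{\theta_k}$ absorbs it. So the argument is largely an exercise in unrolling definitions and matching them against Eqs.~\eqref{eq:loss-VFL} and~\eqref{eq:objective}. First I would fix notation: recall from Eq.~\eqref{eq:pst1} that $g_W(x_{in},s)=\gamma(Wx_{in})+\beta$ with $\gamma,\beta$ determined by $W$, $W'$ (the autoencoder parameters) and the sampled passports $s=(s^\gamma,s^\beta)$, and note that once $\theta_k$ is understood to contain both the ordinary layer weights and the autoencoder weights $W'$, the map $x\mapsto g_{\theta_k}(x,s_{p_k})$ is a well-defined function of $x$ parameterized by $\theta_k$ (and the random passport $s_{p_k}$).

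For the VFL case, I would start from the original VFL objective~\eqref{eq:loss-VFL}, namely $\min \frac1n\sum_i \ell\bigl(F_\omega\circ(G_{\theta_1}(x_{1,i}),\dots,G_{\theta_K}(x_{K,i})),y_i\bigr)$, and then trace through lines 7--8 of Algo.~\ref{alg:aof-vfl}: passive party $k$ first computes $\widetilde{B_x}=g_{\theta_k}(B_x,s_{p_k})$ and then $H_k=G_{\theta_k}(\widetilde{B_x})$, i.e. the forward embedding it transmits is $H_k = G_{\theta_k}\bigl(g_{\theta_k}(x_{k,i},s_{p_k})\bigr) = G'_{\theta_k}(x_{k,i},s_{p_k})$ by the definition of the composite $G'_{\theta_k}$. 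Substituting this into the loss that the active party actually minimizes (lines 14--16) yields exactly Eq.~\eqref{eq:loss-aof-VFL}. The HFL case is completely analogous: each client runs $g_{\theta_k}$ inside its bottom model before $f_\omega$ (line 8 of Algo.~\ref{alg:aof-hfl} computes $\ell(f_{\omega_k}\circ g_{\theta_k}(s_k,B_x),B_y)$, and since the bottom model $G_{\theta_k}$ sits between $g_{\theta_k}$ and $f_\omega$ in the decomposition $F_\omega\circ G_{\theta_k}$, the per-sample loss becomes $\ell(F_\omega\circ G'_{\theta_k}(x_{k,i},s_{p_k}),y_{k,i})$), and summing over clients and samples with the weights $n_k/(n_1+\cdots+n_K)$ from~\eqref{eq:objective} gives Eq.~\eqref{eq:loss-aof-HFL}.

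The only genuinely delicate point — and the place I would be most careful — is the status of the randomly sampled passports $s_{p_k}$ inside the objective. In the algorithms a fresh passport tuple is drawn per batch (per iteration), so strictly speaking the quantity being optimized is a stochastic objective; writing it as a single deterministic sum as in~\eqref{eq:loss-aof-VFL}/\eqref{eq:loss-aof-HFL} requires either (i) reading $s_{p_k}$ as fixed realizations and interpreting the stated loss as the instantaneous objective whose stochastic gradients the algorithm follows, or (ii) taking an expectation over $s_{p_k}\sim\calN(\mu,\sigma^2)$ and noting that SGD over batches with i.i.d.\ passports is an unbiased estimator of $\EE_{s}[\,\cdot\,]$. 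I would state this reading explicitly (most naturally: the passports are treated as fixed parameters of the obfuscation for the purposes of the loss expression, exactly as the per-batch hyperparameters $N^k,\sigma^k$ are), after which the proposition reduces to the substitution identities above. I do not anticipate any nontrivial obstacle beyond making that convention precise; the rest is bookkeeping, and I would keep it to a few lines.
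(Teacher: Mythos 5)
Your proposal is correct and follows essentially the same route as the paper's own proof in Appendix B, which likewise just substitutes $H_k = G_{\theta_k}(g_{\theta_k}(x_{k,i},s_{p_k})) = G'_{\theta_k}(x_{k,i},s_{p_k})$ into the original objectives of Eqs.~\eqref{eq:loss-VFL} and~\eqref{eq:objective} and renames the composite. Your explicit remark about treating the per-batch random passports as fixed realizations (or taking an expectation) is a point the paper's proof glosses over entirely, so it is a welcome clarification rather than a deviation.
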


Proposition \ref{prop:MP-AO} illustrates that FedAdOb is trained with the tuple \{(Feature, passport), label\} = $\{(x, s), y\}$ (see proof in Appendix B). Therefore, the training of FedAdOb can be divided two optimization steps. One is to maximize model performance with the data $(x,y)$ when the passport $s$ is fixed (including $W$ path). The second is also to maximize model performance with the passport $s$ when the data is fixed, i.e., the adaptive obfuscation is also optimized towards maximizing model performance (including $\gamma,\beta$ path). Specifically, we take the Eq. \eqref{eq:pst1} as one example, the the derivative for $g$ w.r.t. $W=\theta$ has the following three backpropagation paths via $\beta, \gamma, \theta$:
\begin{equation} \label{eq:upload-gradients-app1}
\frac{\partial g}{\partial \theta} = \left\{
\begin{aligned}
x_{in} \otimes diag(\gamma)^T + \beta \quad & \theta \text{ path}\\
(\theta x_{in})^T\frac{\partial \gamma}{\partial \theta}  \quad & \text{$\gamma$ path}\\
\frac{\partial \beta}{\partial \theta},  \quad & \text{$\beta$ path}\\
\end{aligned}
\right.
\end{equation}
where $\otimes$ represents Kronecker product.


\section{Security Analysis}

\label{sec:analysis}
We investigate the privacy-preserving capability of FedAdOb against feature reconstruction attacks and label inference attacks. 
We conduct the privacy analysis with linear regression models, for the sake of brevity. Proofs are deferred to Appendix C.
\begin{definition} \label{def:SplitFed}
Define the forward function of the bottom model $G$ and the top model $F$: 
\begin{itemize}
    \item For the bottom layer: $H = G(x) =  W_p s_p^\gamma \cdot W_p x + W_p s_p^\beta$;
    \item For the top layer: $y = F(H) =  W_a s_a^\gamma \cdot W_a  H + W_a s_a^\beta$.
\end{itemize}
where $W_p$, $W_a$ are 2D matrices of the bottom and top models; $\cdot$ denotes the inner product, $ s_p^\gamma,  s_p^\beta$ are passports embedding into the bottom layers, $ s_a^\gamma,  s_a^\beta$ are passports embedding into the top layers.
\end{definition}

\subsection{Hardness of Feature Restoration with FedAdOb}
Consider the two strong feature restoration attacks,  White-box Gradient Inversion (WGI) \cite{zhu2019dlg} and White-box Model Inversion (WMI) attack \cite{jin2021cafe,he2019model}, which aims to recover features $\hat{x}$ approximating original features $x$ according to the model gradients and outputs respectively. Specifically, for WMI, the attacker knows the bottom model parameters $W_p$, forward embedding $H$, and the way of embedding passports, but does not know the passport. For WGI, the adversary knows the bottom model gradients $\nabla W_p$, and the way of embedding passports, but does not know the passport.

\begin{theorem}\label{thm:thm1}
    Suppose the client protects features $x$ by inserting the $s_{\beta}^p$. The probability of recovering features by the attacker via WGI and WMI attack is at most $\frac{\pi^{m/2}\epsilon^m}{\Gamma(1+m/2)N^m}$ such that the recovering error is less than $\epsilon$, i.e., $\|x-\hat{x}\|_2\leq \epsilon$, where $m$ denotes the dimension of the passport via flattening, $N$ denotes the passport mean range formulated in Eq. \eqref{eq:sample-pst} and $\Gamma(\cdot)$ denotes the Gamma distribution.
\end{theorem}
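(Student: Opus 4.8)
The plan is to show that the attacker, who knows the bottom-model parameters $W_p$ (and/or gradients $\nabla W_p$) and the functional form of the passport layer but not the passport $s_p^\beta$, faces an ill-posed inversion problem whose solution set is parametrized by $s_p^\beta$. Concretely, from Definition \ref{def:SplitFed} the bottom-layer output is $H = (W_p s_p^\gamma)\cdot(W_p x) + W_p s_p^\beta$. First I would fix the observed quantities ($H$ and $W_p$ for WMI; $\nabla W_p$ for WGI) and write down the equation the attacker must solve for the pair $(\hat x, \hat s)$. The key observation is that for \emph{any} guessed passport $\hat s$ there is a corresponding $\hat x$ that reproduces the observation exactly: solving $H - W_p \hat s = (W_p s_p^\gamma)\cdot (W_p \hat x)$ gives a consistent $\hat x = \hat x(\hat s)$ (up to the usual pseudo-inverse solution), so the attacker cannot do better than guessing $\hat s$ and reading off the induced $\hat x$. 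I would make precise the (essentially affine/Lipschitz) dependence $\hat x = \hat x(\hat s)$ so that a recovery error $\|x - \hat x\|_2 \le \epsilon$ translates into the event that the \emph{guessed} passport lies within some Euclidean ball of radius $\epsilon$ (after absorbing constants into $N$, consistent with the theorem's normalization) around the true flattened passport $s \in \RR^m$.

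Next I would invoke the passport generation rule of Eq. \eqref{eq:sample-pst}: each coordinate mean $\mu_j$ is drawn uniformly from $\calU(-N,0)$, so the vector of means — and hence, conditionally, the passport $s$ — is (to leading order, or after conditioning on the Gaussian perturbation) uniformly distributed over a region of $\RR^m$ of volume $N^m$. Since the attacker has no information about the passport, from the attacker's perspective the true passport is uniform on that volume-$N^m$ set. The probability that a fixed guess $\hat s$ (chosen without knowledge of $s$) lands within Euclidean distance $\epsilon$ of $s$ is then at most the ratio of the volume of an $m$-dimensional ball of radius $\epsilon$ to $N^m$. The volume of the unit $m$-ball is $\pi^{m/2}/\Gamma(1+m/2)$, so the ball of radius $\epsilon$ has volume $\pi^{m/2}\epsilon^m/\Gamma(1+m/2)$, yielding the bound $\dfrac{\pi^{m/2}\epsilon^m}{\Gamma(1+m/2)\,N^m}$. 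I would then note this bound holds uniformly over all attacker strategies — including randomized ones — by averaging, and that the same argument covers WGI, because the gradient $\nabla W_p$ depends on $x$ only through the same passport-shifted combination, so the indeterminacy by $s_p^\beta$ is identical.

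I expect the main obstacle to be the first step: rigorously establishing that knowledge of $(W_p, H)$ (resp. $\nabla W_p$) pins down $x$ only up to the unknown additive passport term, i.e., that the map $\hat s \mapsto \hat x(\hat s)$ is well-defined and Lipschitz with the constant that the theorem implicitly folds into $N$. This requires being careful about (i) the pseudo-inverse/rank conditions on $W_p$ so that the reconstruction equation is actually solvable, (ii) the role of $s_p^\gamma$ (which the theorem's hypothesis treats as not the protecting mechanism — only $s_p^\beta$ is inserted), and (iii) matching the geometric constant so that ``distance $\le \epsilon$ in $x$-space'' corresponds cleanly to ``distance $\le \epsilon$ in passport space'' after the stated normalization. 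Once that reduction is in place, the volumetric argument for the probability bound is routine, using only the exact formula for the volume of a Euclidean ball and the uniform prior on the passport induced by Eq. \eqref{eq:sample-pst}.
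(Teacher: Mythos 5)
Your proposal is correct and follows essentially the same route as the paper: the paper's Lemmas in Appendix C first establish the exact reduction $\|x-\hat{x}\|_2 = \|s^\beta_p - s^{\beta'}_p\|_2$ for both WMI and WGI (under the left-invertibility of $W_p$, resp.\ $\nabla b \neq 0$), and then the theorem's proof is precisely your volumetric argument, comparing the $m$-ball of radius $\epsilon$ of volume $\pi^{m/2}\epsilon^m/\Gamma(1+m/2)$ against the passport space $(-N,0)^m$ of volume $N^m$. The only cosmetic difference is that you treat the true passport as uniform and the guess as fixed, whereas the paper fixes the passport and lets the attacker guess uniformly at random; the two viewpoints yield the identical bound.
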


Theorem \ref{thm:thm1} demonstrates that the attacker's probability of recovering features within error $\epsilon$ is exponentially small in the dimension of passport size $m$. The successful recovering probability is inversely proportional to the passport mean range $N$ to the power of $m$. 
\subsection{Hardness of Label Recovery with FedAdOb}
Consider the passive model competition attack \cite{fu2022label} that aims to recover labels owned by the active party. The attacker (i.e., the passive party) leverages a small auxiliary labeled dataset $\{x_i, y_i\}_{i=1}^{n_a}$ belonging to the original training data to train the attack model $W_{att}$, and then infer labels for the test data. Note that the attacker knows the trained passive model $G$ and forward embedding $H_i = G(x_i)$. Therefore, the attacker optimizes the attack model $W_{att}$ by minimizing $ \sum_{i=1}^{n_a}\|W_{att}H_i-\vec{y}_i\|_2$\footnote{$\vec{y}_i$ represents the one-hot vector of label $y_i$; we use the mean square error loss for the convenience of analysis.}.
\begin{assumption}\label{assum1}
Suppose the original main algorithm of VFL achieves zero training loss. For the attack model, we assume the error of the optimized attack model $W^*_{att}$ on test data $\tilde{\ell}_t$ is larger than that of the auxiliary labeled dataset $\Tilde{\ell}_a$.
\end{assumption}
\begin{theorem} \label{thm2}
Suppose the active party protects $y$ by embedding $s^a_\gamma$, and adversaries aim to recover labels on the test data with the error $\Tilde{\ell}_t$ satisfying:
    \begin{equation}
        \Tilde{\ell}_t \geq \min_{W_{att}} \sum_{i=1}^{n_a}\|(W_{att}- T_i)H_i\|_2,
    \end{equation}
    where $T_i =diag(W_as_{\gamma,i}^a) W_a$ and $s_{\gamma,i}^a$ is the passport for the label $y_i$ embedded in the active model. Moreover, if $H_{i_1} = H_{i_2} = H$ for any $1\leq i_1,i_2 \leq n_a$, then
    \begin{equation} \label{eq:protecty}
        \Tilde{\ell}_t \geq \frac{1}{(n_a-1)}\sum_{1\leq i_1<i_2\leq n_a}\|(T_{i_1}-T_{i_2})H\|_2.
    \end{equation}
\end{theorem}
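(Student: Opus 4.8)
\textbf{Proof proposal for Theorem \ref{thm2}.}

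The plan is to unwind the label-protection mechanism through Definition \ref{def:SplitFed} and Assumption \ref{assum1}, and then extract the stated lower bounds by a pigeonhole/averaging argument. First I would write out what the attacker actually optimizes: the passive party trains $W_{att}$ by minimizing $\sum_{i=1}^{n_a}\|W_{att}H_i - \vec{y}_i\|_2$ on the auxiliary set, where $H_i = G(x_i)$. The crucial observation is that the true label is \emph{not} a linear function of $H_i$ alone, because the active model applies the passport-dependent map $y_i = F(H_i) = W_a s^a_{\gamma,i}\cdot W_a H_i + W_a s^a_\beta$; writing $T_i := \mathrm{diag}(W_a s^a_{\gamma,i})W_a$, the zero-training-loss hypothesis of Assumption \ref{assum1} gives $\vec{y}_i = T_i H_i$ (absorbing or ignoring the $\beta$ term, consistent with the statement of the bound). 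Substituting this into the attacker's objective, the best achievable auxiliary error is $\tilde\ell_a = \min_{W_{att}}\sum_i \|(W_{att} - T_i)H_i\|_2$, since $W_{att}$ is forced to be a single matrix while each sample carries its own passport-induced $T_i$. Assumption \ref{assum1} then immediately yields $\tilde\ell_t \ge \tilde\ell_a \ge \min_{W_{att}}\sum_{i=1}^{n_a}\|(W_{att}-T_i)H_i\|_2$, which is the first claimed inequality.

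For the second inequality, I would specialize to the degenerate case $H_{i_1} = H_{i_2} = H$ for all pairs and lower-bound $\min_{W_{att}} \sum_{i=1}^{n_a}\|(W_{att}-T_i)H\|_2$ from below by a quantity that no longer depends on the optimizer $W_{att}$. The idea is a triangle-inequality/averaging trick: for any fixed $W_{att}$ and any pair $(i_1,i_2)$,
\begin{equation}
\|(W_{att}-T_{i_1})H\|_2 + \|(W_{att}-T_{i_2})H\|_2 \;\ge\; \|(T_{i_1}-T_{i_2})H\|_2,
\end{equation}
by the triangle inequality applied to the vectors $(W_{att}-T_{i_1})H$ and $(T_{i_2}-W_{att})H$. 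Summing this over all $\binom{n_a}{2}$ unordered pairs, the left side counts each term $\|(W_{att}-T_i)H\|_2$ exactly $(n_a-1)$ times, so
\begin{equation}
(n_a-1)\sum_{i=1}^{n_a}\|(W_{att}-T_i)H\|_2 \;\ge\; \sum_{1\le i_1 < i_2 \le n_a}\|(T_{i_1}-T_{i_2})H\|_2.
\end{equation}
Since the right-hand side is independent of $W_{att}$, taking the minimum over $W_{att}$ on the left and combining with the first inequality gives exactly Eq. \eqref{eq:protecty}.

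The main obstacle I anticipate is the bookkeeping around the $\beta$ term and the exact form of ``zero training loss'': the theorem statement drops $W_a s^a_\beta$ from the bound, so I need to justify that cleanly — either by noting that $s^a_\beta$ is shared/fixed across samples and can be folded into an affine version of $W_{att}$ (so the analysis goes through verbatim with $T_i$ replaced by the affine map), or by observing that the passport protecting labels is $s^a_\gamma$ per the theorem's hypothesis while $s^a_\beta$ plays no protective role here. A secondary subtlety is making precise the step ``zero training loss of the main VFL algorithm $\Rightarrow \vec y_i = T_i H_i$'': this uses that $F_\omega$ interpolates the training data exactly, so the per-sample passport $s^a_{\gamma,i}$ is precisely what makes a \emph{single} linear readout impossible for the attacker. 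Everything else — the triangle inequality and the pair-counting identity — is routine.
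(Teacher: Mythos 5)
Your proposal is correct and follows essentially the same route as the paper's own proof: Assumption \ref{assum1} (zero training loss plus test error dominating auxiliary training error) gives $\tilde{\ell}_t \ge \min_{W_{att}}\sum_i\|(W_{att}-T_i)H_i\|_2$ after substituting $\vec{y}_i = T_iH_i$, and the second bound follows from the triangle inequality applied pairwise together with the observation that each index appears in $n_a-1$ pairs (the paper phrases this via ordered pairs with a $\tfrac{1}{2(n_a-1)}$ normalization, which is equivalent to your unordered-pair count). Your extra remark on the $\beta$ term is consistent with the theorem's hypothesis that only $s^a_\gamma$ is embedded, which is exactly how the paper's proof handles it.
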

\begin{prop}\label{prop1}
Since passports are randomly generated and $W_a$ and $H$ are fixed, if the $W_a = I, H=\Vec{1}$, then it follows that:
\begin{equation}
    \Tilde{\ell}_t \geq \frac{1}{(n_a-1)}\sum_{1\leq i_1<i_2\leq n_a}\|s_{\gamma,i_1}^a-s_{\gamma,i_2}^a\|_2).
\end{equation}
\end{prop}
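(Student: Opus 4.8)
The plan is to specialize Theorem~\ref{thm2} to the stated setting. Start from the inequality \eqref{eq:protecty}, which under the hypothesis $H_{i_1}=H_{i_2}=H$ already gives
\[
\Tilde{\ell}_t \;\geq\; \frac{1}{n_a-1}\sum_{1\leq i_1<i_2\leq n_a}\|(T_{i_1}-T_{i_2})H\|_2,
\qquad T_i = \mathrm{diag}(W_a s_{\gamma,i}^a)\,W_a .
\]
So the only work is to evaluate $(T_{i_1}-T_{i_2})H$ when $W_a = I$ and $H = \vec{1}$.

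First I would substitute $W_a = I$ into the definition of $T_i$: this collapses $T_i = \mathrm{diag}(W_a s_{\gamma,i}^a) W_a$ to $T_i = \mathrm{diag}(s_{\gamma,i}^a)$, since the inner $W_a$ acting on the passport is the identity and the outer $W_a$ is also the identity. Hence $T_{i_1}-T_{i_2} = \mathrm{diag}(s_{\gamma,i_1}^a - s_{\gamma,i_2}^a)$. Next I would apply this diagonal matrix to $H = \vec{1}$: for any vector $v$, $\mathrm{diag}(v)\,\vec{1} = v$, so $(T_{i_1}-T_{i_2})H = s_{\gamma,i_1}^a - s_{\gamma,i_2}^a$. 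Taking the $\ell_2$ norm termwise inside the sum then yields exactly
\[
\Tilde{\ell}_t \;\geq\; \frac{1}{n_a-1}\sum_{1\leq i_1<i_2\leq n_a}\|s_{\gamma,i_1}^a - s_{\gamma,i_2}^a\|_2 ,
\]
which is the claim.

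There is essentially no obstacle here — the proposition is a direct corollary obtained by plugging $W_a=I$, $H=\vec 1$ into the bound already established in Theorem~\ref{thm2}, using only the identities $\mathrm{diag}(W_a s)W_a \big|_{W_a=I} = \mathrm{diag}(s)$ and $\mathrm{diag}(s)\vec 1 = s$. The one point worth stating explicitly (so the reader sees why the bound is nontrivial) is that, because the passports $s_{\gamma,i}^a$ are drawn independently with per-channel means $\mu_j$ that are forced to be pairwise distinct via Eq.~\eqref{eq:sample-pst}, the right-hand side is bounded away from zero in expectation: each pairwise difference $\|s_{\gamma,i_1}^a - s_{\gamma,i_2}^a\|_2$ has expectation growing with $N$ (the passport mean range) and with $m$ (the passport dimension). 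I would close with a one-sentence remark to that effect, so the proposition visibly delivers the advertised \emph{non-zero recovery error} for label inference.
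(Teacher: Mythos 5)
Your proposal is correct and follows exactly the paper's own argument: the appendix proof likewise substitutes $W_a=I$ and $H=\vec{1}$ into Theorem~\ref{thm2} to get $(T_{i_1}-T_{i_2})H = s_{\gamma,i_1}^a-s_{\gamma,i_2}^a$ and concludes immediately. Your version just spells out the two intermediate identities more explicitly, which is fine.
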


Theorem \ref{thm2} and Proposition \ref{prop1} show that the label recovery error $\Tilde{\ell}_t$ has a lower bound,
which deserves further explanation. First, when passports are randomly generated for all data, i.e., $s_{\gamma,i_1}^a \neq s_{\gamma,i_2}^a$, then a non-zero label recovery error is guaranteed no matter how adversaries attempt to minimize it. The recovery error thus acts as a protective random noise imposed on true labels. Second, the magnitude of the recovery error monotonically increases with the variance $\sigma^2$ of the Gaussian distribution passports sample from (in Eq. \eqref{eq:sample-pst}), which is a crucial parameter to control privacy-preserving capability (see Experiment results in Sect. \ref{sec:aba-tradeoff}) are in accordance with Theorem \ref{thm2}. 
Third, it is worth noting that the lower bound is based on the training error of the auxiliary data used by adversaries to launch PMC attacks. Given possible discrepancies between the auxiliary data and private labels, e.g., in terms of distributions and the number of dataset samples, the actual recovery error of private labels can be much larger than the lower bound. Again, this strong protection is observed in experiments (see Sect. \ref{sec:exp}). 


\section{Experiment}\label{experiment}
We present empirical studies of FedAdOb on various datasets using different model architectures.

\subsection{Experiment Setting}\label{sec:setup}
\subsubsection{Models \& Datasets} 
We conduct experiments on four image datasets and one tabular dataset:
\textit{MNIST} \cite{lecun2010mnist}, \textit{CIFAR10}, \textit{CIFAR100} \cite{krizhevsky2009learning}, ModelNet \cite{wu20153d} and Criteo \cite{wang2017deep}. 
In HFL, we adopt LeNet \cite{lecun1998gradient} on MNIST, \textit{AlexNet} \cite{NIPS2012_c399862d} on CIFAR10, \textit{ResNet18} \cite{he2016deep} on CIFAR100. 
In VFL, we adopt LeNet on MNIST and ModelNet, adopt AlexNet and ResNet18 on CIFAR10, and Deep \& Cross Network (DCN) \cite{wang2017deep} on Criteo. 

\subsubsection{Federated Learning Settings}
We partition a neural network into a bottom model and a top model for both HFL and VFL settings. In HFL, each client incorporates their passport information into the bottom model to safeguard the features. In VFL, the passive party embeds the passport information into the last layers of the bottom model to protect the features, while the active party inserts the passport information into the last layers of the top model to protect the labels. In HFL, each client possesses both private features and labels. In VFL, the passive party exclusively contributes private features, while the active party solely provides labels. The details of our HFL and VFL scenarios are summarized in TABLE \ref{table:models}. Please refer to Appendix A for details on the experimental settings.

\begin{table}[!h]
	\caption{Models for evaluation in HFL and VFL. \# P denotes the number of parties. FC: fully-connected layer. Conv: convolution layer. }
	\centering
	\footnotesize
	\begin{tabular}{c||c|c|c|c}
	        \hline
             \\[-1em]
		Scenario& \shortstack{Model \& \\ Dataset}  & \shortstack{Bottom \\ Model } & \shortstack{Top \\ Model}   & \# P \\
         \hline
         \hline
           \\[-1em]
            \multirow{3}{*}{HFL}&LeNet-MNIST & 1 Conv  &  2 Conv+ 3 FC  & 2 \\
		\cline{2-5}
       \\[-1em]
            & AlexNet-CIAFR10 & 1 Conv  &  4 Conv+ 1 FC  & 2 \\
		\cline{2-5}
		\\[-1em]
            & ResNet-CIFAR100 & 1 Conv  &   16 Conv+ 1 FC   & 2 \\
		\hline
        \hline
          \\[-1em]
              \multirow{5}{*}{VFL}& LeNet-MNIST & 2 Conv  &  3 FC  & 2 \\
		\cline{2-5}
          \\[-1em]
		&AlexNet-CIFAR10 & 5 Conv  &  1 FC & 2  \\
		\cline{2-5}
          \\[-1em]
		&ResNet18-CIAFR10 & 17 Conv & 1 FC & 2\\
		\cline{2-5}
     \\[-1em]
     	&LeNet-ModelNet & 2 Conv  &  3 FC  & 7 \\
		\cline{2-5}
       \\[-1em]
     	&DCN-Criteo  & 3 FC  &  1 FC  & 2 \\
		\hline
         
	\end{tabular}
\label{table:models}
\end{table}

\subsubsection{Privacy Attack Methods}
We consider privacy attacks in both HFL and VFL settings. For \textbf{HFL}, we investigate the effectiveness of FedAdOb against White-box and Black-box gradient inversion (WGI and BGI) attacks and White-box \cite{zhu2019dlg} and Black-box model inversion (WMI and BMI) attacks \cite{he2019model}.
For \textbf{VFL}, we evaluate the effectiveness of FedAdOb against feature reconstruction attacks (including CAFE attack~\cite{he2019model,jin2021cafe} and Model Inversion (MI) attack \cite{he2019model} ) and label inference attacks (including  Passive Model Completion (PMC) attack \cite{fu2022label}, Norm-based scoring attack and Direction-based scoring attack \cite{oscar2022split}). The details of the attacks are shown in Appendix A.

\subsubsection{Baseline Defense Methods} 
In both HFL and VFL scenarios, we compare FedAdOb with FedAVG \cite{mcmahan2017communication}, the baseline with no defense, and other three defense methods including \textbf{Differential Privacy (DP)} \cite{abadi2016deep}, \textbf{Sparse} \cite{lin2018deep}, and \textbf{InstaHide} \cite{huang2020instahide}. Besides, \textbf{SplitFed} \cite{thapa2020splitfed} is used for comparison with FedAdOb in HFL and \textbf{Confusional AutoEncoder} (CAE)~\cite{zou2022defending}, \textbf{Marvell} \cite{oscar2022split}, \textbf{GradPerturb} \cite{yang2022differentially} and \textbf{LabelDP} \cite{ghazi2021deep} are additional baseline defense methods in VFL (see details in Appendix). 

\subsubsection{Evaluation Metrics} 
We use data (feature or label) recovery error and main task accuracy \footnote{AUC is used for binary classification dataset Criteo.} to evaluate defense mechanisms. We adopt the ratio of incorrectly labeled samples by a label inference attack to all labeled samples to measure the performance of that label inference attack. We adopt Mean Square Error (MSE) \cite{zhu2019dlg} between original images and images recovered by a feature reconstruction attack to measure the performance of that feature reconstruction attack. MSE is widely used to assess the quality of recovered images. A higher MSE value indicates a higher image recovery error. In addition, we leverage Calibrated Averaged Performance (CAP) \cite{fan2020rethinking} to quantify the trade-off between main task accuracy and data recovery error. CAP is defined as follows:
\begin{definition}[Calibrated Averaged Performance (CAP)] \label{def:cap}
For a given Privacy-Preserving Mechanism $g_s\in \calG$ ($s$ denotes the controlled parameter of $g$, e.g., the sparsification level, noise level, and passport range) and attack mechanism $a \in \calA$, the Calibrated Averaged Performance is defined as:
\begin{equation}
    \text{CAP}(g_s, a) = \frac{1}{m}\sum_{s=s_1}^{s_m} Acc(g_s, x) * Rerr(x, \hat{x}_s),
\end{equation}
where $Acc(\cdot)$ denotes the main task accuracy and $Rerr(\cdot)$ denotes the recovery error between original data $x$ and estimated data $\hat{x}_s$ via attack $a$.
\end{definition}



\subsection{Comparison with Other Defending Methods} \label{sec:exp}
\subsubsection{Defending against Feature Restoration Attack in HFL}

We evaluate FedAdOb's performance by comparing the averaged clients' accuracy with 5 baselines illustrated in Sect. \ref{sec:setup}. Fig. \ref{fig:star} compares FedAdOb with 4 defense methods in terms of their trade-offs between main task accuracy and data recovery error against WGI and WMI attacks. We have the following three observations: 
\begin{itemize}
    \item FedAdOb facilitates excellent trade-offs between main task accuracy and privacy protection. For instance, model accuracy degradations of FedAdOb are less than 2\% for all 6 scenarios shown in Fig. \ref{fig:star} while attackers fail to recover original data (with recovery errors more than 0.045 for WGI attacks and 0.1 for WMI attacks). In contrast, although it achieves high model accuracy, FedAVG suffers from significant privacy leakage.
    \item The trade-off curves of FedAdOb (dotted dashed line in blue) are near the optimal trade-off towards the top-right corner, and it outperforms all baseline defense methods by large margins.
    \item Among all baseline defense methods, InstaHide tends to sacrifice a great deal of model accuracy for high privacy-preserving capability. More specifically, for InstaHide, the best cases of performance degradation on CIFAR10 and CIFAR100 are 18.6\% and 10.66\%, respectively, while the worst cases reach 52.12\% and 16.65\%, respectively, when mixing up with 4 samples. On the other hand, DP and Sparsification can achieve high model performance but at the cost of high privacy leakage.
\end{itemize}  

In addition, we compare FedAdOb with SplitFed against BGI and BMI attacks. TABLE \ref{tab:blackbox_attack} reports the results, showing that FedAdOb outperforms SplitFed by a large data recovery error on MNIST, CIFAR10, and CIFAR100.

\begin{figure*}
\centering
	\centering
      		\begin{subfigure}{0.3\textwidth}
  		 	\includegraphics[width=1\textwidth]{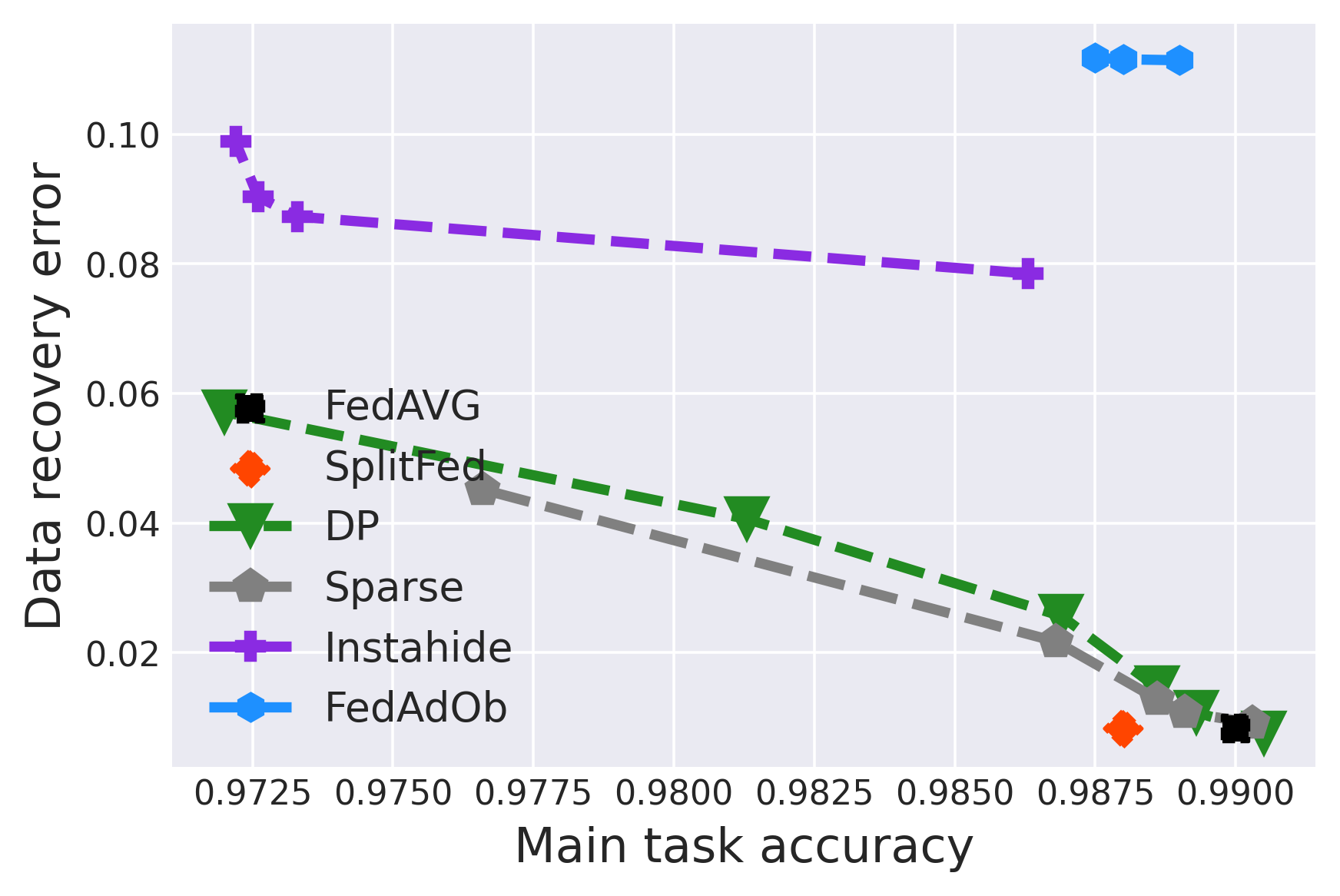}
      \subcaption{LeNet-MNIST}
    		\end{subfigure}
    	\begin{subfigure}{0.3\textwidth}
  		 	\includegraphics[width=1\textwidth]{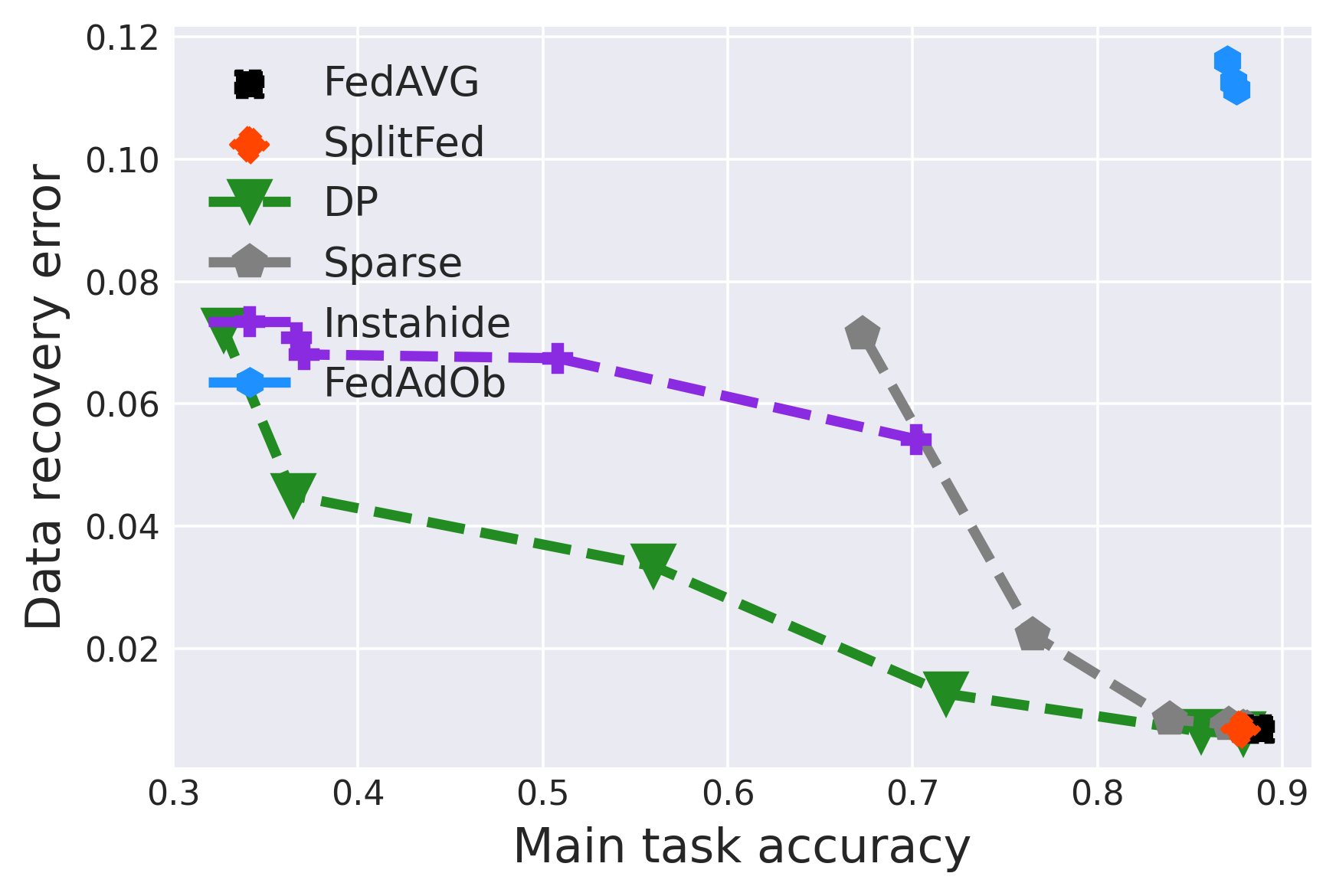}
            \subcaption{AlexNet-CIFAR10}
    		\end{subfigure}
   \begin{subfigure}{0.3\textwidth}
			\includegraphics[width=1\textwidth]{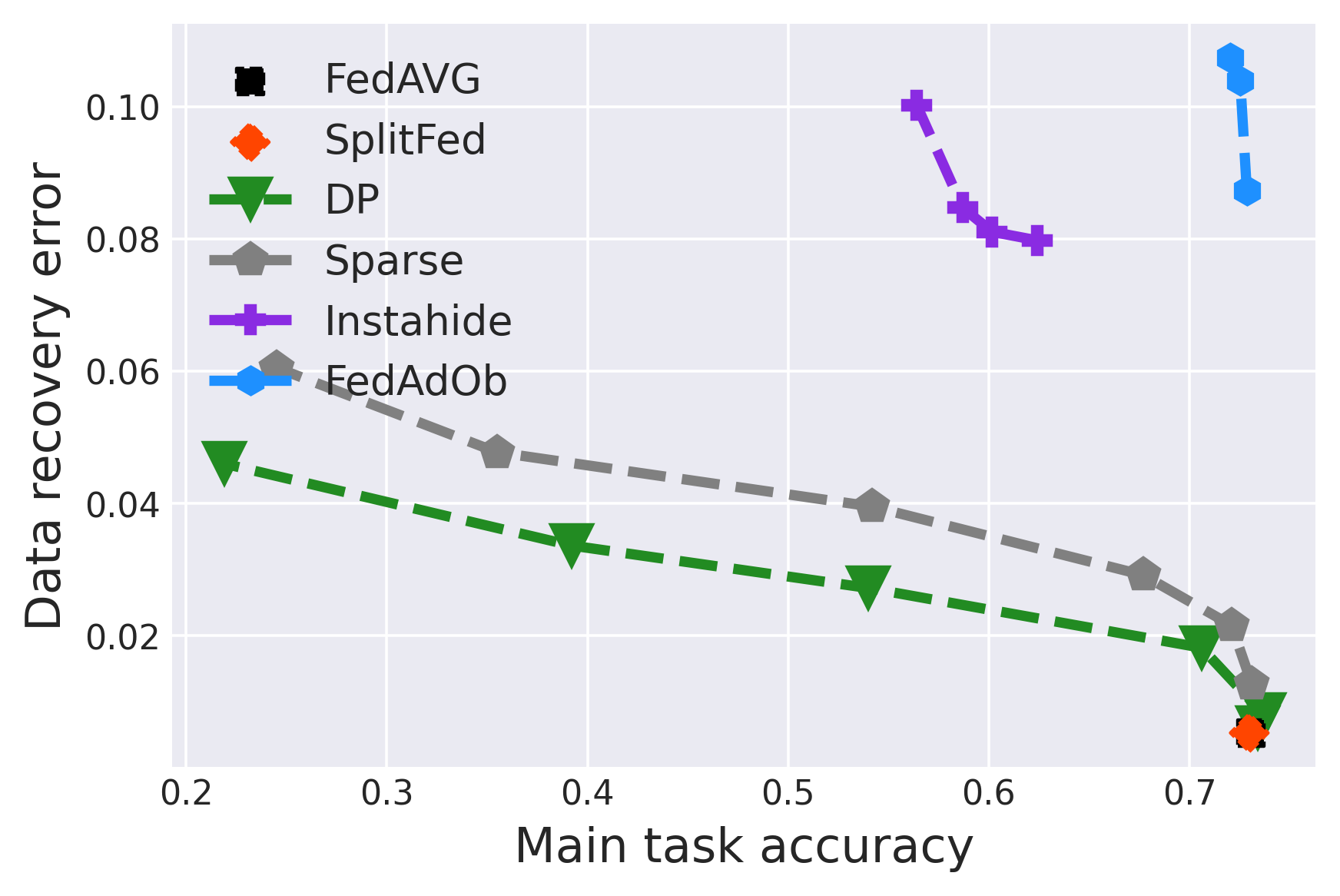}
         \subcaption{ResNet-CIFAR100}
		\end{subfigure}

      		\begin{subfigure}{0.3\textwidth}
  		 	\includegraphics[width=1\textwidth]{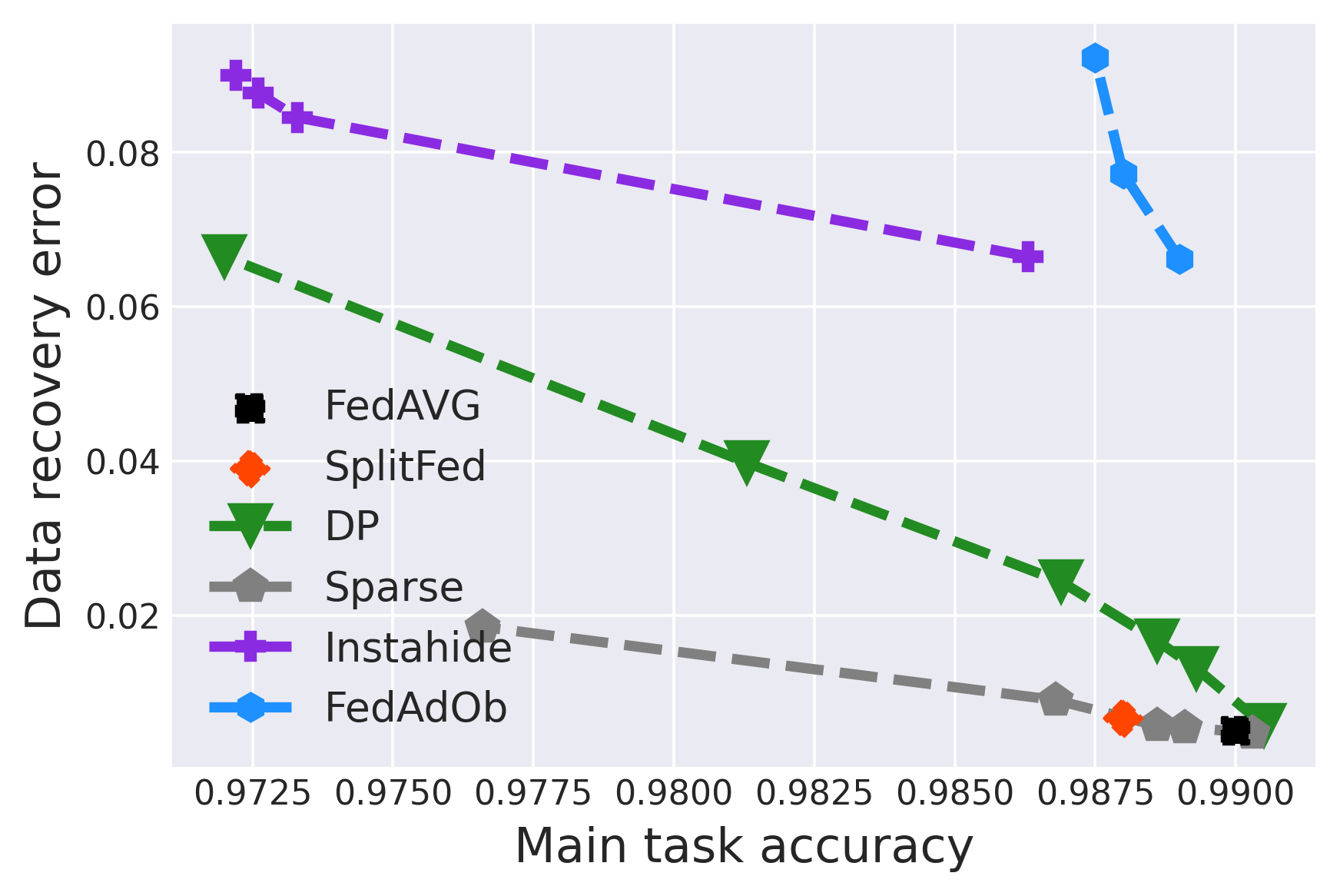}
      \subcaption{LeNet-MNIST}
    		\end{subfigure}
    	\begin{subfigure}{0.3\textwidth}
  		 	\includegraphics[width=1\textwidth]{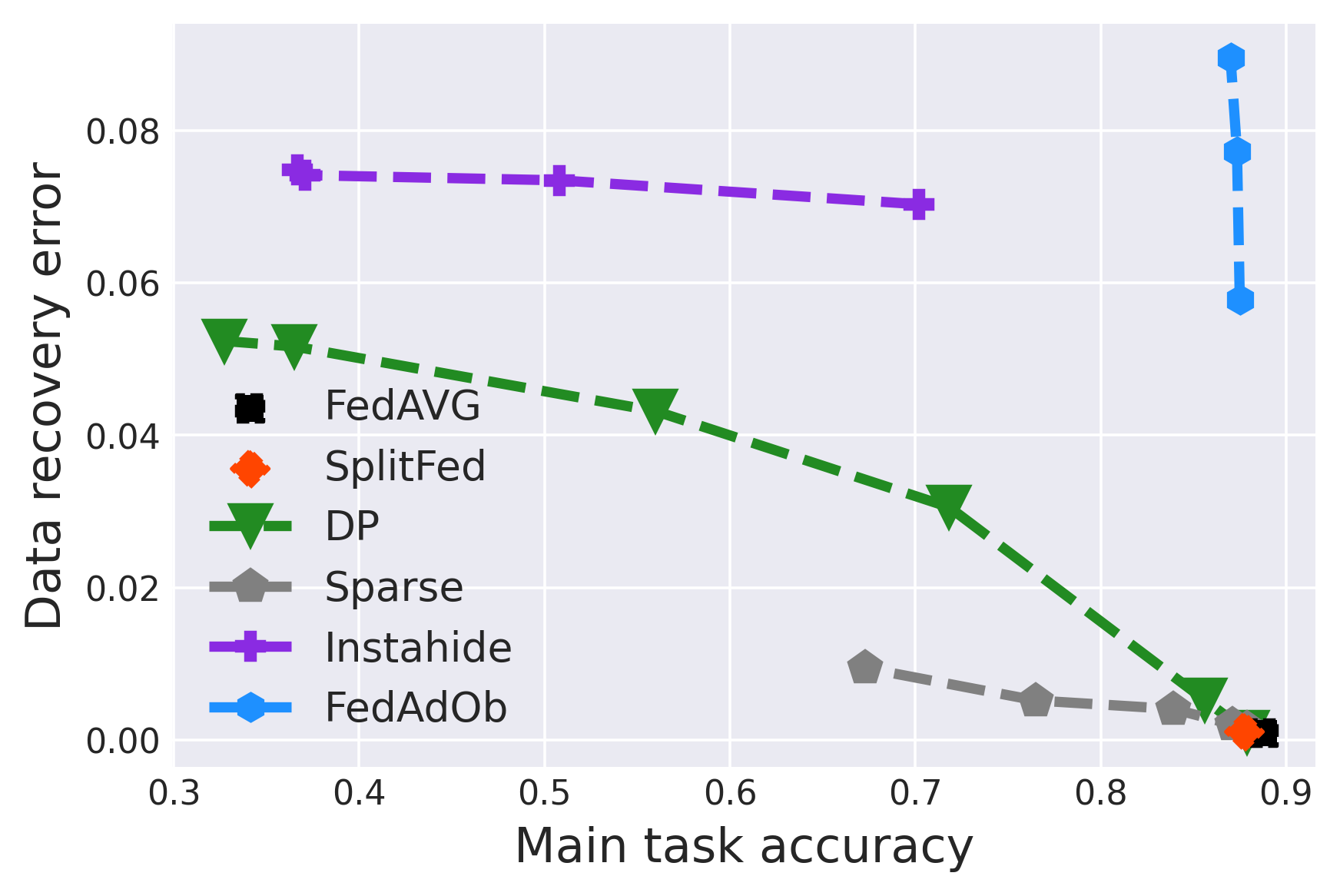}
            \subcaption{AlexNet-CIFAR10}
    		\end{subfigure}
   \begin{subfigure}{0.3\textwidth}
			\includegraphics[width=1\textwidth]{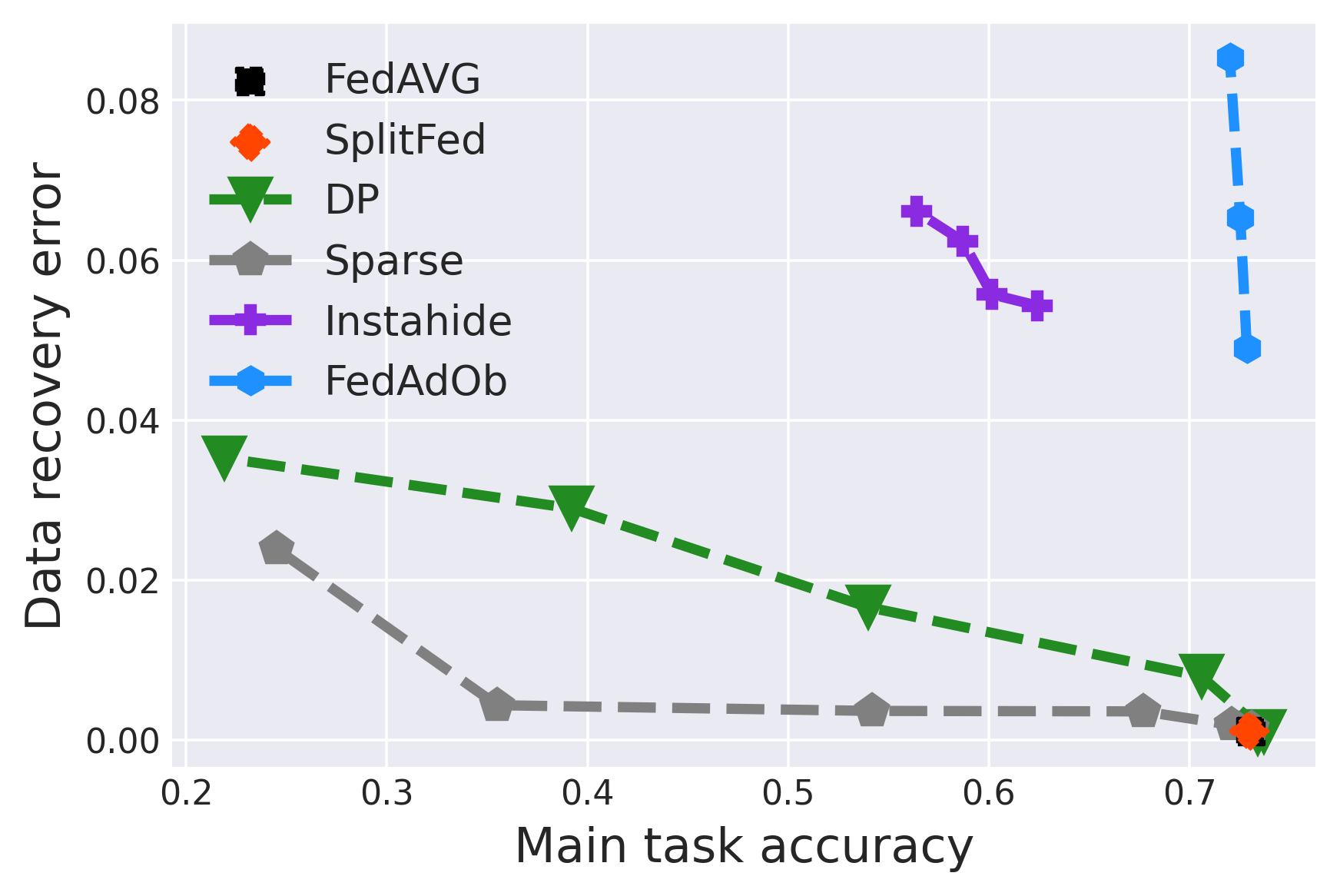}
         \subcaption{ResNet-CIFAR100}
		\end{subfigure}
\centering
\caption{\textbf{HFL tradeoff.} Comparison of different defense methods in terms of their trade-offs between main task accuracy and data recovery error against \textbf{WGI} attack \cite{zhu2019dlg} (the first line) and \textbf{WMI} attack \cite{he2019model} (the second line) on LeNet-MNIST, AlexNet-CIFAR10, and ResNet-CIFAR100, respectively. \textit{Trade-off curves near the top right corner are preferred to faraway ones.}}
\label{fig:star}
\end{figure*}

\begin{table}[!htbp]
\caption{Data recovery error of SplitFed and FedAdOb ($N=2$) under \textbf{BGI} and \textbf{BMI} attacks in \textbf{HFL} setting.}
\centering
\footnotesize
\begin{tabular}{c|c|c|c|c}
\hline
Method                    & Attack & MNIST & CIFAR10 & CIFAR100 \\ \hline
\multirow{2}{*}{SplitFed} 
                          & BGI       & 0.111$\pm$0.002 & 0.110$\pm$0.003  & 0.087$\pm$0.002   \\ \cline{2-5} 
                          & BMI       & 0.052$\pm$0.005 & 0.028$\pm$0.004  & 0.027$\pm$0.004   \\ \hline
\multirow{2}{*}{FedAdOb}    
                          & BGI       & 0.111$\pm$0.002 & 0.116$\pm$0.002  & 0.107$\pm$0.001   \\ \cline{2-5}
                          & BMI       & 0.071$\pm$0.003 & 0.057$\pm$0.004  & 0.055$\pm$0.003   \\ \hline
\end{tabular}
\label{tab:blackbox_attack}
\end{table}

\subsubsection{Defending against the Feature Reconstruction Attack in VFL}

The trade-off between feature recovery error (y-axis) and main task accuracy (x-axis) of FedAdOb, along with baseline methods, is compared in the first and second columns of Fig. \ref{fig:tradeoff_result} against BMI and WMI attacks on four different models. We further compare the recovered images against WMI on Fig. \ref{fig:vis-whitebox}. The following observations are made:
\begin{itemize}
    \item Differential Privacy (DP) and Sparsification methods exhibit a trade-off between high main task performance and low feature recovery error (indicating low privacy leakage). For instance, ResNet-CIFAR10 achieves a main task performance of $\geq 0.90$ and a feature recovery error as low as $\leq 0.06$ with DP and Sparsification. On the other hand, these methods can achieve a feature recovery error of $\geq 0.11$ but obtain a main task performance of $\leq 0.70$ for ResNet-CIFAR10.
    \item InstaHide is generally ineffective against BMI and WMI attacks. Even when more data is mixed in, InstaHide still results in a relatively small feature recovery error but significantly degrades the main task performance.
    \item The trade-off curves of FedAdOb are located near the top-right corner under both attacks for all models. This indicates that FedAdOb performs best in preserving feature privacy while maintaining the model performance. For example, under MI and CAFE attacks on ResNet-CIFAR10, FedAdOb achieves a main task accuracy of $\geq 0.91$ and a feature recovery error of $\geq 0.12$. The results presented in Table 1 also demonstrate that FedAdOb offers the best trade-off between privacy and performance under BMI and WMI attacks. 
    \item The reconstructed images under the WMI attack appear as random noise, indicating the successful mitigation of the WMI attack by FedAdOb. Furthermore, the model performance of FedAdOb is nearly indistinguishable from the original model (without defense) across all datasets, as shown in Fig. \ref{fig:tradeoff_result}. This demonstrates the superior trade-off achieved by FedAdOb, in stark contrast to existing methods. Notably, approaches such as Differential Privacy (DP) and Sparsification, even at high protection levels ($r5$ and $r7$), result in a significant deterioration of model performance compared to the original model.
\end{itemize}

\subsubsection{Defending against the Label Inference Attack in VFL}

\begin{figure*}[!h]
	\centering
      \begin{subfigure}{0.3\textwidth}
  		\includegraphics[width=1\textwidth]{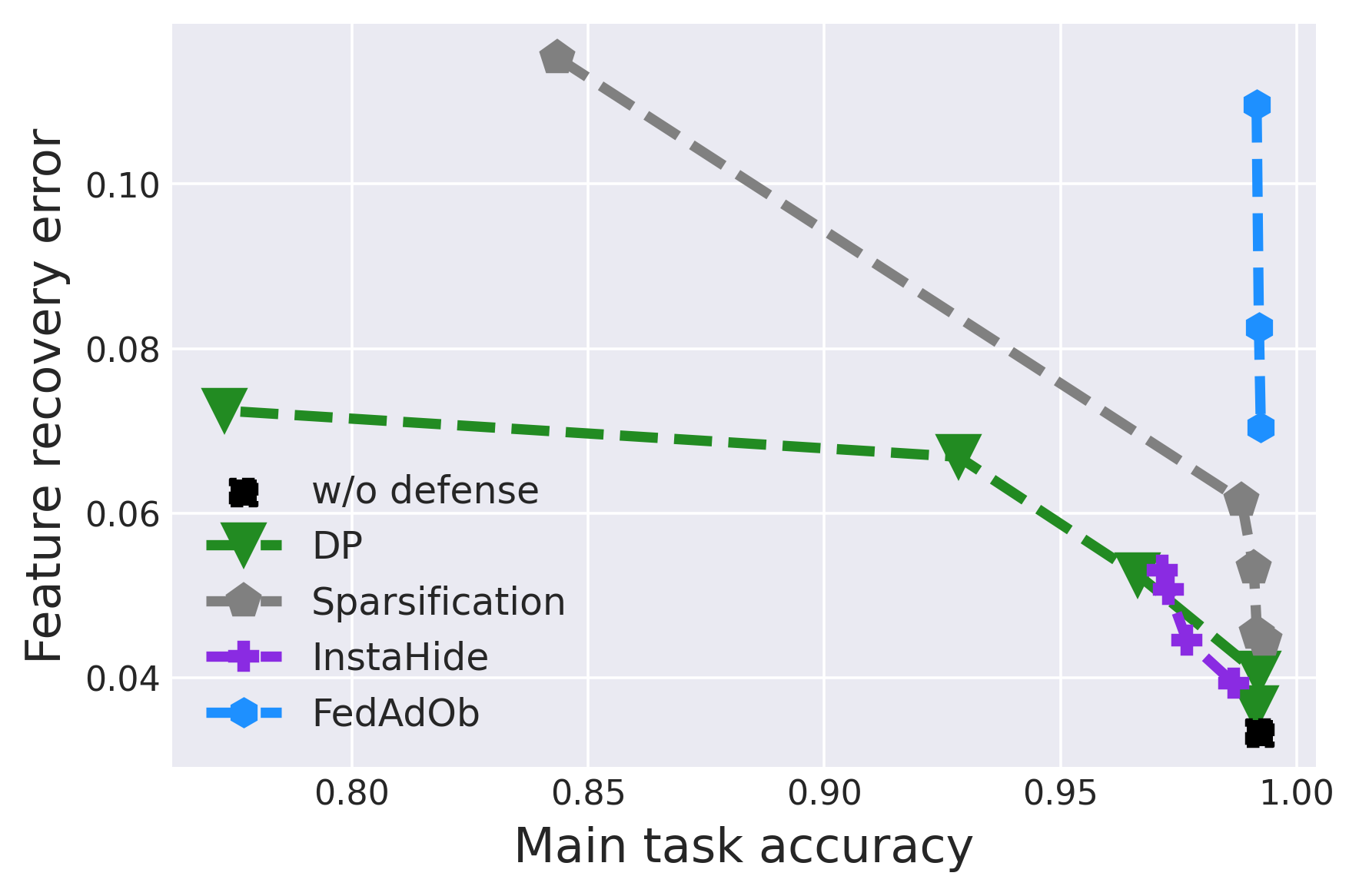}
    \subcaption{LeNet-MNIST}
    		\end{subfigure}
    	\begin{subfigure}{0.3\textwidth}
  		 \includegraphics[width=1\textwidth]{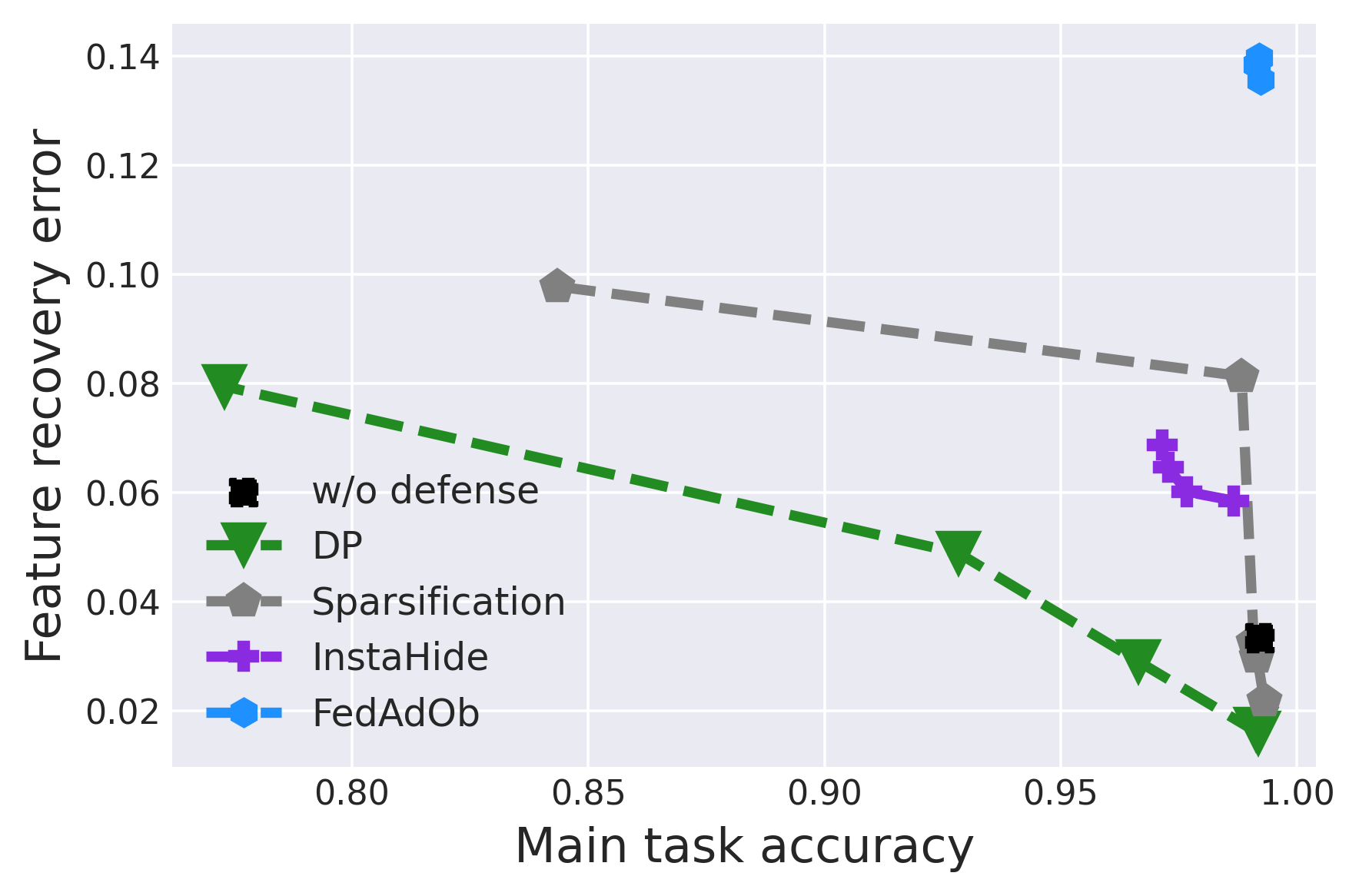}    \subcaption{LeNet-MNIST}
    		\end{subfigure}
   \begin{subfigure}{0.3\textwidth}
		\includegraphics[width=1\textwidth]{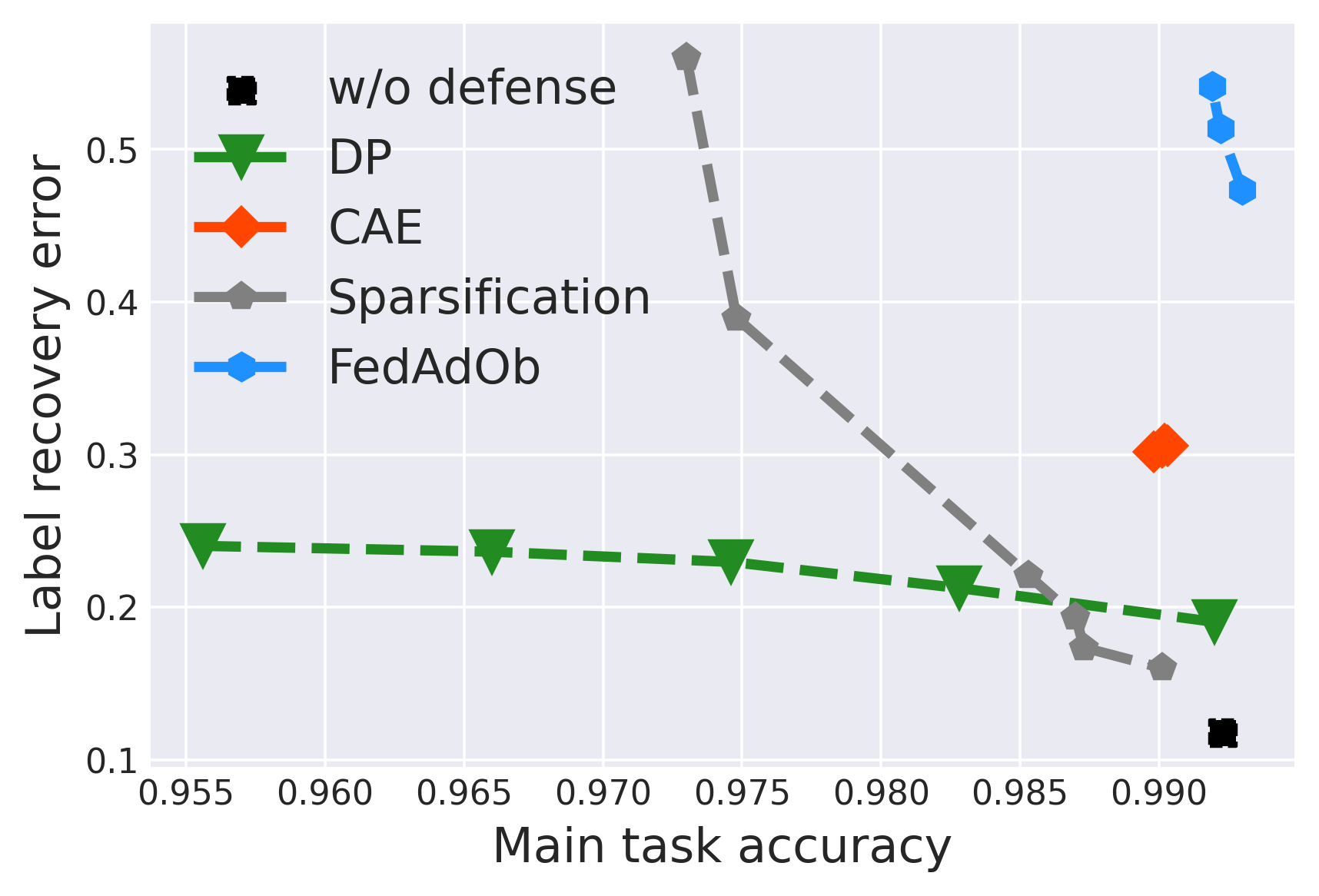}    \subcaption{LeNet-MNIST}
		\end{subfigure}

      \begin{subfigure}{0.3\textwidth}
  		 \includegraphics[width=1\textwidth]{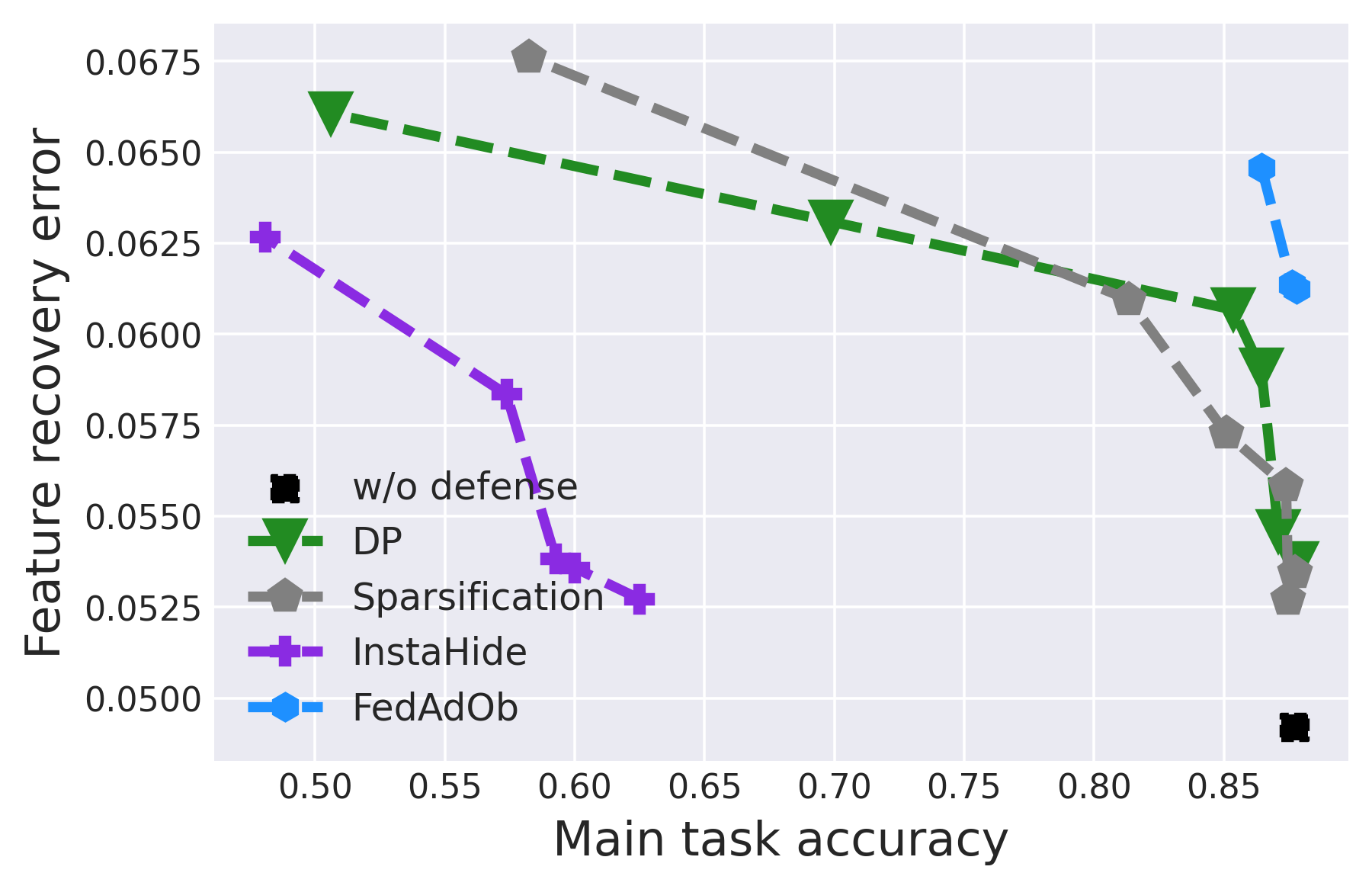}  \subcaption{AlexNet-CIFAR10}
    		\end{subfigure}
    	\begin{subfigure}{0.3\textwidth}
  		 \includegraphics[width=1\textwidth]{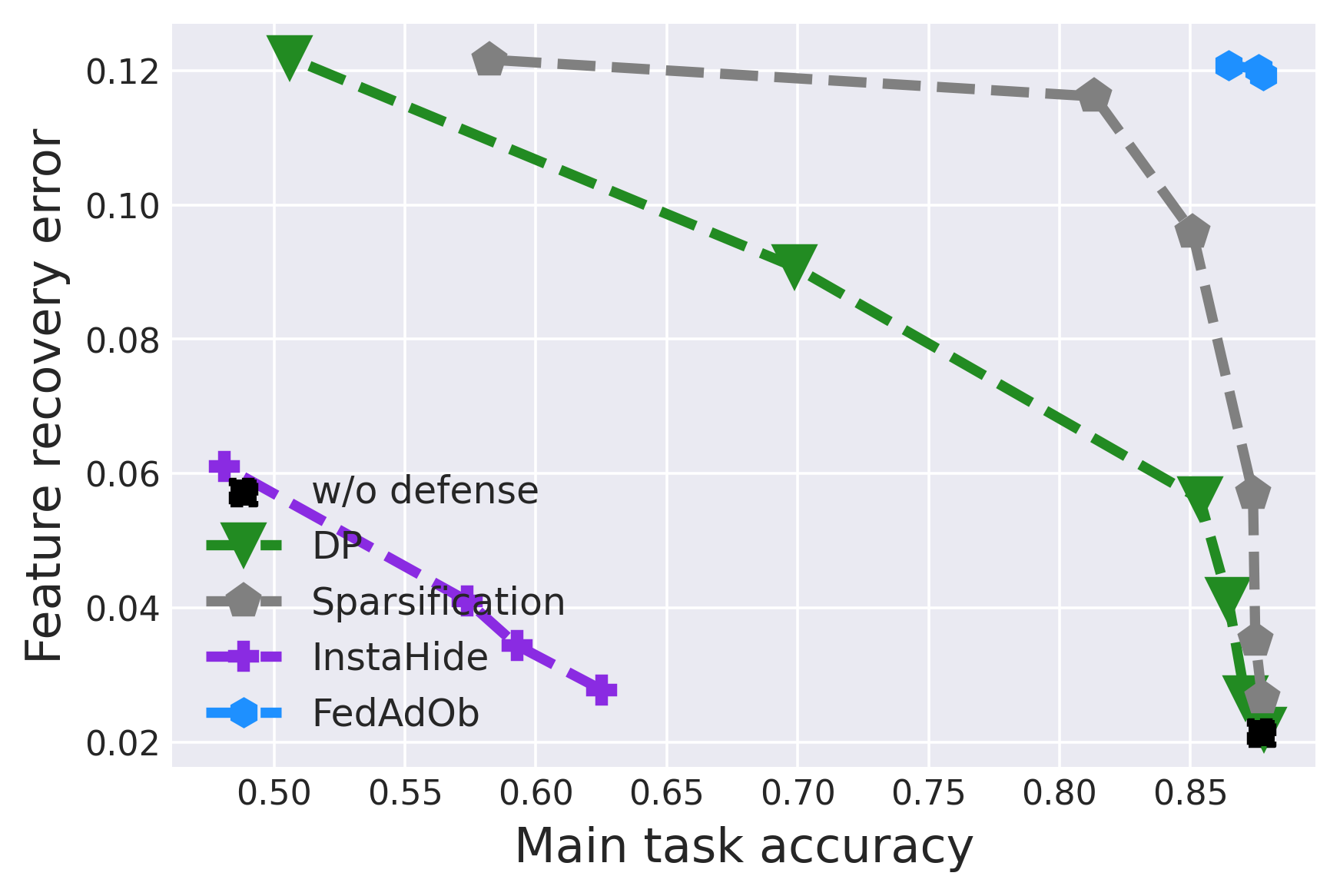}\subcaption{AlexNet-CIFAR10}
    		\end{subfigure}
   \begin{subfigure}{0.3\textwidth}
		\includegraphics[width=1\textwidth]{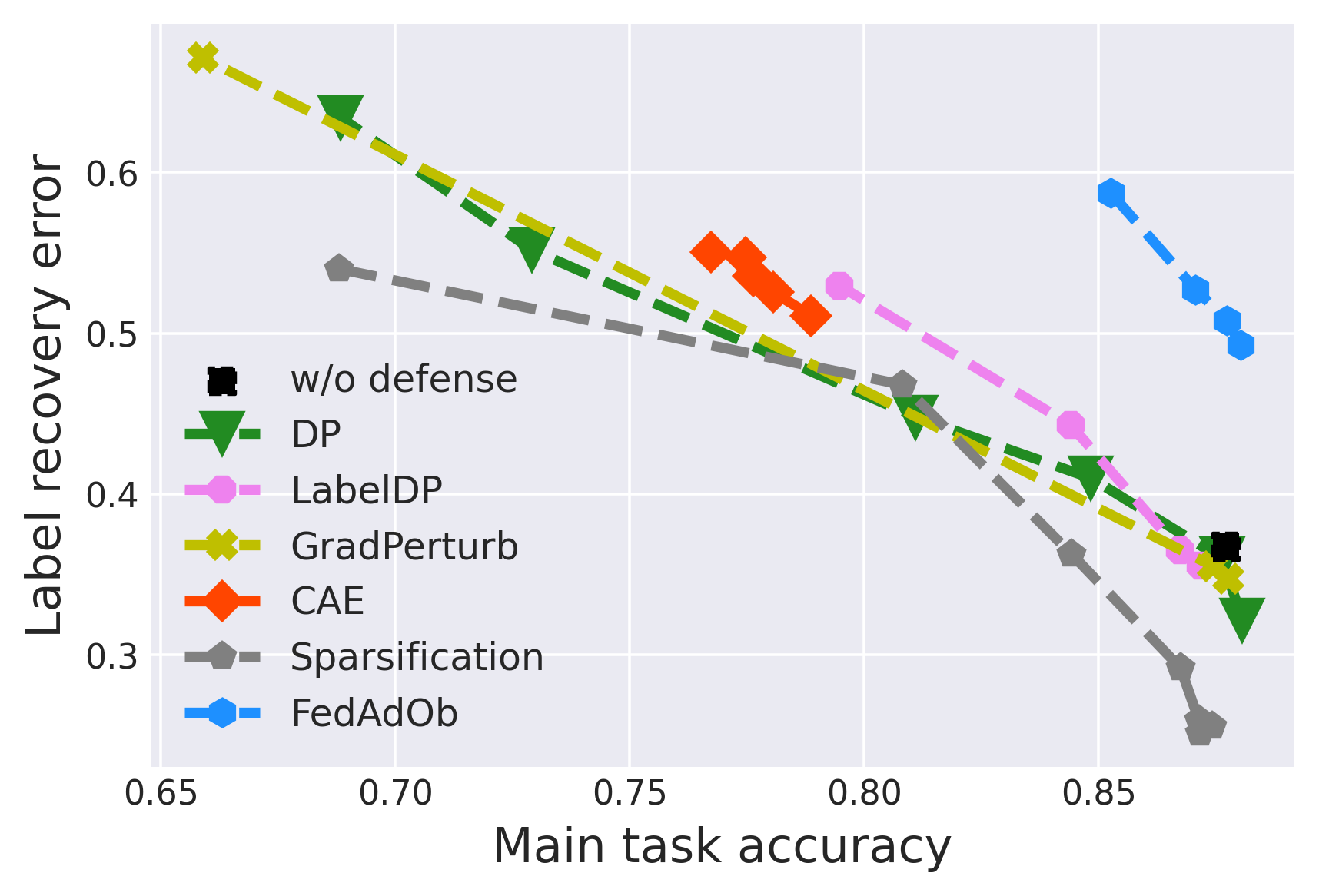}\subcaption{AlexNet-CIFAR10}
		\end{subfigure}

      \begin{subfigure}{0.3\textwidth}
  		 \includegraphics[width=1\textwidth]{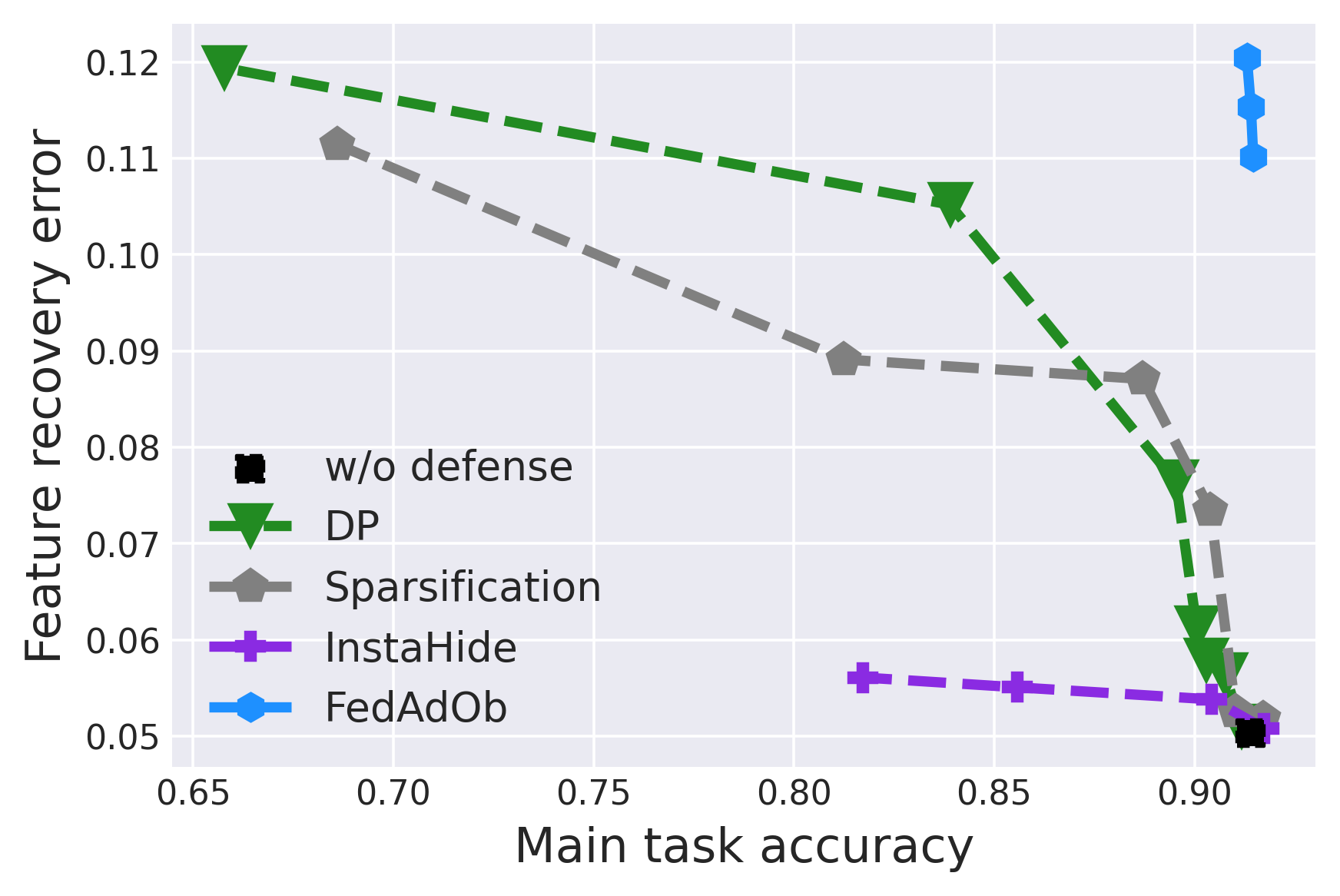}\subcaption{ResNet-CIFAR10}
    		\end{subfigure}
    	\begin{subfigure}{0.3\textwidth}
  		 \includegraphics[width=1\textwidth]{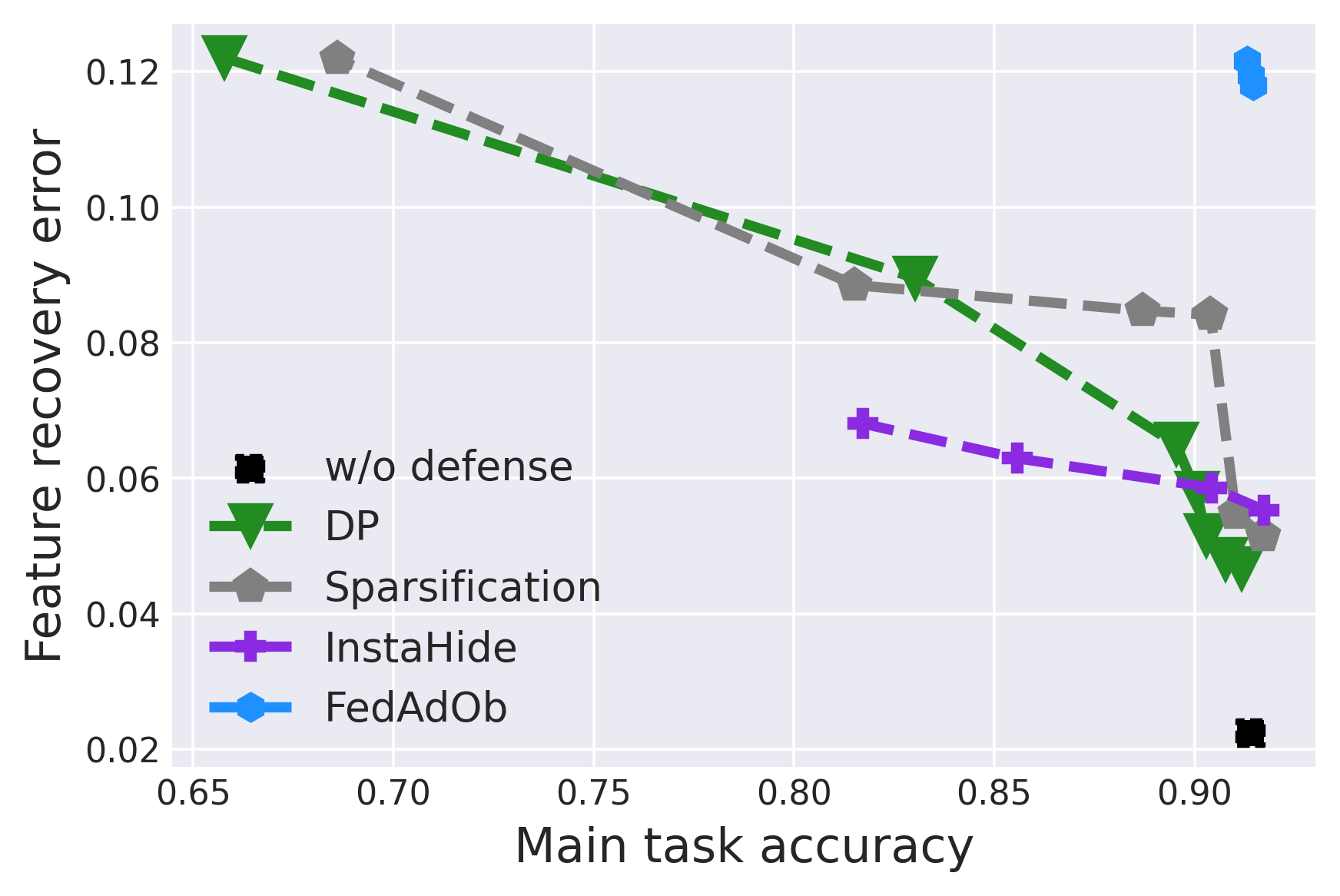}\subcaption{ResNet-CIFAR10}
    		\end{subfigure}
   \begin{subfigure}{0.3\textwidth}
		\includegraphics[width=1\textwidth]{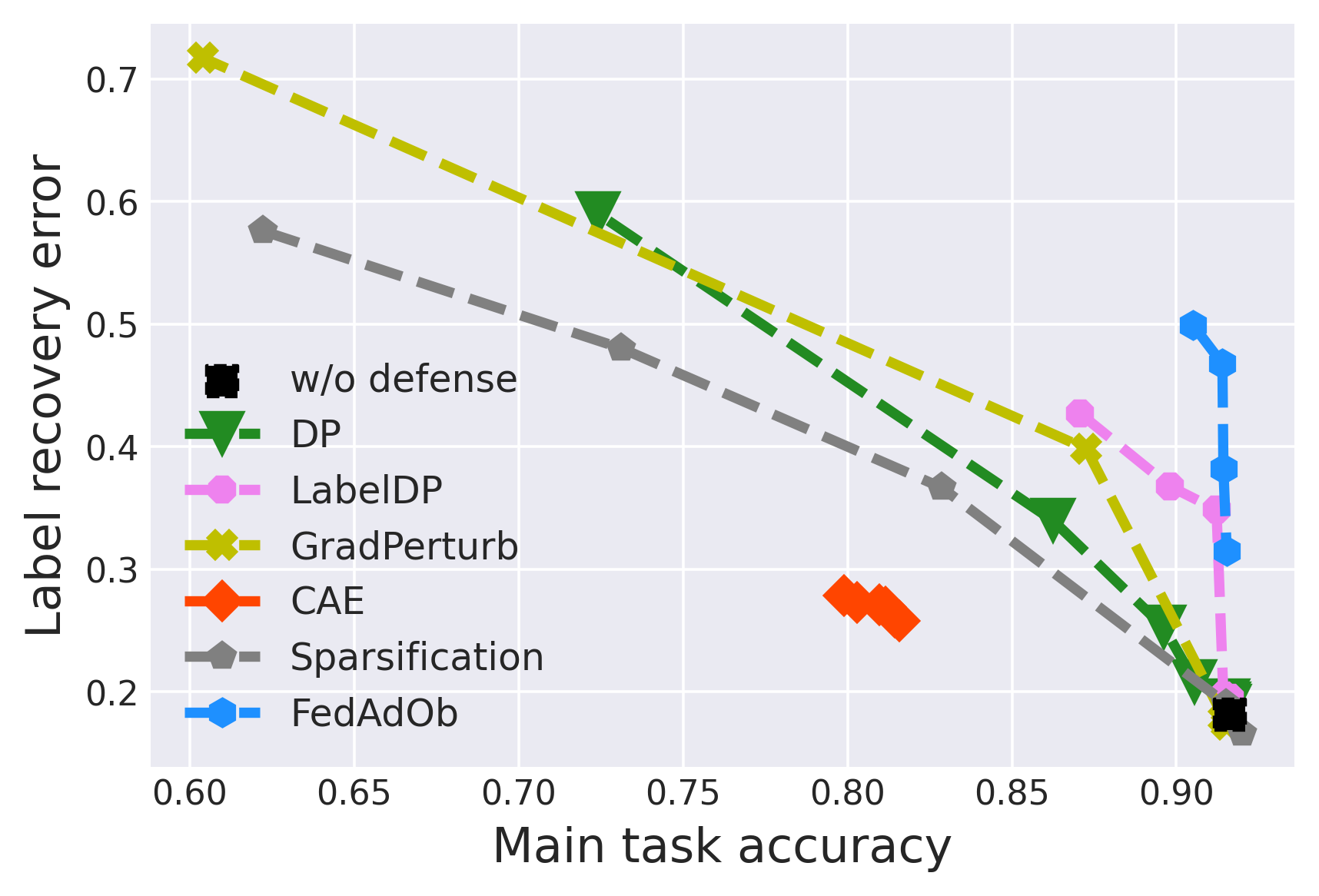}\subcaption{ResNet-CIFAR10}
		\end{subfigure}

     \begin{subfigure}{0.3\textwidth}
  		 \includegraphics[width=1\textwidth]{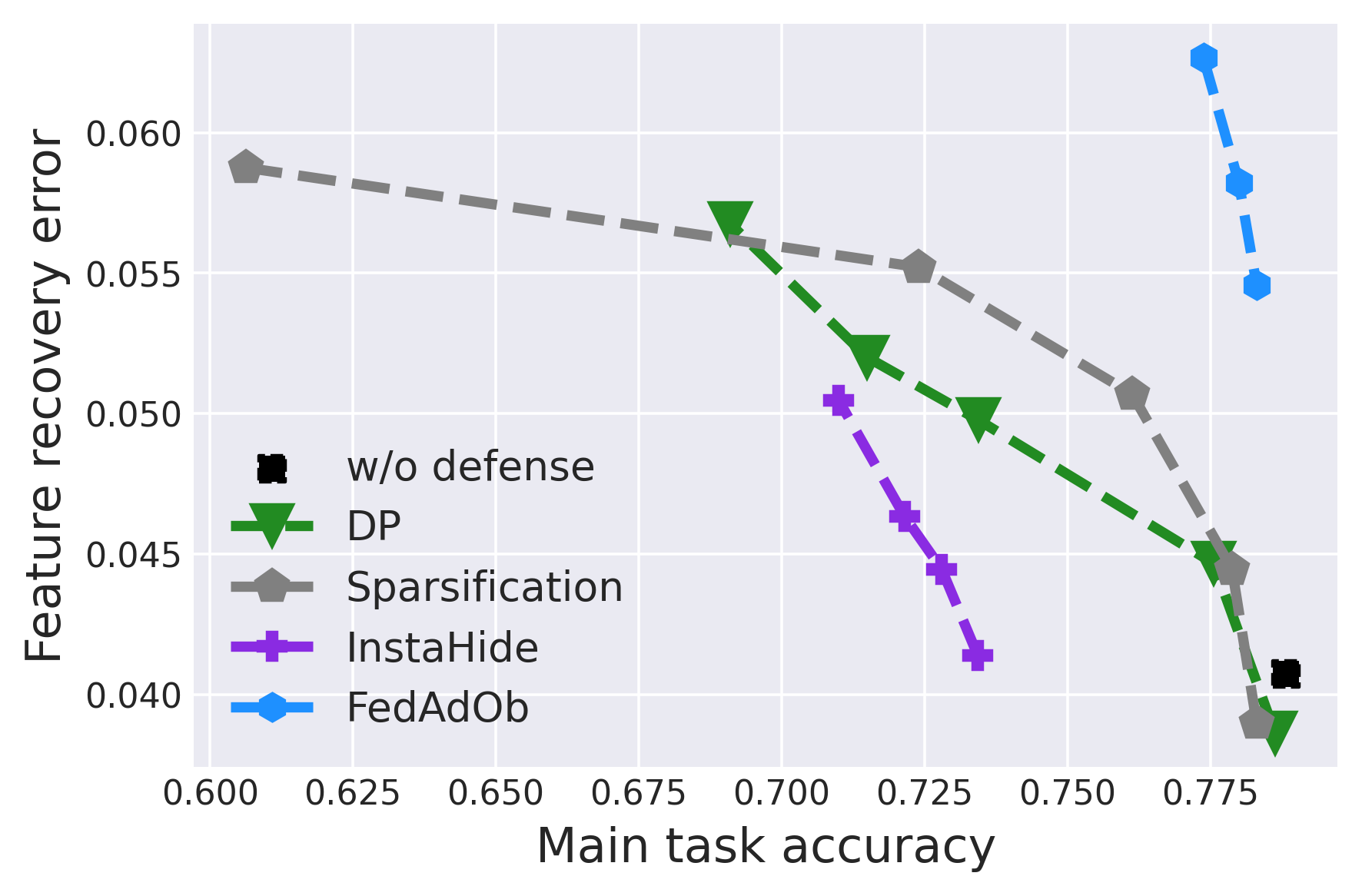}
      \subcaption{LeNet-ModelNet}
    		\end{subfigure}
    	\begin{subfigure}{0.3\textwidth}
  		 \includegraphics[width=1\textwidth]{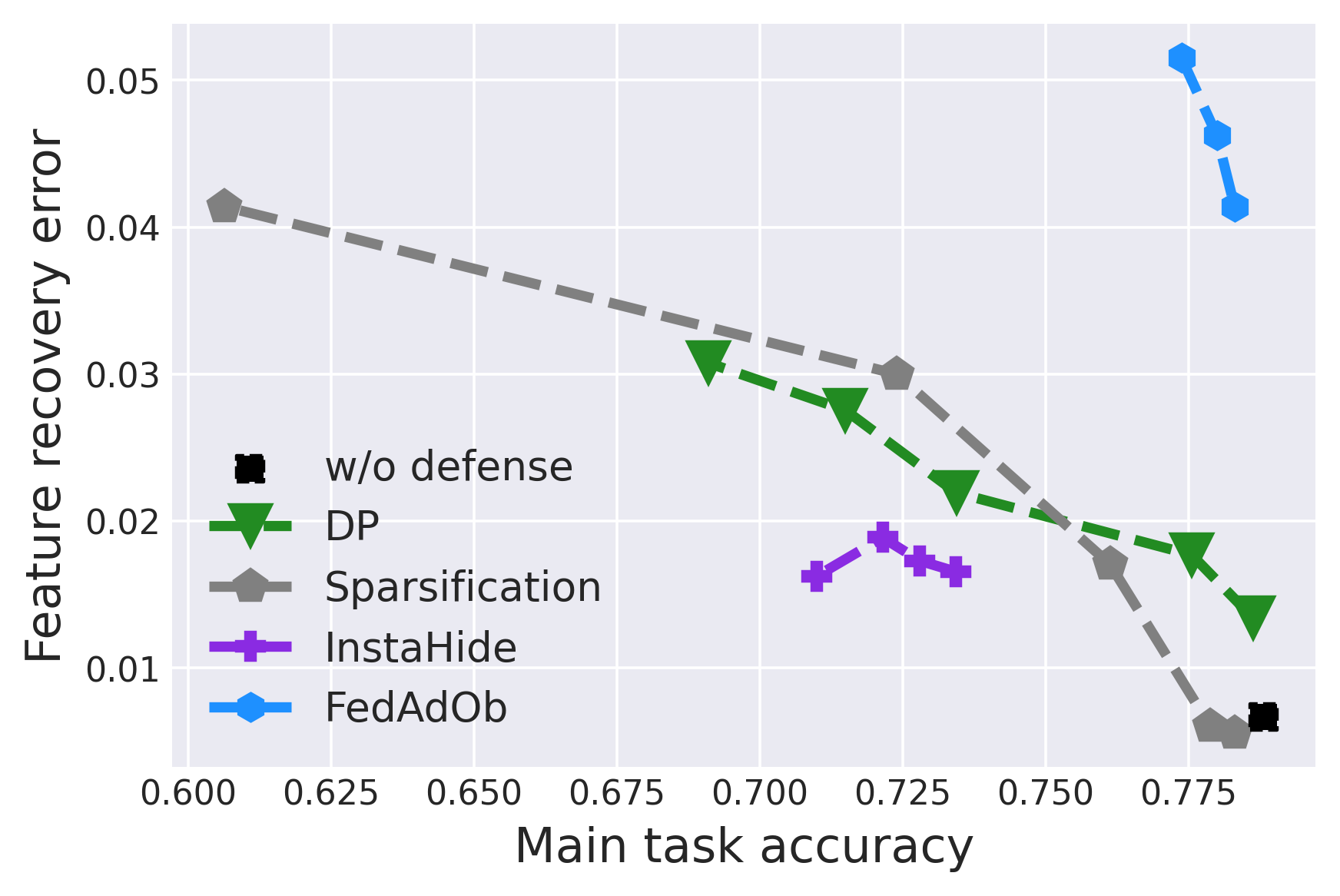}
            \subcaption{LeNet-ModelNet}
    		\end{subfigure}
   \begin{subfigure}{0.3\textwidth}
		\includegraphics[width=1\textwidth]{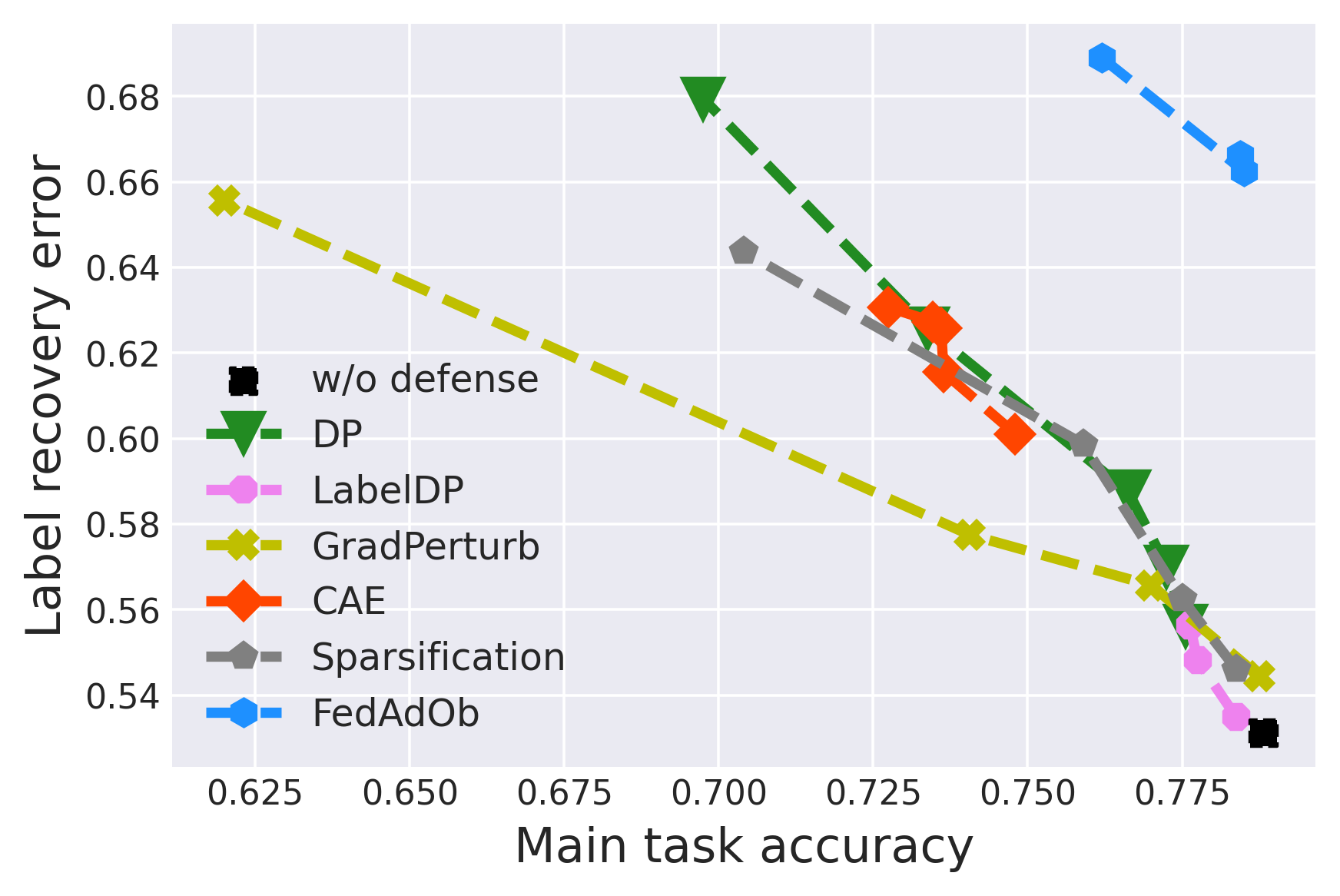}
         \subcaption{LeNet-ModelNet}
		\end{subfigure}
 \caption{\textbf{VFL tradeoff.} Comparison of different defense methods in terms of their trade-offs between main task accuracy and data (feature or label) recovery error against three attacks on LeNet-MNIST, AlexNet-CIFAR10, ResNet-CIFAR10, respectively and LeNet-ModelNet. \textbf{BMI} (the first column) and \textbf{WMI} (the second column) are feature reconstruction attacks, whereas \textbf{Passive Model Completion} (the third column) is a label inference attack. \textit{A better trade-off curve should be more toward the top-right corner of each figure}.}
	\label{fig:tradeoff_result}
\end{figure*}


\begin{table*}[!ht]
\caption{The Calibrated averaged performance (CAP) for different defense mechanisms against BMI, WMI, and PMC attacks in \textbf{VFL}.}
\center
\footnotesize
\setlength{\tabcolsep}{1.5mm}
\begin{tabular}{c|c||c|c|c|c|c|c|c|c}
\hline
\multicolumn{2}{c||}{\diagbox[dir=SE]{Attack}{Defense}} & w/o defense & CAE             &GradPerturb  & LabelDP       & Sparsification  & DP              & InstaHide       & FedAdOb       \\ \hline \hline
\multirow{3}{*}{CAFE}                    &LeNet-MNIST        & 0.033       & \textemdash     & \textemdash & \textemdash   & 0.049$\pm$0.026 & 0.033$\pm$0.018 & 0.061$\pm$0.004 & \textbf{0.137$\pm$0.002} \\ 
                                         & AlexNet-CIFAR10      & 0.019       & \textemdash     & \textemdash & \textemdash   & 0.058$\pm$0.026 & 0.042$\pm$0.017 & 0.023$\pm$0.004 & \textbf{0.105$\pm$0.001} \\
                                         & ResNet-CIFAR10       & 0.021       & \textemdash     & \textemdash & \textemdash   & 0.067$\pm$0.014 & 0.057$\pm$0.014 & 0.053$\pm$0.002 & \textbf{0.109$\pm$0.001} \\ 
                                         & LeNet-ModelNet       & 0.005     & \textemdash     & \textemdash & \textemdash   & 0.014$\pm$0.009 & 0.016$\pm$0.004 & 0.012$\pm$0.001 & \textbf{0.036$\pm$0.003} \\\hline
\multirow{3}{*}{MI}                      &LeNet-MNIST        & 0.033       & \textemdash     & \textemdash & \textemdash   & 0.060$\pm$0.020 & 0.049$\pm$0.010 & 0.046$\pm$0.005 & \textbf{0.087$\pm$0.001} \\
                                         & AlexNet-CIFAR10      & 0.043       & \textemdash     & \textemdash & \textemdash   & 0.047$\pm$0.003 & 0.046$\pm$0.006 & 0.032$\pm$0.001 & \textbf{0.054$\pm$0.001} \\
                                         & ResNet-CIFAR10       & 0.046       & \textemdash     & \textemdash & \textemdash   & 0.065$\pm$0.012 & 0.063$\pm$0.015 & 0.047$\pm$0.001 & \textbf{0.105$\pm$0.004} \\ 
                                         & LeNet-ModelNet       &  0.032     & \textemdash     & \textemdash & \textemdash   & 0.035$\pm$0.003 & 0.036$\pm$0.003 & 0.033$\pm$0.002 & \textbf{0.046$\pm$0.002} \\\hline
\multirow{3}{*}{PMC}                      &LeNet-MNIST      & 0.117        & 0.302$\pm$0.002 & 0.195$\pm$0.008 & 0.314$\pm$0.103 & 0.277$\pm$0.140  & 0.216$\pm$0.015 & \textemdash  & \textbf{0.506$\pm$0.028} \\
                                         & AlexNet-CIFAR10     & 0.322        & 0.415$\pm$0.008 & 0.353$\pm$0.063 & 0.355$\pm$0.045 & 0.283$\pm$0.065  & 0.358$\pm$0.051 & \textemdash  & \textbf{0.460$\pm$0.025} \\
                                         & ResNet-CIFAR10      & 0.166        & 0.217$\pm$0.004 & 0.276$\pm$0.119 & 0.276$\pm$0.081 & 0.268$\pm$0.088  & 0.237$\pm$0.087 & \textemdash  & \textbf{0.379$\pm$0.065}  \\ 
                                         & LeNet-ModelNet      & 0.419       & 0.457$\pm$0.004    & 0.425$\pm$0.011 & 0.426$\pm$0.005 & 0.443$\pm$0.011 & 0.451$\pm$0.015 & \textemdash & \textbf{0.523$\pm$0.002} \\\hline
\end{tabular}
\label{tab:cap}
\end{table*}

The third column of Fig. \ref{fig:tradeoff_result} (g)-(i) compares the trade-offs between label recovery error (y-axis) and main task accuracy (x-axis) of FedAdOb with those of baselines in the face of the PMC attack on four models. It is observed DP and its variants (GradPerturb, Label DP), Sparsification, and CAE fail to achieve the goal of obtaining a low level of privacy leakage while maintaining model performance, whereas FedAdOb is more toward the top-right corner, indicating that FedAdOb has a better trade-off between privacy and performance. TABLE \ref{tab:cap} reinforces the observation that FedAdOb achieves the best trade-off between privacy and performance under PMC attack.

Furthermore, in the case of NS and DS attacks, which are specifically applied in binary classification problems, we compare the performance of FedAdOb with four other baselines, as depicted in Fig. \ref{fig:ds-ns}. The results demonstrate that FedAdOb achieves comparable performance to Marvel and outperforms the other three methods.


\begin{figure*}[!ht]
	\centering
 \captionsetup[subfigure]{labelformat=empty}
	{$r1$\begin{subfigure}{0.23\textwidth}
\includegraphics[width=1\textwidth]{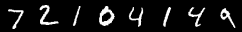}
		\end{subfigure}}
    	{
     \begin{subfigure}{0.23\textwidth}	 	\includegraphics[width=1\textwidth]{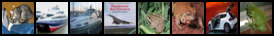}
    		\end{subfigure}}
    	{
    		 \begin{subfigure}{0.23\textwidth}
	\includegraphics[width=1\textwidth]{imgs/exp/reconstruct_whitebox_new/cifar_8_oriimg.png}
    		\end{subfigure}}
          	{
    		\begin{subfigure}{0.23\textwidth}
	\includegraphics[width=1\textwidth]{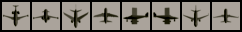}
    		\end{subfigure}}
\vspace{5pt}
      
     {$r2$\begin{subfigure}{0.23\textwidth}
\includegraphics[width=1\textwidth]{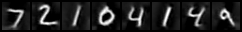}
		\end{subfigure}}
      {
		\begin{subfigure}{0.23\textwidth}
\includegraphics[width=1\textwidth]{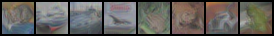}
		\end{subfigure}}
      {
		\begin{subfigure}{0.23\textwidth}
\includegraphics[width=1\textwidth]{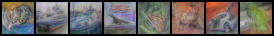}
		\end{subfigure}}
        {
		\begin{subfigure}{0.23\textwidth}
\includegraphics[width=1\textwidth]{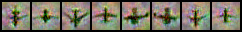}
		\end{subfigure}}
\vspace{5pt}

      {$r3$\begin{subfigure}{0.23\textwidth}
\includegraphics[width=1\textwidth]{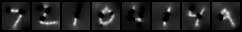}
		\end{subfigure}}
      {
		\begin{subfigure}{0.23\textwidth}
\includegraphics[width=1\textwidth]{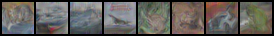}
		\end{subfigure}}
      {
		\begin{subfigure}{0.23\textwidth}
\includegraphics[width=1\textwidth]{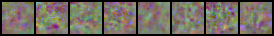}
		\end{subfigure}}
       {
		\begin{subfigure}{0.23\textwidth}
\includegraphics[width=1\textwidth]{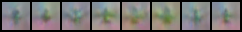}
		\end{subfigure}}
\vspace{5pt}
 
      {$r4$\begin{subfigure}{0.23\textwidth}
\includegraphics[width=1\textwidth]{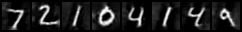}
		\end{subfigure}}
      {
		\begin{subfigure}{0.23\textwidth}
\includegraphics[width=1\textwidth]{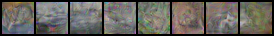}
		\end{subfigure}}
      {
		\begin{subfigure}{0.23\textwidth}
\includegraphics[width=1\textwidth]{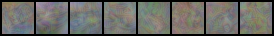}
		\end{subfigure}}
       {
		\begin{subfigure}{0.23\textwidth}
\includegraphics[width=1\textwidth]{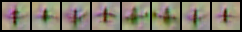}
		\end{subfigure}}
\vspace{5pt}

      {$r5$\begin{subfigure}{0.23\textwidth}
\includegraphics[width=1\textwidth]{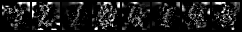}
		\end{subfigure}}
      {
		\begin{subfigure}{0.23\textwidth}
\includegraphics[width=1\textwidth]{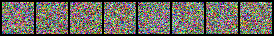}
		\end{subfigure}}
      {
		\begin{subfigure}{0.23\textwidth}
\includegraphics[width=1\textwidth]{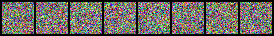}
		\end{subfigure}}
       {
		\begin{subfigure}{0.23\textwidth}
\includegraphics[width=1\textwidth]{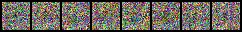}
		\end{subfigure}}
\vspace{5pt}

      {$r6$\begin{subfigure}{0.23\textwidth}
\includegraphics[width=1\textwidth]{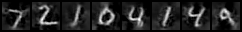}
		\end{subfigure}}
      {
		\begin{subfigure}{0.23\textwidth}
\includegraphics[width=1\textwidth]{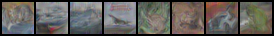}
		\end{subfigure}}
      {
		\begin{subfigure}{0.23\textwidth}
\includegraphics[width=1\textwidth]{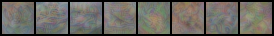}
		\end{subfigure}}
       {
		\begin{subfigure}{0.23\textwidth}
\includegraphics[width=1\textwidth]{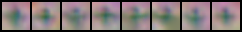}
		\end{subfigure}}
\vspace{5pt}

      {$r7$\begin{subfigure}{0.23\textwidth}
\includegraphics[width=1\textwidth]{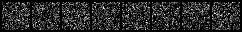}
		\end{subfigure}}
      {
		\begin{subfigure}{0.23\textwidth}
\includegraphics[width=1\textwidth]{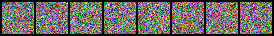}
		\end{subfigure}}
      {
		\begin{subfigure}{0.23\textwidth}
\includegraphics[width=1\textwidth]{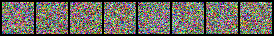}
		\end{subfigure}}
       {
		\begin{subfigure}{0.23\textwidth}
\includegraphics[width=1\textwidth]{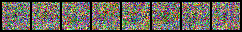}
		\end{subfigure}}
\vspace{5pt}

       {$r8$\begin{subfigure}{0.23\textwidth}
\includegraphics[width=1\textwidth]{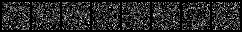}
		\end{subfigure}}
      {
		\begin{subfigure}{0.23\textwidth}
\includegraphics[width=1\textwidth]{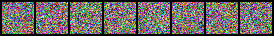}
		\end{subfigure}}
      {
		\begin{subfigure}{0.23\textwidth}
\includegraphics[width=1\textwidth]{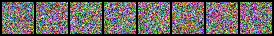}
		\end{subfigure}}
       {
		\begin{subfigure}{0.23\textwidth}
\includegraphics[width=1\textwidth]{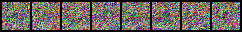}
		\end{subfigure}}

{
\begin{subfigure}{0.23\textwidth}
\subcaption*{LeNet-MNIST}
		\end{subfigure}
	}
      {
		\begin{subfigure}{0.23\textwidth}
\subcaption*{AlexNet-CIFAR10}
		\end{subfigure}
	}
      {
		\begin{subfigure}{0.23\textwidth}
\subcaption*{ResNet-CIFAR10}
		\end{subfigure}
	}
       {
		\begin{subfigure}{0.23\textwidth}
\subcaption*{LeNet-ModelNet}
		\end{subfigure}
	}
	\caption{Original images and images reconstructed by CAFE attack in \textbf{VFL} for different defense mechanisms on LeNet-MNIST, AlexNet-CIFAR10, ResNet-CIFAR10 and LeNet-ModelNet respectively. From top to bottom, a row represents original image ($r1$), no defense ($r2$), InstaHide ($r3$), DP with noise level $0.2$ ($r4$) and $2$ ($r5$), Sparsification with sparsification level $0.5$ ($r6$) and $0.05$ ($r7$), and FedAdOb ($r8$).}
	\label{fig:vis-whitebox}
\end{figure*}

\begin{figure}
\centering
	\centering
      		\begin{subfigure}{0.22\textwidth}
  		 	\includegraphics[width=1\textwidth]{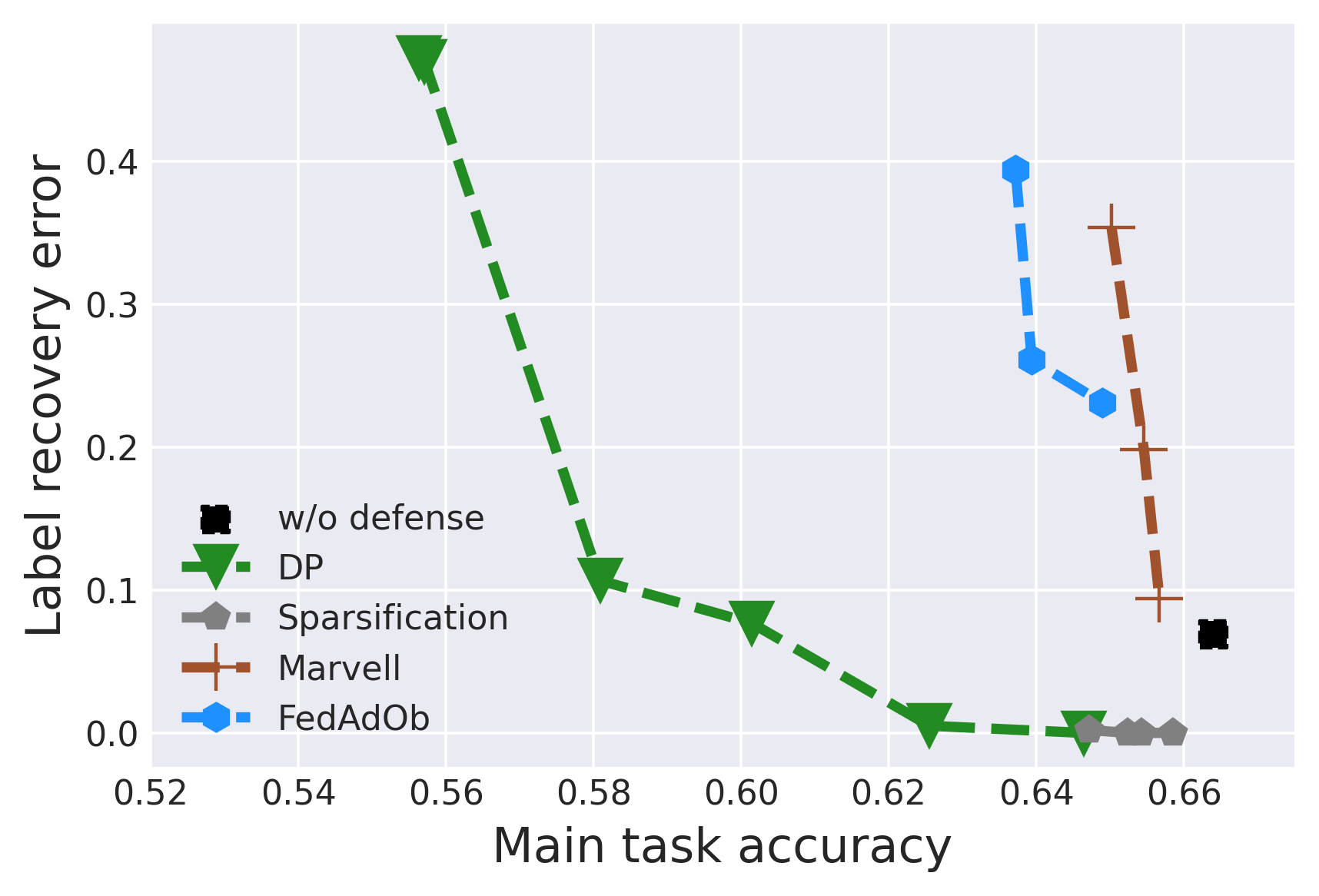}
      \subcaption{DS}
    		\end{subfigure}
    	\begin{subfigure}{0.22\textwidth}
  		 	\includegraphics[width=1\textwidth]{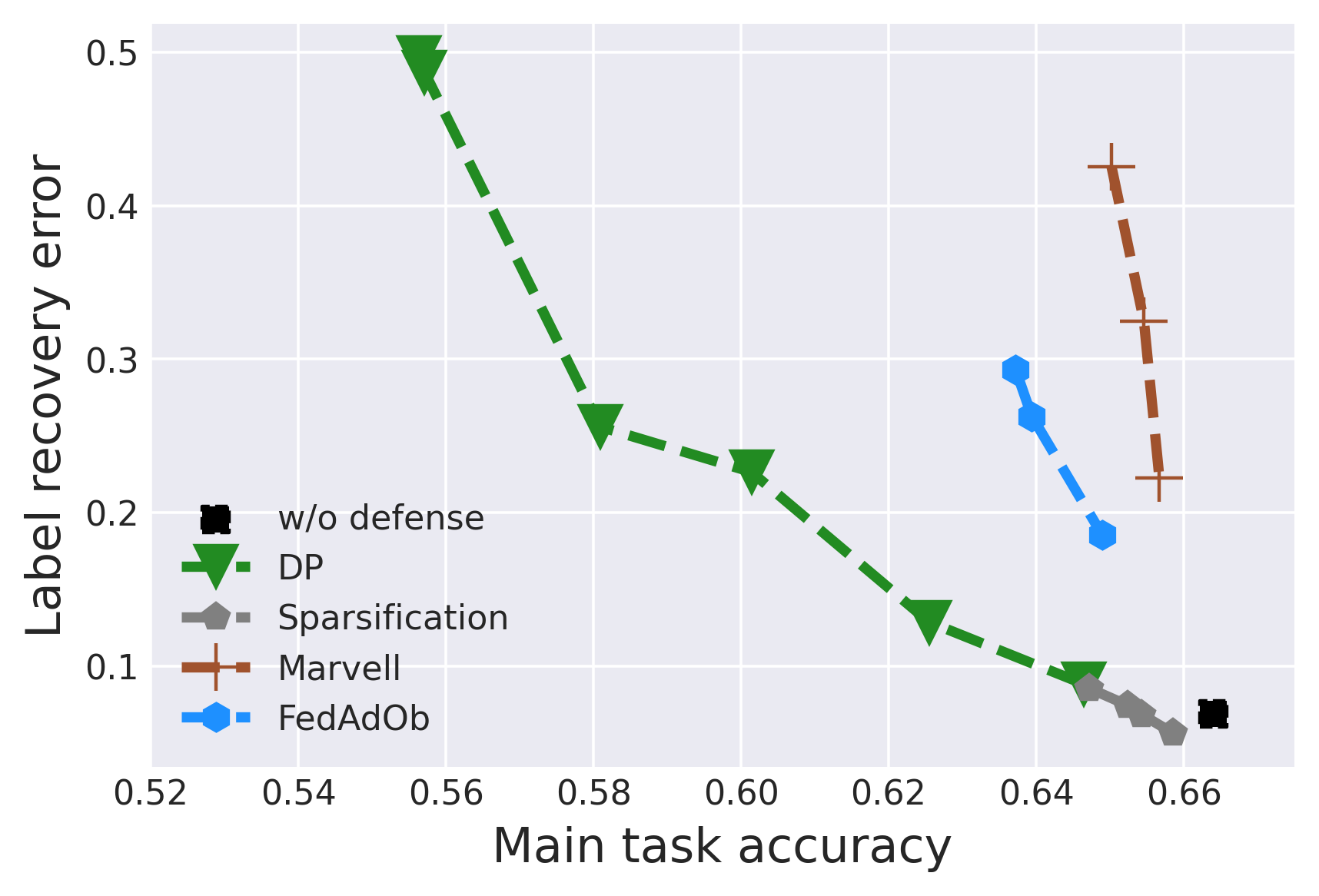}
            \subcaption{NS}
    		\end{subfigure}
\centering
                    \caption{Comparison of different defense methods in terms of their trade-offs between main task accuracy and label recovery error against DS and NS \cite{oscar2022split} attacks on DCN-Criteo in \textbf{VFL}.}
\label{fig:ds-ns}
\end{figure}

\subsection{Ablation Study} \label{sec:aba-tradeoff}
This section reports ablation study on two critical passport parameters.

As illustrated in main text, when the passports are embedded in a convolution layer or linear layer with $c$ channels\footnote{For the convolution layer, the passport $s\in \RR^{c\times h_1 \times h_2}$, where $c$ is channel number, $h_1$ and $h_2$ are height and width; for the linear layer, $s\in \RR^{ c\times h_1}$, where $c$ is channel number, $h_1$ is height.}, 
for each channel $j\in[c]$, 
the passport $s{(j)}$ (the $j_{th}$ element of vector $s$) is randomly generated as follows:
\begin{equation}\label{eq:sample-pst-app}
    s{(j)} \sim \calN(\mu_j, \sigma^2), \quad
    \mu_j \in \calU(-N, 0),
\end{equation}
where all $\mu_j, j=1,\cdots, c$ are different from each other, $\sigma^2$ is the variance of Gaussian distribution and $N$ is \textit{the range of passport mean}.  $\sigma$ and $N$ are two crucial parameters determining the privacy-utility trade-off of FedAdOb. This section analyzes how these two parameters influence the data recovery error and main task accuracy for FedAdOb. 

\begin{table}[htbp]
\centering
\resizebox{0.3\textwidth}{!}{\begin{tabular}{@{}c|c|c@{}}
\cmidrule(r){1-3}
         $N$ & \shortstack{Main task \\accuracy} & \shortstack{Feature recovery \\error}   \\ \cmidrule(r){1-3}
1   & 0.8798        & 0.0589                  \\
2   & 0.8771        & 0.0818                   \\
5   & 0.8758        & 0.1203           \\
10  & 0.8764        & 0.1204                   \\
\textbf{50}  & 0.8761        & 0.1205                   \\
100 & 0.8759        & 0.1205                   \\
200 & 0.8739        & 0.1205                   \\
\cmidrule(r){1-3}
\end{tabular}}
\caption{Influence of range of passports mean $N$ on main task accuracy and feature recovery error for FedAdOb against the CAFE attack in \textbf{VFL}.}
\label{tab:gaussian-mean}
\end{table}
\subsubsection{Influence of the range of Passport Mean $N$}
For the range of passport mean $N$, we consider the 2-party VFL scenario in AlexNet-CIFAR10 under feature recovery attack, i.e., CAFE \cite{jin2021cafe}, and MI \cite{he2019model} attacks. Fixed the $\sigma=1$, Table \ref{tab:gaussian-mean} provides the trade-off between main task accuracy and feature recovery error with different passport mean, which illustrates that when the range of passport mean increases, the privacy is preserved at the expense of the minor degradation of model performance less than 1\%. This phenomenon is consistent with our analysis of Theorem \ref{thm:thm1}. It is suggested that the $N$ is chosen as 100 when the model performance is less than 0.5\%.
\begin{table}[htbp]
\centering
\resizebox{0.3\textwidth}{!}{\begin{tabular}{c|c|c}
\toprule
$\sigma^2$ & \shortstack{Main task\\ accuracy} & \shortstack{Label recovery \\ error} \\ \midrule
0        & 0.877         & 0.368              \\
2        & 0.873         & 0.510               \\
\textbf{5}        & 0.870         & 0.543              \\
10       & 0.864         & 0.568              \\
50       & 0.858         & 0.577              \\
70       & 0.844         & 0.617              \\
100      & 0.791         & 0.751              \\ \bottomrule
\end{tabular}}
\caption{Influence of passports variance $\sigma^2$ on main task accuracy and label recovery error for FedAdOb in \textbf{VFL}.}
\label{tab:gaussian-var}
\end{table}

\subsubsection{Influence of Passport Variance $\sigma^2$}
For the passport variance $\sigma^2$, we consider the 2-party VFL scenario in AlexNet-CIFAR10 under label recovery attack \cite{fu2022label}, i.e., PMC attack \cite{fu2022label}. Fixed the range of passport mean $N =100$, Table \ref{tab:gaussian-var} shows that the main task accuracy decreases and label recovery error increases when the passport variance increases, which is also verified in Theorem \ref{thm2}. It is suggested that the $\sigma^2$ is chosen as 5 when the model performance is less than 0.5\%.

\subsection{Training and Inference Time}

TABLE \ref{tab:time} and \ref{tab:time-HFL} tests the training time (for one epoch) and inference time for FedAdOb and baseline defense methods in VFL and HFL setting. It shows that the FedAdOb is as efficient as the w/o defense for both training and inference procedures (the training time on MNIST for each epoch is 7.03s and the inference time is 1.48s in VFL) because embedding passport only introduces a few model parameters to train. It is worth noting that the training time of InstaHide is almost twice that of other methods because InstaHide involves mixing-up multiple feature vectors or labels, which is time-consuming.
\begin{table}[htbp]
\caption{Comparison of training time (for one epoch) and inference time among different defense mechanisms in \textbf{VFL}.}
\footnotesize
\center
\begin{tabular}{@{}lrlllll@{}}
\toprule
\multirow{2}{*}{\begin{tabular}[c]{@{}l@{}}Defense\\ Method\end{tabular}} &
  \multicolumn{2}{l}{\scriptsize LeNet-MNIST} &
  \multicolumn{2}{l}{\scriptsize AlexNet-Cifar10}  & \multicolumn{2}{l}{\scriptsize ResNet-Cifar10}\\ \cmidrule(l){2-7} 
 & Train & Infer & Train & Infer & Train & Infer \\ \hline \hline
w/o defense & 7.03  & 1.48 & 22.37 & 2.20 & 22.64 & 2.18 \\
\hline
CAE         & 7.30  & 1.48 & 22.71 & 2.27 & 23.02 & 2.21 \\
Sparsification & 6.93  & 1.45 & 22.39 & 2.12 & 22.61 & 2.21 \\
DP       & 7.01  & 1.49 & 22.24 & 2.23 & 22.63 & 2.16 \\
InstaHide   & 21.76 & 1.50 & 37.07 & 2.19 & 46.26 & 2.18 \\
\hline
\\[-1em]
FedAdOb (ours)    & 7.05  & 1.46 & 22.58 & 2.13 & 22.61 & 2.16 \\ \bottomrule
\end{tabular}
\label{tab:time}
\end{table}

\begin{table}[htbp]
\caption{Comparison of training time (for one epoch) and inference time among different defense mechanisms in \textbf{HFL}.}
\footnotesize
\center
\begin{tabular}{@{}lrlllll@{}}
\toprule
\multirow{2}{*}{\begin{tabular}[c]{@{}l@{}}Defense\\ Method\end{tabular}} &
  \multicolumn{2}{l}{\scriptsize LeNet-MNIST} &
  \multicolumn{2}{l}{\scriptsize AlexNet-Cifar10}  & \multicolumn{2}{l}{\scriptsize ResNet-Cifar100}\\ \cmidrule(l){2-7} 
 & Train & Infer & Train & Infer & Train & Infer \\ \hline \hline
w/o defense & 21.30  & 2.30 & 69.86 & 4.95 & 114.29 & 9.14 \\
\hline
SplitFed         & 22.05  & 2.66 & 70.51 & 5.18 & 115.37 & 9.13 \\
Sparsification & 21.76  & 2.38 & 70.59 & 5.34 & 121.13 & 9.45  \\
DP       & 21.53  & 2.32 & 70.54 & 5.63 & 120.82 & 9.54  \\
InstaHide   & 35.62 & 2.33 & 100.36 & 5.17 & 145.10 & 9.21 \\
\hline
\\[-1em]
FedAdOb (ours)    & 23.06  & 2.23 & 70.58 & 5.13 & 120.41 & 9.54 \\ \bottomrule
\end{tabular}
\label{tab:time-HFL}
\end{table}

\section{Conclusion}
In this paper, we propose a novel privacy-preserving federated deep learning framework called FedAdOb, which leverages adaptive obfuscation to protect the label and feature simultaneously. Specifically, the proposed adaptive obfuscation is implemented by embedding private passports in the bottom and top models to 
adapt the deep learning model such that the model performance is preserved. The extensive experiments on multiple datasets and theoretical analysis demonstrate that the FedAdOb can achieve significant improvements over the other protected methods in terms of model performance and privacy-preserving ability.

Besides the privacy attacks from the semi-honest adversary, defending against malicious attacks is another important research direction in federated learning \cite{blanchard2017machine,bagdasaryan2020backdoor}. In the future, we would like to investigate whether FedAdOb is effective in thwarting malicious attacks.


\begin{appendices}

\setcounter{equation}{0}
\setcounter{theorem}{0}
\setcounter{prop}{0}
\setcounter{definition}{0}

\section{Experimental Setting}

This section provides detailed information on our experimental settings. Table \ref{table:models-appendix} summarizes datasets and DNN models evaluated in all experiments and Table \ref{tab:train-params} summarizes the hyper-parameters used for training  models.



\subsection{Dataset \& Model Architectures}

We consider classification tasks by LeNet \cite{lecun1998gradient} on MNIST \cite{lecun2010mnist}, AlexNet \cite{NIPS2012_c399862d} on CIFAR10 \cite{krizhevsky2014cifar}, ResNet \cite{he2016deep} on CIFAR10 dataset and LeNet on ModelNet \cite{wu20153d,liu2022cross}. 
Specifically, the MNIST database of 10-class handwritten digits has a training set of 60,000 examples, and a test set of 10,000 examples. The CIFAR-10 dataset consists of 60000 $32\times 32$ colour images in 10 classes, with 6000 images per class. ModelNet is a widely-used 3D shape classification and shape retrieval benchmark, which currently contains 127,915 3D CAD models from 662 categories. We created 12 2D multi-view images per 3D mesh model by placing 12 virtual cameras evenly distributed around the centroid and partitioned the images into multiple (2 to 6) parties by their angles, which contains 6366 images to train and 1600 images to test. The Criteo dataset is used for predicting ads click-through rate. It has 26 categorical features and 13 continuous features. Categorical features are transformed into embeddings with fixed dimensions before we feed them into the model. In the 2-party VFL setting, both kinds of features are divided into two parts  and party A has the labels. The ratio of positive samples to negative samples is 1 : 16. To reduce the computational complexity, we randomly select 100000 samples as the training set and 20000 samples as the test set.
Moreover, we add the batch normalization layer of the LeNet and AlexNet for after the last convolution layer for existing defense methods.
\begin{table}[!h]
	\centering
	\footnotesize
	\begin{tabular}{c||c|c|c|c}
	        \hline
             \\[-1em]
		Scenario& \shortstack{Model \& \\ Dataset}  & \shortstack{Bottom \\ Model } & \shortstack{Top \\ Model}   & \# P \\
         \hline
         \hline
           \\[-1em]
            \multirow{3}{*}{HFL}&LeNet-MNIST & 1 Conv  &  1 Conv+ 3 FC  & 10 \\
		\cline{2-5}
       \\[-1em]
            & AlexNet-CIAFR10 & 1 Conv  &  4 Conv+ 1 FC  & 10 \\
		\cline{2-5}
		\\[-1em]
            & ResNet-CIFAR100 & 1 Conv  &   16 Conv+ 1 FC   & 10 \\
		\hline
        \hline
          \\[-1em]
              \multirow{5}{*}{VFL}& LeNet-MNIST & 2 Conv  &  3 FC  & 2 \\
		\cline{2-5}
          \\[-1em]
		&AlexNet-CIFAR10 & 5 Conv  &  1 FC & 2  \\
		\cline{2-5}
          \\[-1em]
		&ResNet18-CIAFR10 & 17 Conv & 1 FC & 2\\
		\cline{2-5}
     \\[-1em]
     	&LeNet-ModelNet & 2 Conv  &  3 FC  & 7 \\
		\cline{2-5}
       \\[-1em]
     	&DCN-Criteo  & 3 FC  &  1 FC  & 2 \\
		\hline
         
	\end{tabular}
	\caption{Models for evaluation in HFL and VFL. \# P denotes the number of parties. FC: fully-connected layer. Conv: convolution layer. }
\label{table:models-appendix}
\end{table}

\begin{table*}[!htbp]  \renewcommand\arraystretch{1.2}
\vspace{-3pt}
	\resizebox{1\textwidth}{!}{
		\begin{tabular}{l|cccc}
			\hline
			Hyper-parameter & LeNet-MNIST& AlexNet-CIFAR10 & ResNet-CIFAR10& LeNet-ModelNet \\ \hline
			Optimization method & SGD & SGD& SGD& SGD\\
			Learning rate & 1e-2 & 1e-2& 1e-2&1e-2\\
			Weight decay & 4e-5 & 4e-5& 4e-5&4e-5\\
			Batch size & 64 & 64& 64&64\\
			Iterations & 50 & 100& 100&100 \\
			The range of $N$ (passport mean) & [1, 50] & [1, 200] & [1, 100] & [1, 50] \\
			The range of $\sigma^2$ (passport variance) & [1, 9] & [1, 64] & [1, 25] & [1,  9] \\
			\hline
	\end{tabular}}
	
	\caption{Hyper-parameters used for training in FedAdOb.}
	\label{tab:train-params}
\end{table*}

\begin{table*}[!htbp]  \renewcommand\arraystretch{1.2}
\vspace{-3pt}
	\resizebox{1\textwidth}{!}{
		\begin{tabular}{l|cccc}
			\hline
			Hyper-parameter & LeNet-MNIST& AlexNet-CIFAR10 & ResNet-CIFAR10& LeNet-ModelNet \\ \hline
			Optimization method & SGD & SGD& SGD& SGD\\
			Learning rate & 1e-2 & 1e-2& 1e-2&1e-2\\
			Weight decay & 4e-5 & 4e-5& 4e-5&4e-5\\
			Batch size & 64 & 64& 64&64\\
			Iterations & 50 & 100& 100&100 \\
			The range of $N$ (passport mean) & [1, 50] & [1, 200] & [1, 100] & [1, 50] \\
			The range of $\sigma^2$ (passport variance) & [1, 9] & [1, 64] & [1, 25] & [1,  9] \\
			\hline
	\end{tabular}}
	
	\caption{Hyper-parameters used for training in FedAdOb.}
	\label{tab:HFL-train-params}
\end{table*}

\subsection{Federated Learning Settings}
In HFL, each client incorporates their passport information into the bottom model to safeguard the features. In VFL, 
we simulate a VFL scenario by splitting a neural network into a bottom model and a top model and assigning the bottom model to each passive party and the top model to the active party. For the 2-party scenario, the passive party has features when the active party owns the corresponding labels. For the 7-party scenario, i.e., the ModelNet, each passive party has two directions of 3D shapes when the active party owns the corresponding label following \cite{liu2022cross}. Table \ref{table:models-appendix} summarizes our FL scenarios. Also, the VFL framework follows algorithm 1 of \cite{liu2022vertical}.


\subsection{Privacy Attack Methods}
We investigate the effectiveness of FedAdOb on three attacks designed for VFL
\begin{itemize}
    \item Passive Model Completion (PMC) attack \cite{fu2022label} is the label inference attack. For each dataset, the attacker leverages some auxiliary labeled data (40 for MNIST and CIFAR10, 366 for ModelNet) to train the attack model and then the attacker predicts labels of the test data using the trained attack model. 
    \item CAFE attack \cite{jin2021cafe} is a feature reconstruction attack. We assume the attacker knows the forward embedding $H$ and passive party's model $G_{\theta}$, and we follow the white-box model inversion step (i.e., step 2) of CAFE to recover private features owned by a passive party.
    \item Model Inversion (MI) attack is a feature reconstruction attack. We follow the black-box setting of \cite{he2019model}, in which the attacker does not know the passive model. For each dataset, attacker leverages some labeled data (10000 for MNIST and CIFAR10, 1000 for ModelNet) to train a shallow model approximating the passive model, and then they use the forward embedding and the shallow model to infer the private features of a passive party inversely.  
\end{itemize}
For CAFE and MI attacks, we add the total variation loss~\cite{jin2021cafe} into two feature recovery attacks with regularization parameter $\lambda=0.1$.

\subsection{Baseline Defense Methods}

The details of baseline defense methods compared with FedAdOb is as follows:
\begin{itemize}
    \item For \textbf{Differential Privacy (DP)} \cite{abadi2016deep},  we experiment with Gaussian noise levels ranging from 5e-5 to 1.0. We add noise to gradients for defending against MC attack while add noise to forward embeddings for thwarting CAFE and MI attacks. 
    \item For \textbf{Sparsification} \cite{fu2022label,lin2018deep}, we implement gradient sparsification \cite{fu2022label} and forward embedding sparsification \cite{lin2018deep} for defending against label inference attack and feature reconstruction attack, respectively. Sparsification level are chosen from 0.1\% to 50.0\%. 
    \item  For \textbf{InstaHide}~\cite{huang2020instahide}, we mix up 1 to 4 of images to trade-off privacy and utility.
    \item For \textbf{Confusional AutoEncoder} (CAE)~\cite{zou2022defending}, we follow the implementation of the original paper. That is, both the encoder and decoder of CAE have the architecture of 2 FC layers. Values of the hyperparameter that controls the confusion level are chosen from 0.0 to 2.0.
\end{itemize}

\subsubsection{Implementation details of FedAdOb}
For \textbf{FedAdOb}, Passports are embedded in the last convolution layer of the passive party's model and the first fully connected layer of the active party's model. Table \ref{tab:train-params} summarizes the hyper-parameters for training FedAdOb.

\subsection{Notations}
\begin{table}[!htbp] 
  \renewcommand{\arraystretch}{1.05}
  \centering
  \setlength{\belowcaptionskip}{15pt}
  \label{table: notation}
    \begin{tabular}{c|p{5.5cm}}
    \toprule
    Notation & Meaning\cr
    \midrule\
    $F_\omega, G_{\theta_k}$ & Active model and $k_{th}$ passive model\cr \hline
     $W_p, W_a, W_{att}$ & The 2D matrix of the active model, passive model and attack model for the regression task \cr \hline
     $s_a, s_p$ & The passport of the active model and passive model \cr \hline
    $s_{a}^\gamma, s_{a}^\beta,s_p^\gamma, s_p^\beta$ & The passport of the active model and passive model w.r.t $\gamma$ and $\beta$ respectively \cr \hline
    $\gamma, \beta$ & The scaling and bias   \cr \hline
    $g_W$ & Adaptive obfuscations w.r.t model parameters $W$ \cr \hline
    $H_k$ & Forward embedding passive party $k$ transfers to the active party \cr \hline
    $\Tilde{\ell}$ & the loss of main task \cr \hline
    $\nabla_{H_k}\Tilde{l}$ & Backward gradients the active party transfers to the passive party $k$ \cr \hline
    $\calD_k= \{x_{k,i}\}_{i=1}^{n_i}$ & $n_i$ private features of passive party $k$ \cr \hline
     $y$ & Private labels of active party \cr \hline
    $K$ & The number of party\cr \hline
    $N$ & The range of Gaussian mean of passports sample \cr \hline
   $\sigma^2$ & The variance of passports sample \cr \hline
    $\Tilde{\ell}_a$ &  Training error on the auxiliary dataset for attackers \cr \hline
        $\Tilde{\ell}_t$ &  Test error on the test dataset for attackers \cr \hline
      $n_a$ & The number of auxiliary dataset to do PMC attack\cr \hline
   $T$ & Number of optimization steps for VFL \cr \hline
      $\eta$ & learning rate \cr \hline
 $\|\cdot \| $ & $\ell_2$ norm \cr 
    \bottomrule
    \end{tabular}
    \caption{Table of Notations}
\end{table}


	
			


\section{Formulation of FedAdOb in HFL and VFL}
Consider a neural network $f_\Theta(x):\calX \to \RR$, where $x \in \calX$, $\Theta$ denotes model parameters of neural networks. 

\noindent\textbf{VFL.} $K$ passive parties and one active party collaboratively optimize $\Theta = (\omega, \theta_1, \cdots, \theta_K)$ of network according to Eq. \eqref{eq:loss-VFL-app}.
\begin{equation}\label{eq:loss-VFL-app}
\begin{split}
        \min_{\omega, \theta_1, \cdots, \theta_K} &\frac{1}{n}\sum_{i=1}^n\ell(F_{\omega} \circ (G_{\theta_1}(x_{1,i}),G_{\theta_2}(x_{2,i}), \\
        & \cdots,G_{\theta_K}(x_{K,i})), y_{i}),
\end{split}
\end{equation}
where $\ell$ is the loss, e.g., the cross-entropy loss, passive party $P_k$ owns features $\calD_k = (x_{k,1}, \cdots, x_{k,n}) \in \mathcal{X}_k$ and the passive model $G_{\theta_k}$, the active party owns the labels $y \in \mathcal{Y}$ and active model $F_\omega$, $\mathcal{X}_k$ and $\mathcal{Y}$ are the feature space of party $P_k$ and the label space respectively. Furthermore, FedAdOb aims to optimize:
\begin{equation} \label{eq:loss-aof-vfl-app}
\begin{split}
        \min_{\omega, \theta_1,\cdots, \theta_K} &\frac{1}{N}\sum_{i=1}^N\ell(F_{\omega}g_{\omega} \circ (G_{\theta_1}(g_{\theta_1}(x_{1,i}, s_{p_1})), \\
        &\cdots, G_{\theta_K}(g_{\theta_K}(x_{K,i},s_{p_K})), y_{i}).
\end{split}
\end{equation}
Denote the composite function  $G_{\theta_j}g_{\theta_j}()$ as $G'_{\theta_j}()$, $j=1, \cdots, K$. Then we rewrite the Eq. \eqref{eq:loss-aof-vfl-app} as follows 
\begin{equation} 
\begin{split}
        \min_{\omega, \theta_1,\cdots, \theta_K} &\frac{1}{N}\sum_{i=1}^N\ell(F'_{\omega} \circ (G'_{\theta_1}(x_{1,i}, s^{p_1})), \\
        &\cdots, G'_{\theta_K}(x_{K,i}, s^{p_K})) , s^a, y_{i}),
\end{split}
\end{equation}

\noindent\textbf{HFL.} $K$  party collaboratively optimize $\Theta = (\omega, \theta_1, \cdots, \theta_K)$ of network according to Eq. \eqref{eq:loss-HFL-app}.
\begin{equation} \label{eq:loss-HFL-app}
    \min_{\theta_1, \cdots, \theta_K,\omega} \sum_{k=1}^K\sum_{i=1}^{n_k}\frac{\ell(F_\omega(x_{k,i}), y_{k,i})}{n_1+\cdots+n_K},
\end{equation}
where $\ell$ is the loss, e.g., the cross-entropy loss, $\calD_k=\{(x_{k,i}, y_{k,i})\}_{i=1}^{n_k}$ is the dataset with size $n_k$ owned by client $k$.
Furthermore, FedAdOb aims to optimize:
\begin{equation} \label{eq:loss-aof-app}
\begin{split}
        \min_{\omega, \theta_1,\cdots, \theta_K} &\frac{1}{N}\sum_{i=1}^N\ell(F_{\omega}g_{\omega} \circ (G_{\theta_1}(g_{\theta_1}(x_{1,i}, s_{p_1})), \\
        &\cdots, G_{\theta_K}(g_{\theta_K}(x_{K,i},s_{p_K})), y_{i}).
\end{split}
\end{equation}
Denote the composite function  $G_{\theta_j}g_{\theta_j}()$ as $G'_{\theta_j}()$, $j=1, \cdots, K$. Then we rewrite the Eq. \eqref{eq:loss-aof-app} as follows 
\begin{equation} \label{eq:loss-aof-HFL-app}
    \min_{\theta_1, \cdots, \theta_K,\omega} \sum_{k=1}^K\sum_{i=1}^{n_k}\frac{\ell(F_\omega \circ (G'_{\theta_k}(x_{k,i}, s_{p_k}), y_{k,i})}{n_1+\cdots+n_K},
\end{equation}

Then Proposition 1 ends the proof.

\section{Security Analysis for FedAdOb}
We investigate the privacy-preserving capability of FedAdOb against feature reconstruction attack and label inference attack. Note that we conduct the privacy analysis with linear regression models, for the sake of brevity. 

\begin{definition} \label{def:SplitFed-app}
Define the forward function of the bottom model $G$ and the top model $F$: 
\begin{itemize}
    \item For the bottom layer: $H = G(x) =  W_p s_p^\gamma \cdot W_p x + W_p s_p^\beta$.
    \item For the top layer: $y = F(H) =  W_a s_a^\gamma \cdot W_a  H + W_a s_a^\beta$.
\end{itemize}
where $W_p$, $W_a$ are 2D matrices of the bottom and top models; $\cdot$ denotes the inner product, $ s_p^\gamma,  s_p^\beta$ are passports embedding into the bottom layers, $ s_a^\gamma,  s_a^\beta$ are passports embedding into the top layers.

\end{definition}

\subsection{Hardness of Feature Restoration with FedAdOb}
Considering two strong feature restoration attacks, White-box Gradient Inversion (WGI) \cite{zhu2019dlg} and White-box Model Inversion (WMI) attack \cite{jin2021cafe,he2019model}, which aims to recover features $\hat{x}$ approximating original features $x$ according to the model gradients and outputs respectively. Specifically, for WMI, the attacker knows the bottom model parameters $W_p$, forward embedding $H$ and the way of embedding passport, but does not know the passport. For WGI, the adversary knows the bottom model gradients $\nabla W_p$, and the way of embedding passport, but does not know the passport.

\begin{lem} \label{lem1-app}
Suppose a bottom model as the Def. \ref{def:SplitFed-app} illustrates, an attack estimates the private feature by guessing passports $s_{\gamma'}^p$ and $s^{\beta'}_p$ via WMI attack. If $W_p$ is invertible, we could obtain the difference between $x$ and estimated $\hat{x}$ by the adversary in the following two cases:
\begin{itemize}
    \item When inserting the $s_\gamma^p$ for the passive party,
    \begin{equation}
    \|x - \hat{x}\|_2 \geq  \frac{\|(D_\gamma^{-1}-D_{\gamma'}^{-1}) H\|_2}{\| W_p\|_2}.
\end{equation}
\item When inserting the $s^\beta_p$ for the passive party,
\begin{equation} \label{eq:beta-lem-app}
     \|x - \hat{x}\|_2 = \|s^\beta_p - s^{\beta'}_p \|_2,
\end{equation}
\end{itemize}
where $D_\gamma=diag(W_ps_\gamma^p),D_{\gamma'}=diag(W_ps_{\gamma'}^p),D_\beta=diag(W_ps^\beta_p),D_{\beta'}=diag(W_ps^{\beta'}_p)$ and $W_p^\dag$ is the Moore–Penrose inverse of $W_p$. 
\end{lem}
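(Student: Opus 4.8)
The plan is to exploit the explicit algebraic form of the bottom-layer forward map from Definition~\ref{def:SplitFed-app} and invert it under the attacker's guessed passports. Recall $H = W_p s_p^\gamma \cdot W_p x + W_p s_p^\beta$; writing $D_\gamma = \mathrm{diag}(W_p s_p^\gamma)$ (so that the channel-wise scaling $W_p s_p^\gamma \cdot W_p x$ becomes $D_\gamma W_p x$), we have $H = D_\gamma W_p x + W_p s_p^\beta$. The attacker, not knowing the true passports, guesses $s_{\gamma'}^p, s_p^{\beta'}$ and, knowing $W_p$ and observing $H$, inverts the corresponding relation to obtain $\hat{x} = W_p^{-1} D_{\gamma'}^{-1}\bigl(H - W_p s_p^{\beta'}\bigr)$ (using invertibility of $W_p$, and of $D_{\gamma'}$, which holds generically since the $\mu_j$ are distinct and nonzero).

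For the first case ($s_\gamma^p$ protecting the feature, so $s_p^\beta = s_p^{\beta'}$ is known or zero), I would substitute the true relation $W_p x = D_\gamma^{-1}(H - W_p s_p^\beta)$ and the attacker's estimate to get $W_p x - W_p \hat{x} = (D_\gamma^{-1} - D_{\gamma'}^{-1})(H - W_p s_p^\beta) = (D_\gamma^{-1} - D_{\gamma'}^{-1}) H'$ for the appropriate shifted observation; matching the bound as stated (with $H$ in place of $H'$) amounts to the convention that the $\beta$-term is absorbed, so I will state it with the same $H$ as the theorem. Then $\|x - \hat{x}\|_2 = \|W_p^{-1}(D_\gamma^{-1}-D_{\gamma'}^{-1})H\|_2 \geq \|(D_\gamma^{-1}-D_{\gamma'}^{-1})H\|_2 / \|W_p\|_2$, using the standard operator-norm inequality $\|W_p^{-1} v\|_2 \geq \|v\|_2 / \|W_p\|_2$. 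This gives the claimed lower bound.

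For the second case ($s^\beta_p$ protecting the feature, $s_p^\gamma = s_p^{\gamma'}$), the scaling matrices coincide, so from $H = D_\gamma W_p x + W_p s_p^\beta = D_\gamma W_p \hat{x} + W_p s_p^{\beta'}$ we get $D_\gamma W_p(x - \hat{x}) = W_p(s_p^{\beta'} - s_p^\beta)$. The cleanest route is to observe that the attacker's reconstruction satisfies $W_p \hat x = W_p x + W_p s_p^\beta - W_p s_p^{\beta'}$ after cancelling $D_\gamma$, hence $\hat{x} = x + s_p^\beta - s_p^{\beta'}$ and therefore $\|x - \hat{x}\|_2 = \|s_p^\beta - s_p^{\beta'}\|_2$ exactly, as claimed. (Here one uses that $D_\gamma$ and $W_p$ are invertible so the cancellation is legitimate.)

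The main obstacle I anticipate is bookkeeping around the $\beta$-term in the first case: the theorem statement writes the bound purely in terms of $H$, whereas the honest derivation naturally produces $H - W_p s_p^\beta$ (or $H$ with the bias absorbed into the observation). I would resolve this by fixing the convention at the start of the proof — treating the observable as the pre-bias quantity, or equivalently noting that when $s_\gamma^p$ alone is inserted the relevant $\beta$ is a fixed known constant that can be folded into $H$ — so that the displayed inequality matches verbatim. A secondary, minor point is justifying invertibility of the diagonal matrices $D_{\gamma'}, D_\gamma$: this follows because each diagonal entry is a nonzero inner product of a row of $W_p$ with the passport vector whose coordinates are drawn from Gaussians with distinct nonzero means, so degeneracy occurs only on a measure-zero set and can be excluded. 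Everything else is routine linear algebra with operator norms.
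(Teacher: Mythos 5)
Your proposal is correct and follows essentially the same route as the paper: invert the bottom-layer forward map under the attacker's guessed passports, apply the operator-norm inequality $\|W_p^\dag v\|_2 \geq \|v\|_2/\|W_p\|_2$ in the $\gamma$ case, and use $W_p^\dag W_p = I$ to cancel in the $\beta$ case. The only difference is cosmetic: the paper treats each case with a single passport present (so the $\beta$-term bookkeeping you worry about in the first case never arises, and no $D_\gamma$ appears in the second case), whereas you start from the combined form and specialize --- note that your ``cancelling $D_\gamma$'' step in the $\beta$ case is only exact because that case carries no $\gamma$ scaling, i.e.\ $D_\gamma = I$; otherwise you would get $\|W_p^\dag D_\gamma^{-1} W_p(s^\beta_p - s^{\beta'}_p)\|_2$ rather than $\|s^\beta_p - s^{\beta'}_p\|_2$.
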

\begin{proof}
For inserting the $s_\gamma^p$,
\begin{equation*}
    H = W_p s_\gamma^p * W_p x.
\end{equation*}
The attacker with knowing $W_p$ has
\begin{equation*}
    H =W_p s_{\gamma'}^p * W_p \hat{x}.
\end{equation*}
Denote $D_\gamma=diag(W_ps_\gamma^p),D_{\gamma'}=diag(W_ps_{\gamma'}^p)$. If $A$ is invertible, due to
\begin{equation*}
    \|A\|_2\|A^{-1}B\|_2 \leq \|AA^{-1}B\|_2 = \|B\|_2,
\end{equation*}
we have 
\begin{equation*}
    \|A^{-1}B\|_2 \geq \frac{\|B\|_2}{\|A\|_2}.
\end{equation*}   
Therefore, if $W_P^\dag$ is the Moore–Penrose inverse of $W_p$ and $W_p$ is invertible, we can obtain
\begin{equation}\label{eq:case1}
    \begin{split}
        \|\hat{x}-x\|&=\|W_p^\dag D_\gamma^{-1}H-W_p^\dag D_{\gamma'}^{-1}H\|_2 \\
        &= \| W_p^\dag(D_\gamma^{-1}-D_{\gamma'}^{-1})H \|_2 \\
        &\geq \frac{\|(D_\gamma^{-1}-D_{\gamma'}^{-1}) H\|_2}{\| W_p\|_2}.
    \end{split}
\end{equation}
On the other hand, for inserting the $s^\beta_p$ of passive party,
\begin{equation*}
     H =  W_p x + W_p s^\beta_p.
\end{equation*}
The attacker with knowing $W_p$ has
\begin{equation*}
    H  = W_p x + W_p s^{\beta'}_p.
\end{equation*}
We further obtain
\begin{equation}\label{eq:case2}
    \begin{split}
        \|\hat{x}-x\|&=\|W_p^\dag (H-D_\beta)-W_p^\dag (H-D_{\beta'})\|_2 \\
        &= \| W_p^\dag(W_p s^\beta_p- W_p s^{\beta'}_p) \|_2 \\
        &= \|s^\beta_p - s^{\beta'}_p \|_2.
    \end{split}
\end{equation}
\end{proof}
\begin{rmk}
If the adversary don't know the existence of passport layer, then $D_{\gamma'} = I$ and $s^{\beta'}_p =0$ in Eq. \eqref{eq:case1} and \eqref{eq:case2}.
\end{rmk}
\begin{rmk}
Eq. \eqref{eq:beta-lem-app} only needs $W_p$ is a left inverse, i.e., $W_p$ has linearly independent columns. 
\end{rmk}


\begin{lem} \label{lem3-app}
Suppose a bottom model as the Def. \ref{def:SplitFed-app} illustrates, an attack estimates the private feature by guessing passports $s_{\gamma'}^p$ and $s^{\beta'}_p$ via WGI attack. If $\nabla b$ is non-zero and inserting the $s^\beta_p$ for the bottom model, we could obtain 
\begin{equation} \label{eq:beta-lem-app1}
     \|x - \hat{x}\|_2 = \|s^\beta_p - s^{\beta'}_p \|_2.
\end{equation}
\end{lem}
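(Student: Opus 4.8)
The plan is to mirror the $s^\beta_p$ case of Lemma~\ref{lem1-app}, but to exploit the special structure of the gradients revealed by the WGI attack rather than any invertibility of $W_p$. First I would write down the forward map in the $s^\beta_p$‑only case: by Definition~\ref{def:SplitFed-app} with the scale passport inactive, the bottom layer computes $H = W_p x + W_p s^\beta_p = W_p\bigl(x + s^\beta_p\bigr)$, so the private feature $x$ enters the layer only through the shifted vector $x + s^\beta_p$.

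Next I would pin down precisely what the WGI adversary observes. Write $\delta := \partial\ell/\partial H$ for the error backpropagated to the layer output; then the gradient with respect to the bias produced by the passport sub‑network is $\nabla b = \delta$, and — since $W_p$ appears in $H$ both through $W_p x$ and through the induced bias $W_p s^\beta_p$ (the $\beta$‑path in Eq.~\eqref{eq:upload-gradients-app1}) — the chain rule yields the rank‑one weight gradient
\[
  \nabla_{W_p}\ell \;=\; \delta\,x^{\top} + \delta\,(s^\beta_p)^{\top} \;=\; \delta\,\bigl(x + s^\beta_p\bigr)^{\top}.
\]
The adversary, who knows the way passports are embedded, posits a guessed passport $s^{\beta'}_p$ and a guessed feature $\hat x$ and matches the model‑predicted gradient to the observed one, i.e. requires $\delta\,(\hat x + s^{\beta'}_p)^{\top} = \delta\,(x + s^\beta_p)^{\top}$.

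Then the final step: because $\nabla b = \delta$ is non‑zero, its Moore–Penrose inverse $\delta^{\dagger} = \delta^{\top}/\|\delta\|_2^2$ is a left inverse of $\delta$, so left‑multiplying the matching condition by $\delta^{\dagger}$ collapses it to $\hat x + s^{\beta'}_p = x + s^\beta_p$, hence $\hat x - x = s^\beta_p - s^{\beta'}_p$ and $\|x-\hat x\|_2 = \|s^\beta_p - s^{\beta'}_p\|_2$, which is the claim. I would also remark that this shows the gradient‑matching objective attains its minimum value $0$ exactly on the affine set $\{\hat x : \hat x = x + s^\beta_p - s^{\beta'}_p\}$, so the equality describes the best possible reconstruction, not merely an upper bound, and that — unlike the WMI case — no invertibility hypothesis on $W_p$ is required.

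The main obstacle I anticipate is bookkeeping rather than conceptual: being careful that the weight gradient received by the adversary does pick up the extra term $\delta\,(s^\beta_p)^{\top}$ from the dependence of the passport‑induced bias on $W_p$, and being explicit about what ``$\nabla b$'' denotes so that the hypothesis ``$\nabla b$ non‑zero'' is exactly what guarantees $\delta$ has a left inverse. This is the crux, since if the bias were instead treated as a free parameter independent of $W_p$, the rank‑one gradient would read $\delta\,x^{\top}$ and the adversary would recover $x$ exactly; making the $W_p$‑dependence of the bias explicit is what turns the passport into a genuine obstruction with a cleanly quantified residual error.
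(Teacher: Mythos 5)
Your proposal is correct and follows essentially the same route as the paper's proof: write $H=W_p(x+s^\beta_p)$, use the chain rule to get $\nabla_{W_p}\ell=\nabla_H\ell\,(x+s^\beta_p)^{\top}$ and $\nabla b=\nabla_H\ell$, match the observed gradient against the adversary's hypothesized $(\hat x, s^{\beta'}_p)$, and cancel $\nabla b\neq 0$ to obtain $\hat x+s^{\beta'}_p=x+s^\beta_p$. Your explicit left-inverse justification $\delta^{\dagger}=\delta^{\top}/\|\delta\|_2^2$ for the cancellation step is a small but welcome tightening of what the paper merely asserts.
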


\begin{proof}
For inserting the $s^\beta_p$ of passive model,
\begin{equation*}
     H =  W_p x + b_\beta= W_p x + W_p s^\beta_p.
\end{equation*}
The attacker with knowing $W_p$ has
\begin{equation*}
    H  = W_p x + b_{\beta'} = W_p x +W_p s^{\beta'}_p.
\end{equation*}
Denote $b_{\beta'} = W_p s^{\beta'}_p$ and $b_{\beta} = W_p s^\beta_p$, $\nabla W_p = \nabla_{W_p}\ell$ and $\nabla W_p = \nabla_{W_p}\ell$. According to chain rule, we can obtain
\begin{equation}
    \begin{split}
        \nabla W_p  = \nabla_H \ell (x + s_\beta)^T,
    \end{split}
\end{equation}
and
\begin{equation}
    \begin{split}
        \nabla b  = \nabla_H \ell I
    \end{split}
\end{equation}
Therefore, 
\begin{equation}
    \nabla W_p =  \nabla b (x+s^\beta_p)^T,
\end{equation}
and
\begin{equation}
    \nabla W_p =  \nabla b (\hat{x}+s^{\beta'}_p)^T,
\end{equation}
If $\nabla b$ is not zero, then 
\begin{equation}
   \hat{x}+s^{\beta'}_p = x +s^\beta_p, 
\end{equation}
which implies to
\begin{equation}
    \|x -\hat{x}\|_2 = \|s^\beta_p - s^{\beta'}_p \|_2.
\end{equation}
\end{proof}

Lemma \ref{lem1-app} and Lemma \ref{lem3-app} illustrate that the difference between estimated feature and original feature has the lower bound, which depends on the difference between original passport and inferred passport. Specifically, if the inferred passport is far away from $\|s^\beta_p - s^{\beta'}_p \|_2$ and $\|(D_\gamma^{-1}-D_{\gamma'}^{-1})\|$ goes large causing a large reconstruction error by attackers. Furthermore, we provide the analysis of the probability of attackers to reconstruct the $x$ in Theorem \ref{thm:thm1-app}. Let $m$ denote the dimension of the passport via flattening, $N$ denote the passport mean range formulated in Eq. (5) of the main text and $\Gamma(\cdot)$ denote the Gamma distribution.

\begin{theorem}\label{thm:thm1-app}
Suppose the passive party protects features $x$ by inserting the $s^\beta_p$. The probability of recovering features by the attacker via white-box MI and GI attack is at most $\frac{\pi^{m/2}\epsilon^m}{\Gamma(1+m/2)N^m}$ such that the recovering error is less than $\epsilon$, i.e., $\|x-\hat{x}\|_2\leq \epsilon$,
\end{theorem}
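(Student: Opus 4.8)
The plan is to combine Lemma \ref{lem1-app} (or Lemma \ref{lem3-app}) with a volumetric argument on the space from which the passport mean is sampled. By the lemmas, when the passive party protects $x$ by embedding $s_p^\beta$, any adversary who guesses a passport $s_p^{\beta'}$ incurs reconstruction error exactly $\|x-\hat{x}\|_2 = \|s_p^\beta - s_p^{\beta'}\|_2$. Hence the event "recovery error $\le \epsilon$" is precisely the event that the adversary's guess $s_p^{\beta'}$ lies in the closed Euclidean ball $B(s_p^\beta,\epsilon)\subset\RR^m$, where $m$ is the flattened passport dimension. So it suffices to bound the probability that a single guess falls in this ball; the worst case over adversaries is the adversary who knows the sampling distribution and picks the highest-density region, so one should bound $\sup_{z}\Pr[s_p^\beta \in B(z,\epsilon)]$.

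First I would recall from Eq. \eqref{eq:sample-pst} that each coordinate's mean $\mu_j$ is drawn uniformly from $(-N,0)$, so the vector of means lies in the hypercube $(-N,0)^m$ of volume $N^m$, and (conditioning on or ignoring the Gaussian component, or folding it into the same argument) the relevant sampling measure is dominated by a density at most $1/N^m$ with respect to Lebesgue measure on that cube. Then I would bound the probability that the sample lands in $B(z,\epsilon)$ by (density bound) $\times$ (volume of the ball), i.e. by $\frac{1}{N^m}\cdot \mathrm{Vol}(B_\epsilon^m)$. Plugging in the standard formula $\mathrm{Vol}(B_\epsilon^m) = \frac{\pi^{m/2}\epsilon^m}{\Gamma(1+m/2)}$ gives exactly the claimed bound $\frac{\pi^{m/2}\epsilon^m}{\Gamma(1+m/2)N^m}$. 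For the WGI case the same reduction goes through verbatim via Lemma \ref{lem3-app}, since that lemma gives the identical identity $\|x-\hat x\|_2 = \|s_p^\beta - s_p^{\beta'}\|_2$ under the non-zero-bias-gradient condition.

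The main obstacle — and the part that needs care rather than routine calculation — is making precise what "the probability of recovering" means and why the uniform-over-the-cube density bound is the right object: the passport is $s_p^\beta \sim \calN(\mu,\sigma^2 I)$ with $\mu$ itself uniform on $(-N,0)^m$, so the marginal density of $s_p^\beta$ is a convolution and is not literally bounded by $1/N^m$ everywhere. The clean way is either (i) to treat $\sigma$ as small / the Gaussian as a point mass so $s_p^\beta$ is essentially uniform on the cube, or (ii) to observe that the convolution of the uniform density (bounded by $1/N^m$) with any probability density still has sup-norm at most $1/N^m$, so the density bound survives the convolution — the latter is the rigorous route and I would state it as the key lemma. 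A secondary subtlety is that the attacker could in principle make many guesses; the statement implicitly bounds a single-shot success, and one should remark that the bound is per-guess (or note that any deterministic attack corresponds to one guess given the attacker's fixed knowledge $W_p, H, \nabla W_p$). With these points addressed, the geometric/volumetric estimate itself is immediate.
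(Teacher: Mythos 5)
Your proposal is correct and arrives at the stated bound, but it formalizes the probability differently from the paper. The paper's proof (and its accompanying remark) fixes the true passport $s_p^\beta$ and models the \emph{attacker} as drawing a guess uniformly at random from $(-N,0)^m$, so the success probability is simply the ratio of the volume of the ball $B(s_p^\beta,\epsilon)$ to the volume $N^m$ of the cube. You instead fix an arbitrary (possibly adversarially chosen) guess $z$ and take the probability over the \emph{generation} of the passport in Eq.~\eqref{eq:sample-pst}, bounding $\sup_z \Pr[s_p^\beta\in B(z,\epsilon)]$ via the observation that the marginal density of $s_p^\beta$ --- a convolution of the uniform density on the cube with a Gaussian --- retains the sup-norm bound $1/N^m$. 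This is a genuinely stronger reading of the theorem: it covers any deterministic single-guess attacker rather than only a uniformly-random guesser, and it correctly accounts for the Gaussian component of the passport, which the paper's volume-ratio argument silently ignores. Your explicit remark that the bound is per-guess, and your invocation of Lemma~\ref{lem3-app} to cover the WGI case (the paper's proof cites only Lemma~\ref{lem1-app} even though the statement covers both attacks), are both improvements in rigor. The one caveat worth flagging for either argument is that Eq.~\eqref{eq:sample-pst} draws only $c$ independent means (one per channel) while $m$ is the flattened passport dimension, so the product density bound $1/N^m$ implicitly assumes $m$ independently sampled coordinates; the paper makes the same tacit assumption, so this does not distinguish the two proofs.
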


\begin{proof}
According to Lemma \ref{lem1-app}, the attacker aims to recover the feature $\hat{x}$ within the $\epsilon$ error from the original feature $x$, that is, the guessed passport needs to satisfy: 
    \begin{equation}
        \|s^{\beta'}_p - s^\beta_p  \|_2 \leq \epsilon
    \end{equation}
Therefore, the area of inferred passport of attackers is sphere with the center $s^\beta_p$ and radius $\epsilon$. And the volumes of this area is at most $\frac{\pi^{m/2}\epsilon^m}{\Gamma(1+m/2)}$, where $F$ represent the Gamma distribution and $m$ is dimension of the passport. Consequently, the probability of attackers to successfully recover the feature within $\epsilon$ error is:
\begin{equation*}
\begin{split}
        p \leq \frac{\pi^{m/2}\epsilon^m}{\Gamma(1+m/2)N^m},
        \end{split}
\end{equation*}
where the whole space is $(-N,0)^m$ with volume $N^m$.
\end{proof}
Theorem \ref{thm:thm1-app} demonstrates that the attacker's probability of recovering features within error $\epsilon$ is exponentially small in the dimension of passport size $m$. The successful recovering probability is inversely proportional to the passport mean range $N$, which is consistent with our ablation study in Sect. 6.3 in the main text.
\begin{rmk}
The attacker's behaviour we consider in Theorem \ref{thm:thm1-app} is that they guess the private passport randomly. 
\end{rmk}
\subsection{Hardness of Label Recovery with FedAdOb}
Consider the passive model competition attack \cite{fu2022label} that aims to recover labels owned by the active party. The attacker (i.e., the passive party) leverages a small auxiliary labeled dataset $\{x_i, y_i\}_{i=1}^{n_a}$ belonging to the original training data to train the attack model $W_{att}$, and then infer labels for the test data. Note that the attacker knows the trained passive model $G$ and forward embedding $H_i = G(x_i)$. Therefore, they optimizes the attack model $W_{att}$ by minimizing $\ell = \sum_{i=1}^{n_a}\|W_{att}H_i-y_i\|_2$.

\begin{assumption}\label{assum1-app}
Suppose the original main algorithm of VFL is convergent. For the attack model, we assume the error of the optimized attack model $W^*_{att}$ on test data $\tilde{\ell}_t$ is larger than that of the auxiliary labeled dataset $\Tilde{\ell}_a$.
\end{assumption}

\begin{rmk}
The test error is usually higher than the training error because the error is computed on an unknown dataset that the model hasn't seen.
\end{rmk}
\begin{theorem} \label{thm2-app}
Suppose the active party protect $y$ by embedding $s^a_\gamma$, and adversaries aim to recover labels on the test data with the error $\Tilde{\ell}_t$ satisfying:
\begin{equation}
        \Tilde{\ell}_t \geq \min_{W_{att}} \sum_{i=1}^{n_a}\|(W_{att}- T_i)H_i\|_2,
    \end{equation}
    where $T_i =diag(W_as_{\gamma,i}^a) W_a$ and $s_{\gamma,i}^a$ is the passport for the label $y_i$ embedded in the active model. Moreover, if $H_{i_1} = H_{i_2} = H$ for any $1\leq i_1,i_2 \leq n_a$, then
    \begin{equation} \label{eq:protecty-app}
        \Tilde{\ell}_t \geq \frac{1}{(n_a-1)}\sum_{1\leq i_1<i_2\leq n_a}\|(T_{i_1}-T_{i_2})H\|_2
    \end{equation}
\end{theorem}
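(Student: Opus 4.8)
The plan is to unwind the label-inference attack model in the presence of the passport $s^a_\gamma$, show that the attacker's training objective can never drop below a structural quantity determined by the passports, and then invoke Assumption \ref{assum1-app} to transfer this lower bound from training error to test error. First I would write down what the attacker is actually fitting. By Definition \ref{def:SplitFed-app}, the top model computes $y_i = F(H_i) = W_a s^a_{\gamma,i} \cdot W_a H_i + W_a s^a_\beta$; ignoring the (known, subtractable) bias term, the true label satisfies $\vec y_i = \operatorname{diag}(W_a s^a_{\gamma,i}) W_a H_i = T_i H_i$ with $T_i = \operatorname{diag}(W_a s^a_{\gamma,i}) W_a$. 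Since the original VFL training achieves zero loss (Assumption \ref{assum1-app}), each auxiliary pair $(H_i,\vec y_i)$ is exactly explained by its own passport-dependent matrix $T_i$. The attacker, however, is forced to use a single passport-free matrix $W_{att}$ for all samples, so the best achievable training error is
\begin{equation}
\tilde\ell_a = \min_{W_{att}} \sum_{i=1}^{n_a}\|W_{att} H_i - \vec y_i\|_2 = \min_{W_{att}} \sum_{i=1}^{n_a}\|(W_{att} - T_i) H_i\|_2 .
\end{equation}
Combining this with Assumption \ref{assum1-app} ($\tilde\ell_t \geq \tilde\ell_a$) gives the first displayed inequality immediately.

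For the second inequality, specialize to the case $H_{i_1} = H_{i_2} = H$ for all pairs. Then the bound reads $\tilde\ell_t \geq \min_W \sum_{i=1}^{n_a} \|(W - T_i) H\|_2$. The key step is a triangle-inequality / averaging argument: for any fixed $W$ and any pair $i_1 \neq i_2$,
\begin{equation}
\|(W - T_{i_1})H\|_2 + \|(W - T_{i_2})H\|_2 \geq \|(T_{i_1} - T_{i_2})H\|_2 .
\end{equation}
Summing over all $\binom{n_a}{2}$ unordered pairs, the left-hand side counts each term $\|(W-T_i)H\|_2$ exactly $(n_a-1)$ times, yielding
\begin{equation}
(n_a - 1)\sum_{i=1}^{n_a}\|(W - T_i)H\|_2 \;\geq\; \sum_{1\le i_1<i_2\le n_a}\|(T_{i_1}-T_{i_2})H\|_2 .
\end{equation}
Dividing by $(n_a-1)$ and taking the minimum over $W$ on the left (the right-hand side is independent of $W$) gives exactly Eq. \eqref{eq:protecty-app}.

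The routine parts are the algebra identifying $T_i$ from Definition \ref{def:SplitFed-app} and the counting in the pair-sum. The main obstacle — or rather the place where care is needed — is the logical handling of Assumption \ref{assum1-app}: we must be precise that "zero training loss of the main VFL task" is what forces $\vec y_i = T_i H_i$ exactly (otherwise there is a residual and $\tilde\ell_a$ picks up an extra error term), and that the comparison $\tilde\ell_t \ge \tilde\ell_a$ is exactly the content of the assumption rather than something we prove. One should also note that the bound is vacuous unless the passports genuinely differ across samples, which is why Proposition \ref{prop1} then instantiates $W_a = I$, $H = \vec 1$ to read off $\tilde\ell_t \geq \frac{1}{n_a-1}\sum_{i_1<i_2}\|s^a_{\gamma,i_1} - s^a_{\gamma,i_2}\|_2$, making the dependence on the Gaussian variance $\sigma^2$ explicit.
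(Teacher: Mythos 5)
Your proposal is correct and follows essentially the same route as the paper's own proof: identify $\vec y_i = T_i H_i$ from the zero-training-loss assumption, rewrite the attacker's objective as $\min_{W_{att}}\sum_i\|(W_{att}-T_i)H_i\|_2$, invoke Assumption \ref{assum1} to pass from training to test error, and then apply the triangle inequality with the same pair-counting argument (each term appearing $n_a-1$ times across the $\binom{n_a}{2}$ pairs). Your added care about the subtractable bias term and about exactly where Assumption \ref{assum1} is used is a slight refinement of the paper's presentation, not a different argument.
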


\begin{proof}
For only inserting the passport $s^a_\gamma$ of active party, according to Assumption \ref{assum1-app}, we have
\begin{equation}
    y_i=   W_a s_\gamma^a \cdot W_a  H_i
\end{equation}
Moreover, the attackers aims to optimize
\begin{equation*}
\begin{split}
       &\min_W\sum_{i=1}^{n_a}\|WH_i-y_i\|_2  \\
       = &\min_W\sum_{i=1}^{n_a}\|WH_i - W_a s_\gamma^a \cdot W_a  H_i\|_2 \\
       =&\min_W\sum_{i=1}^{n_a}\|(W-T_i)H_i\|_2,
\end{split}   
\end{equation*}
where $T_i =diag(W_as_{\gamma,i}^a) W_a$. Therefore, based on Assumption \ref{assum1-app}, $\Tilde{\ell}_t\geq \min_W\sum_{i=1}^{n_a}\|(W-T_i)H_i\|_2$. Moreover, if $H_i = H_j = H$ for any $i_1,i_2 \in [n_a]$, then
\begin{equation*}
    \begin{split}
        \Tilde{\ell}_t& \geq \min_W\sum_{i=1}^{n_a}\|(W-T_i)H_i\|_2 \\
        &= \min_W \frac{1}{2(n_a-1)}\sum_{1\leq i_1,i_2\leq n_a}(\|(W-T_{i_1})H\|_2 \\
        &  \qquad  + \|(W-T_{i_2})H\|_2) \\
        &\geq \min_W \frac{1}{2(n_a-1)}\sum_{1\leq i_1,i_2\leq n_a}(\|(T_{i_1}-T_{i_2})H\|_2) \\
       & =  \frac{1}{(n_a-1)}\sum_{1\leq i_1<i_2\leq n_a}(\|(T_{i_1}-T_{i_2})H\|_2)
    \end{split}
\end{equation*}
\end{proof}

\begin{prop}\label{prop1-app}
Since passports are randomly generated and $W_a$ and $H$ are fixed, if the $W_a = I, H=\Vec{1}$, then it follows that:
\begin{equation}
    \Tilde{\ell}_t \geq \frac{1}{(n_a-1)}\sum_{1\leq i_1<i_2\leq n_a}\|s_{\gamma,i_1}^a-s_{\gamma,i_2}^a\|_2)
\end{equation}
\end{prop}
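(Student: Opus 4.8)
The plan is to specialize Theorem \ref{thm2-app} to the stated case $W_a = I$ and $H = \vec{1}$ and simplify the operator $T_i$. First I would recall that by definition $T_i = \mathrm{diag}(W_a s_{\gamma,i}^a)\,W_a$, so setting $W_a = I$ immediately gives $T_i = \mathrm{diag}(s_{\gamma,i}^a)$, a diagonal matrix whose diagonal entries are the components of the passport vector $s_{\gamma,i}^a$. This is the only algebraic manipulation needed, and it is elementary.

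Next I would substitute $T_{i_1} - T_{i_2} = \mathrm{diag}(s_{\gamma,i_1}^a - s_{\gamma,i_2}^a)$ and $H = \vec{1}$ into the right-hand side of \eqref{eq:protecty-app}. The key observation is that for any vector $v$, applying the diagonal matrix $\mathrm{diag}(v)$ to the all-ones vector $\vec{1}$ simply returns $v$ itself, i.e. $\mathrm{diag}(v)\vec{1} = v$. Hence $(T_{i_1} - T_{i_2})H = s_{\gamma,i_1}^a - s_{\gamma,i_2}^a$, and therefore $\|(T_{i_1}-T_{i_2})H\|_2 = \|s_{\gamma,i_1}^a - s_{\gamma,i_2}^a\|_2$. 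Plugging this into the bound from Theorem \ref{thm2-app} yields the claimed inequality
\[
\Tilde{\ell}_t \geq \frac{1}{n_a-1}\sum_{1\leq i_1<i_2\leq n_a}\|s_{\gamma,i_1}^a - s_{\gamma,i_2}^a\|_2.
\]

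There is essentially no obstacle here — the proposition is a direct corollary of Theorem \ref{thm2-app} obtained by plugging in the two simplifying assumptions. The only thing worth double-checking is that the hypotheses of Theorem \ref{thm2-app} (namely that $H_{i_1} = H_{i_2} = H$ for all pairs, and Assumption \ref{assum1-app}) are still in force; since the proposition fixes $H = \vec{1}$ for every index, the equal-embedding condition holds trivially, so the bound \eqref{eq:protecty-app} applies verbatim. The interpretive remark to make afterward is that because the passports $s_{\gamma,i}^a$ are sampled independently and the $\mu_j$ are pairwise distinct (Eq. \eqref{eq:sample-pst-app}), the right-hand side is strictly positive almost surely, and it grows monotonically with the passport variance $\sigma^2$ — matching the ablation results and giving the "non-zero recovery error" guarantee advertised in the introduction.
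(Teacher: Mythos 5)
Your proof is correct and follows exactly the paper's own argument: specialize $T_i=\mathrm{diag}(W_a s_{\gamma,i}^a)W_a$ to $W_a=I$, note $(T_{i_1}-T_{i_2})\vec{1}=s_{\gamma,i_1}^a-s_{\gamma,i_2}^a$, and invoke the bound from Theorem~\ref{thm2-app}. Your additional check that the equal-embedding hypothesis holds trivially is a sensible (if implicit in the paper) touch; nothing further is needed.
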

\begin{proof}
  When $W_a =I$ and $H = \Vec{1}$, $(T_{i_1}-T_{i_2})H = s_{\gamma,i_1}^a-s_{\gamma,i_2}$ . Therefore, based on Theorem \ref{thm2-app}, we obtain
  \begin{equation}
          \Tilde{\ell}_t \geq \frac{1}{(n_a-1)}\sum_{1\leq i_1<i_2\leq n_a}\|s_{\gamma,i_1}^a-s_{\gamma,i_2}^a\|_2)
  \end{equation}
\end{proof}

Theorem \ref{thm2-app} and Proposition \ref{prop1-app} show that the label recovery error $\Tilde{\ell}_t$ has a lower bound,
which deserves further explanations. 
\begin{itemize}
    \item First, when passports are randomly generated for all data, i.e., $s_{\gamma,i_1}^a \neq s_{\gamma,i_2}^a$, then a non-zero label recovery error is guaranteed no matter how adversaries attempt to minimize it. The recovery error thus acts as a protective random noise imposed on true labels.
    \item Second, the magnitude of the recovery error monotonically increases with the variance $\sigma^2$ of the Gaussian distribution passports sample from (It is because the difference of two samples from the same Gaussian distribution $\calN(\mu, \sigma^2)$  depends on the variance $\sigma^2$), which is a crucial parameter to control privacy-preserving capability. Experiments on Sect. 6.4 of the main text also verify this phenomenon. 
    \item Third, it is worth noting that the lower bound is based on the training error of the auxiliary data used by adversaries to launch PMC attacks. Given possible discrepancies between the auxiliary data and private labels, e.g., in terms of distributions and the number of dataset samples, the actual recovery error of private labels can be much larger than the lower bound. 
\end{itemize}
 

\end{appendices}
\newpage
\bibliographystyle{IEEEtran}
\bibliography{Reference/FL,Reference/attack,Reference/cryptography,Reference/model,Reference/protection}



\end{document}